\documentclass[11pt]{article}

\usepackage[T1]{fontenc}
\usepackage[utf8]{inputenc}
\usepackage{amsmath,amssymb,amsthm,mathtools,mathrsfs}
\usepackage[a4paper,margin=1in]{geometry}
\usepackage{enumitem}
\usepackage{microtype}
\usepackage{booktabs}
\usepackage{array}
\usepackage{graphicx}
\usepackage{tikz}
\usetikzlibrary{arrows.meta,positioning}
\usepackage[colorlinks=true,linkcolor=blue,citecolor=blue,urlcolor=blue]{hyperref}

\newtheorem{theorem}{Theorem}[section]
\newtheorem{proposition}[theorem]{Proposition}
\newtheorem{lemma}[theorem]{Lemma}
\newtheorem{corollary}[theorem]{Corollary}
\theoremstyle{definition}
\newtheorem{definition}[theorem]{Definition}

\newtheorem{remark}[theorem]{Remark}

\newcommand{\R}{\mathbb R}
\newcommand{\Z}{\mathbb Z}
\newcommand{\C}{\mathbb C}
\newcommand{\T}{\mathbb T}

\newcommand{\vol}{\operatorname{vol}}
\newcommand{\Ran}{\operatorname{Ran}}
\newcommand{\Span}{\operatorname{span}}

\newcommand{\cH}{\mathcal H}
\newcommand{\cC}{\mathcal C}
\newcommand{\cG}{\mathcal G}
\newcommand{\cS}{\mathcal S}
\newcommand{\cZ}{\mathcal Z}

\newcommand{\whatrho}{\widehat\rho}
\newcommand{\ip}[2]{\langle #1,#2\rangle}
\newcommand{\norm}[1]{\lVert #1\rVert}
\newcommand{\abs}[1]{\lvert #1\rvert}

\title{\textbf{A Time--Frequency Framework for GKP Codes}}
\author{
  Franz Luef\thanks{\texttt{franz.luef@ntnu.no}}
  \and
  Eduard Ortega\thanks{\texttt{eduard.ortega@ntnu.no}}
}

\date{}
 
\usepackage{hyperref}

\hypersetup{
  pdftitle={A Time--Frequency Framework for GKP Codes},
  pdfauthor={Franz Luef and Eduard Ortega}
}

\begin{document}
\maketitle

\begin{abstract}
We develop a time--frequency framework for lattice GKP codes in which
ideal codewords are realized in the modulation space
$M^\infty$ and identified, through a vector-valued Zak transform,
with a finite logical fibre over the continuous syndrome torus.
Multi-window Gabor analysis then represents the logical vector by a
finite block of adjoint-lattice coefficients. We prove that the
normalized block map is an isometry, obtain an explicit recovering
projection, and derive stable logical reconstruction. We further
construct normalizable GKP approximants as lattice-envelope Gabor
multipliers and establish weak-$*$ convergence and asymptotically
isometric encoding. Finally, we recover displacement syndromes from
phase relations between translated coefficient blocks and quantify
their stability under additive perturbations.
\end{abstract}

\section{Introduction}
\label{sec:introduction}

Bosonic quantum error correction uses the infinite-dimensional Hilbert
space of one or more oscillators to protect finite-dimensional quantum
information. Among its most prominent constructions is the
Gottesman--Kitaev--Preskill (GKP) code, which encodes a logical qubit,
or more generally a finite-dimensional logical system, into highly
structured phase-space states \cite{GKP}. Lattice formulations of GKP
codes make explicit the stabilizer group, the logical displacement
group, the symplectic geometry of the code, and the geometry of
displacement decoding; see, for instance,
\cite{ConradEisertArzani}. Ideal GKP codewords are not normalizable
oscillator states, but distributional grid states. More precisely, the natural distribution space for ideal GKP
codewords is the modulation space \(M^\infty(\mathbb R^n)\), the
continuous dual of the Feichtinger algebra
\(M^1(\mathbb R^n)\). This choice is essential. Although
\(M^\infty(\mathbb R^n)\) is contained in the space of tempered
distributions, working in the whole of \(\mathcal S'(\mathbb R^n)\)
would allow additional lattice-supported distributions, including
derivatives of Dirac masses and coefficients with polynomial growth.
The \(M^\infty\) condition restricts the Zak transform of a stabilized
state to a Dirac comb with uniformly bounded coefficients. The Zak
transition relations then determine the entire comb from a single
vector in the finite-dimensional logical fibre. This gives a genuine
bijection between ideal GKP codewords and their logical vectors,
while retaining the natural duality with the localized probes in
\(M^1(\mathbb R^n)\).

Zak and modular-variable coordinates are particularly well adapted to
this lattice structure. They separate the continuous displacement
syndrome from a finite-dimensional logical fibre; see
\cite{PantaleoniBaragiolaMenicucci}. A complementary geometric
description in terms of polarized abelian varieties, theta functions,
and Heisenberg modules has recently been developed in
\cite{MayrandRoyer}. For a related approach using quantum tori and Heisenberg
modules, see \cite{JosephSingh2025}. The purpose of the present work is not to replace
these descriptions, but to connect them with a concrete
time--frequency coefficient model. Our guiding question is:
\begin{quote}
\emph{How is the finite-dimensional logical information of an ideal
GKP codeword represented by localized phase-space coefficients, and
how stably can it be reconstructed from finitely many such
coefficients?}
\end{quote}

This question leads directly to Gabor analysis. Weyl displacement
operators are precisely the time--frequency shifts of harmonic
analysis. Moreover, if
\(\Lambda\subset\mathbb R^{2n}\) is a symplectically integral
stabilizer lattice, then its symplectic adjoint
\[
\Lambda^\circ
=
\bigl\{
z\in\mathbb R^{2n}:
\sigma(\lambda,z)\in\mathbb Z
\text{ for every }\lambda\in\Lambda
\bigr\}
\]
is simultaneously the lattice of displacements commuting with the
GKP stabilizers and the adjoint lattice appearing in Gabor duality.
Thus
\[
\Lambda\subset\Lambda^\circ
\]
has two parallel interpretations: \(\Lambda\) is the stabilizer and
Gabor lattice, while \(\Lambda^\circ\) is the logical centralizer and
the adjoint time--frequency lattice.

Fix a stabilizer phase \(\chi:\Lambda\to\mathbb T\), and write
\[
S_\lambda=\chi(\lambda)\rho(\lambda),
\qquad \lambda\in\Lambda,
\]
for the corresponding commuting stabilizer operators. The associated ideal
GKP code space is
\[
\mathcal C_{\Lambda,\chi}
:=
\left\{
\psi\in M^\infty(\mathbb R^n):
S_\lambda\psi=\psi
\text{ for every }\lambda\in\Lambda
\right\}.
\]
For the rectangular normal form \(\Lambda_D\), equipped with its canonical
phase \(\chi_D\), we write simply
\(\mathcal C_{\Lambda_D}\).
As proved in Section~4, this code space is naturally isomorphic to the
finite-dimensional logical Hilbert space \(\mathcal H_{\mathbf d}\).
For \(c\in\mathcal H_{\mathbf d}\), we denote by
\(\psi_c\in\mathcal C_{\Lambda_D}\) the corresponding ideal codeword.

The quotients 
\[
K_\Lambda:=\Lambda^\circ/\Lambda
\qquad \text{and}\qquad\mathbb R^{2n}/\Lambda^\circ
 \]
describe the logical and syndrome
variables. Each class
\(q\in K_\Lambda\) determines a logical displacement and, after fixing
the stabilizer phases, a finite Weyl operator
\[
\mathsf W_q:\mathcal H_{\mathbf d}
\longrightarrow\mathcal H_{\mathbf d}.
\]
A probe selects a vector \(v_\alpha\in\mathcal H_{\mathbf d}\), and
the corresponding finite Weyl orbit
\[
\bigl\{
\mathsf W_qv_\alpha:q\in K_\Lambda
\bigr\}
\]
provides a finite family of analyzing vectors for the logical state.
The central point of the paper is that the resulting logical
coefficients
\[
\left\langle c,\mathsf W_qv_\alpha\right\rangle
\]
appear exactly as a finite block of adjoint-lattice Gabor coefficients
sampled from the encoded codeword.

To make this correspondence explicit, we first work in a rectangular
symplectic normal form
\[
\Lambda_D
=
D^{1/2}\mathbb Z^n\times D^{1/2}\mathbb Z^n,
\]
where
\[
D=\operatorname{diag}(d_1,\ldots,d_n),
\qquad
d_1\mid\cdots\mid d_n.
\]
We write
\[
\mathcal H_{\mathbf d}
:=
\bigotimes_{j=1}^n\mathbb C^{d_j},
\qquad
d:=\dim\mathcal H_{\mathbf d}
=\prod_{j=1}^n d_j
=\operatorname{vol}(\Lambda_D).
\]
The vector-valued Zak transform identifies the ideal codeword
associated with \(c\in\mathcal H_{\mathbf d}\) with the
distributional section
\[
\mathcal Z_D\psi_c=c\,\delta_0.
\]
Thus, the logical vector is concentrated at the origin of the syndrome
torus, while displacement errors translate this support.

The first part of the paper establishes this dictionary. It describes
the logical Pauli action through the finite Weyl representation, shows
how suitable lattice-preserving metaplectic transformations induce
logical Clifford operations, and explains how displacement errors act
on the syndrome variable. We also construct ideal codewords from
multi-window Gabor frames and extend the rectangular description to
arbitrary symplectically integral lattices by phase-corrected
metaplectic transport. These ingredients are standard or close to
standard in the GKP and Gabor literature; our role is to place them
within a common set of conventions suited to the coefficient model
developed below.

\subsection*{Finite coefficient blocks and logical reconstruction}

The new point begins with the adjoint-lattice coefficient
representation. Given \(c\in\mathcal H_{\mathbf d}\), let \(\psi_c\)
be the corresponding ideal GKP codeword. A multi-window Gabor frame
associates with \(\psi_c\) a family of time--frequency coefficients
indexed by \(\Lambda_D^\circ\). Stabilizer covariance implies that
coefficients whose indices differ by an element of \(\Lambda_D\) are
related by explicitly known phases. Consequently, all independent
logical information is contained in one finite block indexed by
\[
K_D=\Lambda_D^\circ/\Lambda_D.
\]

Our main result, Theorem~\ref{thm:finite-block-recovery}, shows that
this finite block determines the logical vector completely. Choose
representatives
\[
\nu_q\in\Lambda_D^\circ,
\qquad q\in K_D,
\]
and let \(h_\alpha\) be the probe determined by the multi-window
Gabor system. The entries of the fundamental block are samples of the
Gabor coefficient representation of the ideal codeword:
\[\frac{1}{d}
\left\langle
\psi_c,\rho(\nu_q)h_\alpha
\right\rangle,
\qquad q\in K_D.
\]
Thus the block is obtained on the signal side by sampling the
time--frequency coefficient transform of \(\psi_c\) at one
representative of every class in
\(\Lambda_D^\circ/\Lambda_D\).

Let
\[
v_\alpha:=\mathcal Z_Dh_\alpha(0)
\in\mathcal H_{\mathbf d}
\]
be the logical-fibre vector selected by the probe, and assume that
\(v_\alpha\neq0\). Define the normalized fundamental block by
\[
B_0(c)
:=
\frac{\sqrt d}{\|v_\alpha\|}
\left(
\left\langle
\psi_c,\rho(\nu_q)h_\alpha
\right\rangle
\right)_{q\in K_D}.
\]
Theorem~\ref{thm:finite-block-recovery} identifies these sampled
signal-side coefficients exactly with finite Weyl analysis
coefficients of \(c\):
\begin{equation}
\label{eq:intro-main-formula}
\underbrace{
B_0(c)_q
}_{\substack{\text{normalized Gabor sample}\\
             \text{of the encoded codeword}}}
=
\underbrace{
\frac{1}{\sqrt d\,\|v_\alpha\|}
\left\langle c,\mathsf W_qv_\alpha\right\rangle
}_{\substack{\text{finite Weyl coefficient}\\
             \text{of the logical vector}}},
\qquad q\in K_D.
\end{equation}
In other words, sampling the Gabor coefficient representation of the
encoded distribution and analyzing the logical vector along the
finite Weyl orbit of \(v_\alpha\) produce the same finite block.

It follows that
\[
B_0:\mathcal H_{\mathbf d}\longrightarrow\ell^2(K_D)
\]
is an isometry. Therefore, the sampled fundamental block determines
the logical vector uniquely and stably through
\begin{equation}
\label{eq:intro-reconstruction}
c
=
\frac{1}{\sqrt d\,\|v_\alpha\|}
\sum_{q\in K_D}
B_0(c)_q\,\mathsf W_qv_\alpha.
\end{equation}
The stabilizer covariance relations then determine every remaining
adjoint-lattice coefficient from this single block, up to explicitly
known phases. Schematically,
\[
c
\longrightarrow
\psi_c
\longrightarrow
\text{adjoint-lattice Gabor coefficients}
\longrightarrow
B_0(c)
\longrightarrow
c.
\]

The isometric embedding also identifies the logical Hilbert space
with the closed subspace
\[
\mathcal B_0:=\operatorname{Ran}(B_0)
\subset\ell^2(K_D).
\]
The operator
\[
\Pi_0:=B_0B_0^*
\]
is the orthogonal projection onto the space of admissible exact
blocks. Hence, for arbitrary coefficient data
\(x\in\ell^2(K_D)\), the vector $\Pi_0x$ 
is the closest exact logical block, and
\[
B_0^*x=B_0^*\Pi_0x
\]
is the corresponding reconstructed logical vector. In particular, $\|x-\Pi_0x\|_{\ell^2(K_D)}$
quantifies the inconsistency of the data with the exact finite-block
model.

This finite-block identification and its recovering projection form
the central contribution of the paper. In addition, the probe can be
chosen to realize any prescribed nonzero vector in the logical fibre.
For suitable Gaussian probes, the data can be reduced further:
certain subfamilies containing only \(d=\operatorname{vol}(\Lambda_D)\)
coefficients already determine the logical vector. This is the
minimum possible number of scalar complex coefficients required to
reconstruct an arbitrary vector in the \(d\)-dimensional logical
space.

\subsection*{Normalizable regularizations}

The coefficient formulation also separates the finite logical
information from the infinite stabilizer orbit. The ideal codeword
admits a weak-* expansion
\[
\psi_c
=
\sum_{\lambda\in\Lambda}
S_\lambda\Gamma_c,
\]
where \(\Gamma_c\) contains the logical coefficients and the
stabilizer orbit produces the ideal lattice-periodic state. Replacing
the constant lattice weight by an envelope
\(w=(w_\lambda)_{\lambda\in\Lambda}\) gives
\[
\Psi_c^w
=
\sum_{\lambda\in\Lambda}
w_\lambda S_\lambda\Gamma_c.
\]
This regularization is naturally a Gabor multiplier: the vector
\(\Gamma_c\) carries the logical information, whereas the multiplier
symbol \(w\) controls the stabilizer envelope.

For square-summable envelopes, the resulting states are
normalizable. For families of envelopes approaching the constant
symbol, we prove weak-* convergence to the ideal codeword. Moreover,
after an explicit normalization, asymptotically
translation-invariant envelopes yield asymptotically isometric
logical embeddings. This provides a general criterion for
lattice-envelope Gabor multipliers and complements the canonical
number-operator regularization studied in
\cite[Theorem~5.2]{MayrandRoyer}. We also prove convergence of the
regularized finite blocks to their ideal counterparts, and hence
stable logical reconstruction from sufficiently accurate
normalizable approximations.

\subsection*{Syndrome reconstruction}

Finally, displacement syndromes appear as characters of the
stabilizer lattice. In coefficient space, this character becomes a
phase relation between translated finite blocks. It can therefore be
reconstructed by correlating such blocks, and the resulting quotient
satisfies an explicit deterministic stability estimate under
additive coefficient perturbations. Combining several stabilizer
directions gives an overdetermined, and optionally weighted,
least-squares estimator of the displacement class in the syndrome
torus. The block-covariance relations and their correlation estimator
also persist asymptotically for the normalizable regularizations.

Thus the same coefficient representation has two complementary
roles: one fundamental block encodes the finite logical vector, while
the relations between translated blocks encode the continuous
displacement syndrome. In this sense,
\emph{logical information, syndrome information, and
time--frequency coefficients are different aspects of the same
lattice structure}.

\paragraph{Scope of the coefficient model.}
The quantities considered below are complex time--frequency
coefficient functionals
\[
a_{\mu,\alpha}(F)
=
\frac1d
\left\langle F,\rho(\mu)h_\alpha\right\rangle.
\]
Our covariance, reconstruction, and perturbation results are
mathematical statements conditional on access to these complex
amplitudes. They do not, by themselves, specify a physical quantum
measurement or recovery channel. A direct rank-one measurement
ordinarily produces a probability depending on
\[
\left|
\left\langle F,\rho(\mu)h_\alpha\right\rangle
\right|^2,
\]
whereas the block-correlation formulas also require relative complex
phases. Accessing these phases through coherent-reference,
ancilla-assisted, or tomographic procedures, and developing the
corresponding statistical noise model, lie outside the scope of this
paper. Accordingly, ``reconstruction'' means reconstruction from
coefficient data, and ``coefficient noise'' means deterministic
additive perturbation of those data.

\paragraph{Organization of the paper.}
Section~2 introduces symplectically integral lattices and stabilizer
phases. Section~3 recalls the required
Gabor-frame, modulation-space, and Heisenberg-module machinery.
Section~4 introduces  GKP codes associated to symplectically integral lattices and their metaplectic transformations. It develops the vector-valued Zak description of the code,
including logical Pauli and Clifford operations, displacement
syndromes, the construction of ideal codewords from Gabor frames, and
the passage to general symplectically integral lattices. Section~5
introduces the adjoint-lattice coefficient model and proves the
finite-block reconstruction theorem, together with the recovering
projection, probe design, minimal sampling, the extension to general
lattices, and an explicit single-mode example. Section~6 develops
normalizable lattice-envelope regularizations as Gabor multipliers,
proves asymptotically isometric normalizable encoding, and establishes
logical reconstruction from regularized coefficients. Section~7
identifies the syndrome character through translated blocks, develops
block-correlation and redundant syndrome estimators, proves stability
under coefficient perturbations, and relates these results to the
regularized states of Section~6.

\section{GKP lattices, logical fibres, and syndrome space}
\label{sec:gkp-background}

This section fixes conventions and recalls the pieces of lattice GKP theory
needed later. The results are standard, or immediate adaptations to our Weyl
normalization; see
\cite{GKP,ConradEisertArzani,PantaleoniBaragiolaMenicucci,MayrandRoyer}.

Throughout the paper, the Hilbert-space inner product is linear in
the first variable:
\[
\langle af,g\rangle=a\langle f,g\rangle,
\qquad
\langle f,ag\rangle=\overline a\,\langle f,g\rangle.
\]

\subsection{Weyl operators and symplectically integral lattices}

Write \(\Xi=\R^n\times\R^n\) and \(z=(x,\omega)\). Let
\[
(T_xf)(t)=f(t-x),\qquad
(M_\omega f)(t)=e^{2\pi i\omega\cdot t}f(t), \qquad f\in L^2(\mathbb R^n)
\]
and define the unitary \emph{Weyl operator} on $L^2(\mathbb R^n)$
\[
\rho(x,\omega)=e^{-\pi ix\cdot\omega}M_\omega T_x.
\]
For
\[
J=\begin{pmatrix}0&I_n\\-I_n&0\end{pmatrix},
\qquad\text{and}
\qquad\sigma(z,w)=z^TJw,
\]
one has the following relations 
\begin{align}
\rho(z)\rho(w)&=e^{-\pi i\sigma(z,w)}\rho(z+w), \label{eq:weyl-commutation_0}
\\
\rho(z)\rho(w)&=e^{-2\pi i\sigma(z,w)}\rho(w)\rho(z),
\label{eq:weyl-commutation}\\
\rho(z)^*&=\rho(-z).\nonumber
\end{align}

A full lattice of $\mathbb R^{2n}$ is a discrete subgroup  $\Lambda=A\Z^{2n}$ with
$A\in GL(2n,\R)$, then define its covolume $\vol(\Lambda)=|\det A|$ and symplectic Gram matrix $\Theta_A=A^TJA$.
If the lattice basis is changed by a matrix
\(P\in\operatorname{GL}(2n,\mathbb Z)\), then \(A\) is replaced by
\(A'=AP\). Hence $\Lambda=A'\mathbb Z^{2n}
=A\mathbb Z^{2n}$.
The corresponding symplectic Gram matrix transforms according to
\[
\Theta_{A'}
=(A')^{T}JA'
=P^{T}A^{T}JAP
=P^{T}\Theta_A P.
\]
Thus, a change of lattice basis leaves \(\Lambda\) unchanged and replaces
\(\Theta_A\) by the integrally congruent matrix \(P^{T}\Theta_A P\).
\begin{definition}
A full lattice $\Lambda$ is \emph{symplectically integral} if $\sigma(\Lambda,\Lambda)\subset\Z$.
Equivalently, $\Theta_A$ has integer entries.
\end{definition}

By \eqref{eq:weyl-commutation}, integrality is precisely the condition that
the Weyl operators associated with lattice points commute.   

If $\Lambda$ is symplectically integral, then given $A\in GL(2n,\mathbb R)$ with $\Lambda=A\mathbb Z^{2n}$ the Smith normal form yields
\[
 M^T\Theta_AM=
 \begin{pmatrix}0&D\\-D&0\end{pmatrix},
 \qquad
 D=\operatorname{diag}(d_1,\ldots,d_n),
 \qquad d_1\mid\cdots\mid d_n,
\]
for some $M\in GL(2n,\Z)$. Put
\[
 A_D=\begin{pmatrix}\sqrt D&0\\0&\sqrt D\end{pmatrix},
 \qquad
 \Lambda_D=A_D\Z^{2n},\qquad \text{(square lattice)}
\]
then $\Theta_{A_D}=
 \begin{pmatrix}0&D\\-D&0\end{pmatrix}$.
The following lemma is well known.
\begin{lemma}\label{lem:same-gram}
For $A,B\in GL(2n,\R)$, the equality $\Theta_A=\Theta_B$ holds if and only if
$A=SB$ for some $S\in\operatorname{Sp}(2n,\R)$.
\end{lemma}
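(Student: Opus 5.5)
The plan is to verify the two implications directly from the definition $\Theta_A=A^TJA$ and the definition of the symplectic group, $\operatorname{Sp}(2n,\R)=\{S\in GL(2n,\R): S^TJS=J\}$. No earlier result of the paper is needed beyond these definitions.

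\emph{Sufficiency.} Suppose $A=SB$ with $S\in\operatorname{Sp}(2n,\R)$. Then
\[
\Theta_A=(SB)^TJ(SB)=B^T(S^TJS)B=B^TJB=\Theta_B .
\]

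\emph{Necessity.} Suppose $\Theta_A=\Theta_B$. Since $B$ is invertible, the only possible candidate is $S:=AB^{-1}$, and it clearly satisfies $A=SB$. It remains to check that $S$ is symplectic. Starting from $A^TJA=B^TJB$, multiply on the left by $(B^{-1})^T=(B^T)^{-1}$ and on the right by $B^{-1}$. This gives
\[
(AB^{-1})^TJ(AB^{-1})=J,
\]
that is, $S^TJS=J$. Invertibility of $S$ is automatic, since $S$ is a product of invertible matrices.

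\emph{Remarks.} There is no real obstacle here. The only point that needs care is the order of multiplication: the symplectic factor must sit on the left, $A=SB$. This is because $\Theta$ is invariant under left multiplication by $\operatorname{Sp}(2n,\R)$, whereas right multiplication by $P\in GL(2n,\Z)$ changes $\Theta$ by the congruence $P^T\Theta P$ noted before the lemma.

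A coordinate-free reading makes the statement transparent. The columns of $A$ and $B$ are two bases of $\R^{2n}$ with identical symplectic pairings $\sigma(a_i,a_j)=\sigma(b_i,b_j)$. The linear map sending $b_i\mapsto a_i$ therefore preserves $\sigma$ on a basis, hence everywhere, so it is symplectic.

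In the paper, this lemma combines with the Smith normal form above, $M^T\Theta_AM=\Theta_{A_D}$, that is, $\Theta_{AM}=\Theta_{A_D}$. It then yields $AM=SA_D$ for some symplectic $S$, so $\Lambda=S\Lambda_D$.
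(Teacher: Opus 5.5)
Your proof is correct and complete. The paper gives no proof of this lemma and simply calls it well known; your direct check, taking $S=AB^{-1}$ and verifying $S^TJS=J$, is the standard argument it relies on. Your closing remark, applying the lemma to $\Theta_{AM}=\Theta_{A_D}$, matches how the paper proves Proposition~\ref{prop:normal-form}.
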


\begin{proposition}[Symplectic normal form]\label{prop:normal-form}
Every full symplectically integral lattice is of the form $\Lambda=S\Lambda_D$ for a symplectic matrix $S$ and a uniquely determined divisibility tuple
$(d_1,\ldots,d_n)$.
\end{proposition}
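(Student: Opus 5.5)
Existence follows from the Smith normal form already recorded above together with Lemma~\ref{lem:same-gram}. Write $\Lambda=A\Z^{2n}$. Choose $M\in GL(2n,\Z)$ with $M^T\Theta_AM=\begin{pmatrix}0&D\\-D&0\end{pmatrix}$ and put $A'=AM$. Because $M$ is unimodular, $A'\Z^{2n}=\Lambda$, and by the transformation rule $\Theta_{A'}=M^T\Theta_AM=\Theta_{A_D}$. Lemma~\ref{lem:same-gram} then supplies $S\in\operatorname{Sp}(2n,\R)$ with $A'=SA_D$, and hence
\[
\Lambda=A'\Z^{2n}=SA_D\Z^{2n}=S\Lambda_D .
\]
There is one point to check here. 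The symplectic normal form (Frobenius/Smith for alternating integer matrices) gives $d_j>0$ only because $\Theta_A$ is nondegenerate: $\det\Theta_A=\det(A)^2\neq0$. The divisibility chain $d_1\mid\cdots\mid d_n$ is part of that normal form theorem, which I take as given, as the paper does.

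For uniqueness I intend to show that $(d_1,\ldots,d_n)$ is an invariant of the lattice itself, not of $S$ or of the basis. Suppose $S\Lambda_D=S'\Lambda_{D'}$ with $S,S'$ symplectic. Then $\Lambda_D=S^{-1}S'\Lambda_{D'}$, so $B=S^{-1}S'A_{D'}$ is a basis matrix of $\Lambda_D$. Hence $B=A_DP$ for some $P\in GL(2n,\Z)$. Symplecticity of $S^{-1}S'$ gives $\Theta_B=\Theta_{A_{D'}}$, and the transformation rule gives $\Theta_B=P^T\Theta_{A_D}P$. Thus the two block matrices $\begin{pmatrix}0&D\\-D&0\end{pmatrix}$ and $\begin{pmatrix}0&D'\\-D'&0\end{pmatrix}$ are integrally congruent, and therefore integrally equivalent ($UXV$ with $U=P^T$, $V=P$ unimodular).

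The main step is to extract $D=D'$ from this equivalence. I plan to use the Smith invariant factors, which are invariant under two-sided unimodular equivalence. The block matrix $\begin{pmatrix}0&D\\-D&0\end{pmatrix}$ has elementary divisors $d_1,d_1,d_2,d_2,\ldots,d_n,d_n$, each appearing twice. After permuting rows and columns and changing signs it is $\operatorname{diag}(d_1,d_1,\ldots,d_n,d_n)$, which already satisfies the divisibility chain, so it is its own Smith form. Uniqueness of the Smith form then forces equality of the two multisets $\{d_j,d_j\}$ and $\{d'_j,d'_j\}$. Ordering by divisibility gives $d_j=d'_j$ for all $j$.

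An intrinsic alternative is to note that the group $\Lambda^\circ/\Lambda\cong\bigoplus_j(\Z/d_j\Z)^2$ is determined by $\Lambda$ alone. Its invariant factor decomposition then recovers the $d_j$. I expect only this bookkeeping to need care; the rest is formal.
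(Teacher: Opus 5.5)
Your existence argument is the paper's own: pass to the basis $AM$ from the skew Smith normal form, note $\Theta_{AM}=\Theta_{A_D}$, and apply Lemma~\ref{lem:same-gram}. The difference is uniqueness. The paper's proof does not argue it and relies implicitly on the uniqueness clause of the alternating (Frobenius) normal form. You instead show directly that the tuple is an invariant of $\Lambda$.

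From $S\Lambda_D=S'\Lambda_{D'}$ you obtain $P^T\bigl(\begin{smallmatrix}0&D\\-D&0\end{smallmatrix}\bigr)P=\bigl(\begin{smallmatrix}0&D'\\-D'&0\end{smallmatrix}\bigr)$ with $P\in GL(2n,\Z)$, and then compare ordinary Smith invariant factors. This step is sound. Right multiplication by the unimodular matrix $\bigl(\begin{smallmatrix}0&-I\\I&0\end{smallmatrix}\bigr)$ turns the block matrix into $\operatorname{diag}(D,D)$. A permutation then gives $\operatorname{diag}(d_1,d_1,\ldots,d_n,d_n)$, which is already a divisibility chain. Uniqueness of the two-sided Smith form therefore yields $d_j=d_j'$.

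Your route makes explicit the real content of ``uniquely determined'': independence from both the lattice basis and the choice of $S$. It uses only the standard Smith form, not a separate uniqueness theorem for congruence normal forms. The paper's version is shorter but leaves this point to the literature.

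Your intrinsic alternative also works, since $\Lambda^\circ=S\Lambda_D^\circ$ gives $\Lambda^\circ/\Lambda\cong\bigoplus_j(\Z/d_j\Z)^2$. The only bookkeeping is that factors with $d_j=1$ do not appear in the group and must be recovered from the count $n$.
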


\begin{proof}
Replace $A$ by $AM$ in the skew Smith normal form and apply
Lemma~\ref{lem:same-gram} to $AM$ and $A_D$.
\end{proof}

In particular,
\[
 d:=\vol(\Lambda_D)=\det D=\prod_{j=1}^nd_j\in\mathbb N.
\]

The adjoint lattice is
\begin{equation*}
\Lambda^\circ
=\{z\in\Xi:\sigma(\lambda,z)\in\Z
\text{ for every }\lambda\in\Lambda\}.
\end{equation*}
For \(\Lambda=A\Z^{2n}\),
\[
\Lambda^\circ=-JA^{-T}\Z^{2n}.
\]
Thus, if  \(\Lambda\) is symplectically integral, then \(\Lambda\subseteq\Lambda^\circ\). For the
square lattice,
\begin{equation*}
\Lambda_D^\circ=
\begin{pmatrix}D^{-1/2}&0\\0&D^{-1/2}\end{pmatrix}\Z^{2n},
\qquad
\abs{\Lambda_D^\circ/\Lambda_D}=d^2.
\end{equation*}

\subsection{Stabilizer phases}

\begin{definition}\label{def:phase}
A \emph{stabilizer phase} for an integral lattice $\Lambda$ is a map
$\chi:\Lambda\to\T$ satisfying
\begin{equation}\label{eq:chi-cocycle}
 \chi(\lambda+\mu)
 =\chi(\lambda)\chi(\mu)e^{-\pi i\sigma(\lambda,\mu)}.
\end{equation}
The corresponding stabilizer operators are
\[
 S^\chi_\lambda=\chi(\lambda)\rho(\lambda).
\]
\end{definition}
Equation \eqref{eq:chi-cocycle} implies that $S^\chi_\lambda S^\chi_\mu=S^\chi_{\lambda+\mu}$ for $\lambda,\mu\in \Lambda$. Thus, $\lambda\mapsto S^\chi_\lambda$ is a unitary representation of
the abelian group $\Lambda$. If $\chi$ is understood, then we will denote the operators $S^\chi$ just by $S$.

\begin{proposition}\label{prop:phase-exists}
Every symplectically integral lattice admits a stabilizer phase.
\end{proposition}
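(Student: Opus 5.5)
Fix a basis $\Lambda=A\Z^{2n}$ with columns $a_1,\dots,a_{2n}$, so the symplectic Gram matrix $\Theta=\Theta_A=(\sigma(a_i,a_j))_{i,j}$ has integer entries and is skew-symmetric. The approach is to produce a $\T$-valued quadratic refinement of the bilinear form $\sigma$ on $\Lambda$. Concretely, for $\lambda=Am$ with $m\in\Z^{2n}$ we will define
\[
\chi(Am):=\exp\Bigl(-\pi i\sum_{i<j}m_im_j\,\Theta_{ij}\Bigr),
\]
i.e. $\chi(Am)=e^{-\pi i\, m^T U m}$, where $U$ is the strictly upper triangular part of $\Theta$. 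The point is that $U-U^T=\Theta$, because $\Theta$ is skew.

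The key step is checking the cocycle identity \eqref{eq:chi-cocycle}. For $\lambda=Am$ and $\mu=Ak$ we expand
\[
(m+k)^TU(m+k)=m^TUm+k^TUk+m^TUk+k^TUm .
\]
It then suffices to show that $e^{-\pi i(m^TUk+k^TUm)}=e^{-\pi i\,m^T\Theta k}=e^{-\pi i\sigma(\lambda,\mu)}$. Since $k^TUm=m^TU^Tk$, the exponent difference is
\[
m^TUk+m^TU^Tk-m^T\Theta k=m^T(U+U^T-U+U^T)k=2\,m^TU^Tk .
\]
This is an even integer because $U$ has integer entries, so the two phases agree. (This is the one place where integrality of $\Theta$ enters.) Equivalently, $e^{-\pi i x}=e^{\pi i x}$ for $x\in\Z$, so the sign of the skew part is irrelevant modulo $2$.

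Finally, $\chi$ is well defined on $\Lambda$ because $A$ is invertible, so each $\lambda$ has a unique coordinate vector $m$. Its values lie in $\T$ by construction. There is no real obstacle; the only care needed is the sign bookkeeping in the previous step, namely that the correction term is $2\,m^TU^Tk\in2\Z$ rather than something merely in $\Z$. An alternative route would use Proposition~\ref{prop:normal-form} and the explicit phase $\chi_D(\sqrt D a,\sqrt D b)=e^{-\pi i\,a\cdot Db}$ on $\Lambda_D$, transported by the symplectic $S$. The direct coordinate construction above avoids any metaplectic considerations, so that is the one we will use.
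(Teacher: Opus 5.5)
Your proof is correct and is essentially the paper's argument. The paper also sets $\chi(\lambda)=(-1)^{q(k)}$ with $q(k)=\sum_{i<j}\sigma(b_i,b_j)k_ik_j$, which agrees with your $e^{-\pi i\, m^TUm}$, and checks the cocycle identity modulo $2$; your explicit computation of the discrepancy $2\,m^TU^Tk\in 2\Z$ spells out the congruence that the paper only asserts.
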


\begin{proof}
Choose an ordered basis $b_1,\ldots,b_{2n}$ of $\Lambda$ and write
$\lambda=\sum_jk_jb_j$.  Put
\[
 q(k)=\sum_{i<j}\sigma(b_i,b_j)k_ik_j\pmod 2,
 \qquad
 \chi(\lambda)=(-1)^{q(k)}.
\]
Since $\sigma(b_i,b_j)\in \mathbb Z$,
\[
 q(k+\ell)-q(k)-q(\ell)
 =\sigma\left(\sum_ik_ib_i,\sum_j\ell_jb_j\right)\pmod2.
\]
This is exactly \eqref{eq:chi-cocycle}.
\end{proof}

For the rectangular lattice $\Lambda_D
=
\sqrt D\,\mathbb Z^n
\times
\sqrt D\,\mathbb Z^n$,
we fix the canonical stabilizer phase
\[
\chi_D(\sqrt D\,k,\sqrt D\,\ell)
:=
e^{-\pi i k^T D\ell}
=
(-1)^{k^T D\ell},
\qquad k,\ell\in\mathbb Z^n.
\]
In particular,
\[
\chi_D(-\lambda)=\chi_D(\lambda)\in\{\pm1\}.
\]

\section{Gabor frames and Heisenberg modules}
\label{sec:gabor}

We now collect the time--frequency tools that will be used to pass from the Zak description of the code to stable adjoint-lattice coefficients. We recall modulation spaces and multi-window Gabor frames, together with the duality principle and Wexler--Raz biorthogonality; see \cite{GrochenigBook,JakobsenLuef}. We then describe the associated Heisenberg-module structure and the module idempotent arising from a Gabor frame, following \cite{Luef2009,JakobsenLuef}. Finally, we record the metaplectic covariance of Gabor systems, which will allow the later coefficient constructions to be transported between symplectically equivalent lattices.

\subsection{Modulation spaces}

Modulation spaces were introduced by Feichtinger in the early 1980s;
see \cite{FeichtingerModulationSpaces}. For a nonzero $g$ in the Schwartz space $\cS(\R^n)$ and a tempered distribution $f\in \cS'(\R^n)$ we define the short-time Fourier transform (STFT)
\[
 V_gf(z)=\langle f,\rho(z)g\rangle.
\]
For $1\le p\le\infty$, the modulation space $M^p(\R^n)$ consists of the
tempered distributions $f$ for which $V_gf\in L^p(\R^{2n})$.  Different
Schwartz windows give equivalent norms. For $1\leq p<\infty$ and
$p^{-1}+q^{-1}=1$, the standard duality is
\[
 (M^p(\R^n))'=M^q(\R^n),
\]
with the weak-* convention
$M^\infty(\R^n)=(M^1(\R^n))'$. The Weyl operators act isometrically on
the modulation spaces.

If $f,g\in M^1(\R^n)$, then lattice samples of their STFT are absolutely
summable.  In particular,
\begin{equation}\label{weak_conv}
 \sum_{\lambda\in\Lambda}
 |\langle f,\rho(\lambda)g\rangle|
 \le C_{\Lambda,g}\|f\|_{M^1},
\end{equation}
for some $C_{\Lambda,g}>0$.
This estimate makes
 lattice orbit sums meaningful as weak-* elements of
$M^\infty(\mathbb R^n)$.

\subsection{Multi-window frames and super Riesz families}

Let $\mathbf g=(g_1,\ldots,g_m)\in L^2(\R^n)\otimes\C^m$.  The system
\[
 \cG(\mathbf g,\Lambda)
 =\{\rho(\lambda)g_j:\lambda\in\Lambda, 1\le j\le m\}
\]
is called a \emph{multi-window Gabor frame} if there exists $A,B>0$ such that 
\[
 A\|f\|_2^2
 \le\sum_{j=1}^m\sum_{\lambda\in\Lambda}
 |\langle f,\rho(\lambda)g_j\rangle|^2
 \le B\|f\|_2^2, \qquad f\in L^2(\mathbb R^n).
\]
Then the \emph{frame operator} 
\[
 S^\Lambda_{\mathbf g}f
 :=\sum_{j=1}^m\sum_{\lambda\in\Lambda}
 \langle f,\rho(\lambda)g_j\rangle\rho(\lambda)g_j, \qquad f\in L^2(\mathbb R^n),
\]
is invertible.
The canonical dual family $\boldsymbol  \gamma=(\gamma_1,\ldots,\gamma_m)\in L^2(\R^n)\otimes\C^m$ is given by 
\[
 \gamma_j=(S^\Lambda_{\mathbf g})^{-1}g_j \qquad j=1,\ldots,m,
\]
and so we have 
\[
 f=\sum_{j=1}^m\sum_{\lambda\in\Lambda}
 \langle f,\rho(\lambda)g_j\rangle\rho(\lambda)\gamma_j, \qquad f\in L^2(\mathbb R^n).
\]
If $\mathbf g\in M^1(\mathbb R^n)\otimes \mathbb C^m$, spectral invariance implies that the
$\boldsymbol\gamma$ also belong to $M^1(\mathbb R^n)\otimes \mathbb C^m$ (see \cite{GrochenigLeinert}), and then the above reconstruction formula extends to any modulation space,
\[
 f=\sum_{j=1}^m\sum_{\lambda\in\Lambda}
 \langle f,\rho(\lambda)g_j\rangle\rho(\lambda)\gamma_j, \qquad f\in M^p(\mathbb R^n),
\]
for $p\in [1,\infty]$; see \cite{GrochenigNoIneq}.

For  $\mathbf f=(f_1,\ldots,f_m)\in M^1(\R^n)\otimes\C^m$ define
\[
 \whatrho(z)\mathbf f
 =(\rho(z)f_1,\ldots,\rho(z)f_m) \qquad z\in \mathbb R^{2n}.
\]
The multi-window duality principle says that
$\cG(\mathbf g,\Lambda)$ is a frame if and only if
\[
 \mathcal R(\boldsymbol\gamma,\Lambda^\circ)
 =\{\whatrho(\mu)\boldsymbol\gamma:\mu\in\Lambda^\circ\}
\]
is a Riesz sequence
\cite{GrochenigNoIneq,JakobsenLuef}, that is, there exist $A,B>0$
\begin{equation}
\label{eq:model-space-stability}
A\|a\|_{\ell^p}
\leq
\left\|
\sum_{\mu\in\Lambda^\circ}
a_\mu\widehat\rho(\mu)\boldsymbol\gamma
\right\|_{M^p(\mathbb R^n)\otimes\mathbb C^m}
\leq
B\|a\|_{\ell^p}, \qquad a\in \ell^p(\Lambda^\circ).
\end{equation}

For canonical dual generators, the Wexler--Raz
relations read
\begin{equation*} 
 \sum_{j=1}^m
 \langle\gamma_j,\rho(\mu)g_j\rangle
 =\vol(\Lambda)\delta_{\mu,0},
 \qquad \mu\in\Lambda^\circ,
\end{equation*}
see \cite{JakobsenLuef}. Therefore the families
\[
 \{\whatrho(\mu)\boldsymbol\gamma\}_{\mu\in\Lambda^\circ}
 \quad\text{and}\quad
 \left\{\vol(\Lambda)^{-1}\whatrho(\mu)\mathbf g
 \right\}_{\mu\in\Lambda^\circ}
\]
are biorthogonal.

The density condition for multi-window Gabor frames gives $
m\geq \operatorname{vol}(\Lambda)$, see \cite[Lemma~4.9]{JakobsenLuef}.

\subsection{Heisenberg modules and the module idempotent}

We are now going to show the construction of Heisenberg modules following \cite{Luef2009,Rieffel}.  Let $A_\Lambda=C^*(\Lambda,c)$ be the twisted group algebra associated
with the cocycle
$c(\lambda,\mu)=e^{-\pi i\sigma(\lambda,\mu)}$,
that is, the operator-norm completion of the image of the  Weyl representation
\[
\rho_\Lambda:\ell^1(\Lambda,c)\to B(L^2(\mathbb R^n)),\qquad  a\longmapsto\sum_{\lambda\in\Lambda}a(\lambda)\rho(\lambda),
 \qquad a\in\ell^1(\Lambda,c).
\]
We also define  
\[
 c^\circ(\mu,\nu)
 :=\overline{c(\mu,\nu)}
 =e^{\pi i\sigma(\mu,\nu)},
 \qquad \mu,\nu\in\Lambda^\circ.
\]
and the corresponding twisted algebra $
 A_{\Lambda^\circ}:=C^*(\Lambda^\circ,c^\circ)$, 
represented through the adjoint Weyl operators
$\rho(\mu)^*=\rho(-\mu)$.

The space $M^1(\R^n)$ is a pre-equivalence bimodule between the twisted
algebras associated with $\Lambda$ and $\Lambda^\circ$. The left and right  inner product are
\[
 {}_\Lambda\!\langle f,h\rangle
 =\sum_{\lambda\in\Lambda}
 \langle f,\rho(\lambda)h\rangle \rho(\lambda),
\qquad \langle f,h\rangle_{\Lambda^\circ}
 =\frac1{\vol(\Lambda)}
 \sum_{\mu\in\Lambda^\circ}
 \langle h,\rho(\mu)^*f\rangle \rho(\mu)^*,\]
for $f,h\in M^1(\mathbb R^n)$.
The fundamental identity of Gabor analysis (FIGA) is the associativity relation
\[
 {}_\Lambda\!\langle f,g\rangle\cdot h
 =f\cdot\langle g,h\rangle_{\Lambda^\circ},
\]
for $f,g,h\in L^2(\mathbb R^n)$.

For a multi-window frame and its canonical dual, the matrix
\[
 \mathsf P_{\boldsymbol \gamma, \mathbf g}=\bigl({}_\Lambda\!\langle\gamma_r,g_s\rangle\bigr)_{r,s=1}^m
 \in M_m(A_\Lambda)
\]
is an idempotent \cite{Luef2009}.  It is not generally self-adjoint.  If an orthogonal
projection is desired, one may replace the original frame by the Parseval
family $(S^\Lambda_{\mathbf g})^{-1/2}\mathbf g$.

Recent work of Caragea, Kolountzakis, and Pfander proves that, whenever $\operatorname{vol}(\Lambda)<m$, there exist windows
$\mathbf g=(g_1,\ldots,g_m)\in\mathcal S(\mathbb R^n)\otimes\mathbb C^m$ 
such that \(\mathcal G(\mathbf g,\Lambda)\) is a multi-window Gabor frame;
see \cite[Theorems~1.3 and~1.9]{CarKolPfa26}.
Together with spectral invariance
\cite{GrochenigLeinert} and the module-frame correspondence
\cite{Luef2009,JakobsenLuef}, these results yield smooth
projections in the corresponding matrix algebras over the smooth
noncommutative torus.

\section{GKP codes and the Zak transform}

In this section we define the ideal GKP code space associated with a
symplectically integral lattice and a choice of stabilizer phase. The Zak
transform provides natural coordinates for this space: it separates the
continuous displacement syndrome from the finite-dimensional logical
degree of freedom; see~\cite{PantaleoniBaragiolaMenicucci} and, for the underlying time--frequency
theory,~\cite[Chapter~8]{GrochenigBook}.

We begin with the rectangular normal form \(\Lambda_D\). After recalling
the scalar Zak transform, we introduce its vector-valued form and prove
that it identifies the ideal code space with the logical fibre
\(\mathcal H_{\mathbf d}\) supported at the trivial point of the syndrome
torus. In these coordinates, adjoint-lattice displacements induce the
logical Pauli operators, suitable lattice-preserving metaplectic
transformations induce Clifford operations, and displacement errors
translate the syndrome variable. We then construct the ideal codewords
from multi-window Gabor frames and extend the entire description to
arbitrary symplectically integral lattices by phase-corrected metaplectic
transport.

For the remainder of the paper, fix
\[
D=\operatorname{diag}(d_1,\ldots,d_n),
\qquad
d_1\mid\cdots\mid d_n,
\qquad
d=d_1\cdots d_n,
\]
and set
\[
\Gamma_D=D^{1/2}\mathbb Z^n,
\qquad
\Gamma_D^\circ=D^{-1/2}\mathbb Z^n.
\]
Let \(Q_D\subseteq\mathbb R^n\) and
\(Q_D^\circ\subseteq\mathbb R^n\) be fundamental domains for
\(\Gamma_D\) and \(\Gamma_D^\circ\), respectively. We use Lebesgue
measure on the spatial fundamental domains and normalized Haar measure
\(d\omega\) on \(\mathbb R^n/\Gamma_D^\circ\), so that $\int_{Q_D^\circ}d\omega=1$.

% ============================================================
\subsection{The scalar Zak transform}
\label{subsec:scalar-zak}
% ============================================================

For \(f\in\mathcal S(\mathbb R^n)\), the scalar
\emph{Zak transform} associated with \(\Gamma_D\) is defined by
\begin{equation*}
\label{eq:scalar-zak}
Z_Df(x,\omega)
:=
\sqrt d\sum_{\gamma\in\Gamma_D}
f(x-\gamma)e^{2\pi i\gamma\cdot\omega}.
\end{equation*}
The Zak transform  satisfies the \emph{quasi periodicity} relations
\begin{align}
Z_Df(x+\gamma,\omega)
&=
e^{2\pi i\gamma\cdot\omega}Z_Df(x,\omega),
&&
\gamma\in\Gamma_D,
\label{eq:zak-qpx}
\\
Z_Df(x,\omega+\gamma^\circ)
&=
Z_Df(x,\omega),
&&
\gamma^\circ\in\Gamma_D^\circ.
\label{eq:zak-qpw}
\end{align}
With our normalization of Haar measure, \(Z_D\) extends uniquely to
a unitary operator
\[
Z_D:
L^2(\mathbb R^n)
\longrightarrow
L^2(Q_D\times Q_D^\circ);
\]
see \cite[Chapter~8]{GrochenigBook}. By duality we can extend  the Zak transform to $\mathcal S'(\mathbb R^n)$; see \cite[Chapter~8]{GrochenigBook}. We now describe the Zak-transform images of
\(M^1(\mathbb R^n)\) and \(M^\infty(\mathbb R^n)\); see
\cite{ToftZak}. Fix a nonzero window
\(\Phi\in\mathcal S(\mathbb R^{2n})\). For \(p\in\{1,\infty\}\), let
\[
W^{\infty,p}_{\mathrm{qp},\Gamma_D,\Gamma^\circ_D }(\mathbb R^{2n})
\]
be the space of all \(F\in\mathcal S'(\mathbb R^{2n})\) satisfying
\begin{align}
F(x+\gamma,\omega)
&=
e^{2\pi i\gamma\cdot\omega}F(x,\omega),
&&\gamma\in\Gamma_D,
\label{eq:wie-qpx}
\\
F(x,\omega+\gamma^\circ)
&=
F(x,\omega),
&&\gamma^\circ\in\Gamma_D^\circ,
\label{eq:wie-qpw}
\end{align}
in the distributional sense, and for which
\[
\|F\|_{W^{\infty,p}_{\mathrm{qp},\Gamma_D,\Gamma^\circ_D }}
:=
\|V_\Phi F\|_{
L^p((Q_D\times Q_D^\circ)\times\mathbb R^{2n})}
<\infty.
\]
For \(p=\infty\), the norm is interpreted as an essential
supremum. Different admissible windows give equivalent norms.

 For \(p=1\), the elements of $W^{\infty,1}_{\mathrm{qp},\Gamma_D,\Gamma^\circ_D }
  (\mathbb R^{2n})$
have continuous representatives, and the transition relations
\eqref{eq:wie-qpx} and \eqref{eq:wie-qpw}  hold pointwise. For \(p=\infty\), the elements are
understood as distributions in $M^\infty(\mathbb R^{2n})$.

Then by
\cite[Theorem~2.13 and Corollary~2.14]{ToftZak}, \(Z_D\) extends to
topological isomorphisms
\begin{equation}
\label{eq:scalar-zak-modulation-isomorphism}
Z_D:
M^p(\mathbb R^n)
\longrightarrow
W^{\infty,p}_{\mathrm{qp},\Gamma_D,\Gamma^\circ_D }(\mathbb R^{2n}),
\qquad p\in\{1,\infty\}.
\end{equation}

The scalar Zak transform also preserves the natural modulation-space
duality. By \cite[Theorem~3.5(2)]{ToftZak},
\begin{equation}
\label{eq:scalar-zak-duality}
\langle\psi,h\rangle_{M^\infty,M^1}
=
\langle Z_D\psi,Z_Dh\rangle_{\mathrm{qp},\Gamma_D,\Gamma^\circ_D },
\qquad
\psi\in M^\infty(\mathbb R^n),\quad
h\in M^1(\mathbb R^n).
\end{equation}
where 
\begin{equation*}
\label{eq:scalar-zak-pairing}
\langle F,G\rangle_{\mathrm{qp},\Gamma_D,\Gamma^\circ_D }
:=
\int_{Q_D\times Q_D^\circ}
F(x,\omega)\overline{G(x,\omega)}
\,dx\,d\omega.
\end{equation*}

% ============================================================
\subsection{The vector-valued Zak transform and the logical fibre}
\label{subsec:zak-logical-fibre}
% ============================================================

Let
\[
\mathbb Z_{\mathbf d}
:=
\mathbb Z_{d_1}\times\cdots\times\mathbb Z_{d_n},
\qquad
\mathcal H_{\mathbf d}
:=
\bigotimes_{j=1}^n\mathbb C^{d_j}
\cong
\mathbb C^{d},
\]
with canonical basis
\(\{e_r:r\in\mathbb Z_{\mathbf d}\}\). We define the
\emph{syndrome torus} by
\[
\mathbb T_D^{\mathrm{syn}}
:=
\mathbb R^{2n}/\Lambda_D^\circ.
\]
Since $\Lambda_D^\circ
=
\Gamma_D^\circ\times\Gamma_D^\circ$,
the set \(Q_D^\circ\times Q_D^\circ\) is a fundamental domain for
\(\mathbb T_D^{\mathrm{syn}}\). For \(r\in\mathbb Z_{\mathbf d}\), set $\delta_r:=D^{-1/2}r$.

Since $[\Gamma_D^\circ:\Gamma_D]=d$,
the points \(\delta_r\), \(r\in\mathbb Z_{\mathbf d}\), form a complete
set of representatives for
\(\Gamma_D^\circ/\Gamma_D\). Consequently, up to boundaries,
\begin{equation}
\label{eq:fundamental-domain-decomposition}
Q_D
=
\bigsqcup_{r\in\mathbb Z_{\mathbf d}}
(\delta_r+Q_D^\circ),
\end{equation}
and hence
\[
Q_D\times Q_D^\circ
=
\bigsqcup_{r\in\mathbb Z_{\mathbf d}}
\bigl((\delta_r+Q_D^\circ)\times Q_D^\circ\bigr).
\]

This decomposition collects the values of the scalar Zak transform
on the \(d\) sheets into a single
\(\mathcal H_{\mathbf d}\)-valued section. For
\(f\in\mathcal S(\mathbb R^n)\), define
\begin{equation}
\label{eq:vector-valued-zak}
\bigl(\mathcal Z_Df(x,\omega)\bigr)_r
:=
e^{-2\pi i\delta_r\cdot\omega}
Z_Df(x+\delta_r,\omega),
\qquad r\in\mathbb Z_{\mathbf d}.
\end{equation}
The decomposition
\eqref{eq:fundamental-domain-decomposition} and the unitarity of
\(Z_D\) show that \(\mathcal Z_D\) extends uniquely to a unitary
operator
\[
\mathcal Z_D:
L^2(\mathbb R^n)
\longrightarrow
L^2\bigl(
Q_D^\circ\times Q_D^\circ;
\mathcal H_{\mathbf d}
\bigr).
\]
Let \(p\in\{1,\infty\}\), and fix a nonzero window
\(\Phi\in\mathcal S(\mathbb R^{2n})\). Let
\[
W^{\infty,p}_{\mathrm{qp},\Gamma^\circ_D,\Gamma^\circ_D }
   (\mathbb R^{2n};\mathcal H_{\mathbf d})
\]
be the space of all
\(F\in \mathcal S'(\mathbb R^{2n};\mathcal H_{\mathbf d})\)
satisfying
\begin{align}
F_r(x+\delta_t,\omega)
&=
e^{2\pi i\delta_t\cdot\omega}
F_{r+t}(x,\omega),
\label{eq:vector-qp-x}
\\
F_r(x,\omega+\delta_s)
&=
e^{-2\pi i r^TD^{-1}s}
F_r(x,\omega),
\label{eq:vector-qp-w}
\end{align}
for \(r,t,s\in\mathbb Z_{\mathbf d}\), distributionally, and such
that
\[
\|F\|_{W^{\infty,p}_{\mathrm{qp},\Gamma^\circ_D,\Gamma^\circ_D }}
:=
\|\|(V_\Phi F_r)_{r\in \mathbb Z_{\mathbf d}}\|_{\mathcal H_{\mathbf d}}\|_
 {L^p((Q^\circ_D\times Q^\circ_D)\times\mathbb R^{2n})}
<\infty.
\]
For \(p=\infty\), the \(L^\infty\)-norm has its usual
essential-supremum interpretation.

\begin{lemma}
\label{lem:zak-mild}
The vector-valued Zak transform extends uniquely to topological
isomorphisms
\[
\mathcal Z_D:
M^p(\mathbb R^n)
\longrightarrow
W^{\infty,p}_{\mathrm{qp},\Gamma^\circ_D,\Gamma^\circ_D }
   (\mathbb R^{2n};\mathcal H_{\mathbf d}),
\qquad p\in\{1,\infty\}.
\]
For \(p=\infty\), the isomorphism and its inverse are weak-$*$
continuous with respect to the dualities with \(p=1\).

Moreover,
\begin{equation}
\label{eq:vector-zak-duality}
\langle\psi,h\rangle_{M^\infty,M^1}
=
\langle\mathcal Z_D\psi,\mathcal Z_Dh
\rangle_{\mathrm{qp},\Gamma^\circ_D,\Gamma^\circ_D }
\end{equation}
for every
\(\psi\in M^\infty(\mathbb R^n)\) and
\(h\in M^1(\mathbb R^n)\), where the pairing on the right is the
continuous extension of
\[
\langle F,G\rangle_{\mathrm{qp},\Gamma^\circ_D,\Gamma^\circ_D }
:=
\int_{Q^\circ_D\times Q^\circ_D }
\langle F(x,\omega),G(x,\omega)
\rangle_{\mathcal H_{\mathbf d}}
\,dx\,d\omega.
\]
\end{lemma}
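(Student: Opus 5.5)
The plan is to factor \(\mathcal Z_D\) as the scalar Zak transform \(Z_D\) followed by a fixed, purely algebraic "sheet-splitting" operator \(\mathcal U\), and then check that \(\mathcal U\) is a topological isomorphism between the scalar and vector quasi-periodic spaces. Concretely, for a distribution \(F\) on \(\mathbb R^{2n}\) satisfying \eqref{eq:wie-qpx}--\eqref{eq:wie-qpw}, set
\[
(\mathcal U F)_r(x,\omega):=e^{-2\pi i\delta_r\cdot\omega}F(x+\delta_r,\omega),\qquad r\in\mathbb Z_{\mathbf d}.
\]
This is a composition of a translation and a multiplication by a smooth character, so it is well defined and continuous on \(\mathcal S'(\mathbb R^{2n})\). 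By \eqref{eq:vector-valued-zak}, \(\mathcal Z_D=\mathcal U\circ Z_D\) on \(\mathcal S(\mathbb R^n)\). I will then define the extension to \(M^p\) by this composition, using \eqref{eq:scalar-zak-modulation-isomorphism}. Uniqueness follows from density of \(\mathcal S\) in \(M^1\) and weak-\(*\) density in \(M^\infty\), once continuity is established.

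\emph{Step 1: transition relations.} Using the scalar relations, I will verify directly that \(\mathcal U F\) satisfies \eqref{eq:vector-qp-x} and \eqref{eq:vector-qp-w}. For \eqref{eq:vector-qp-x}, write \(\delta_r+\delta_t=\delta_{r+t}+\gamma\) with \(\gamma\in\Gamma_D\), where \(r+t\) is taken mod \(\mathbf d\). Then \eqref{eq:wie-qpx} absorbs \(\gamma\) and produces the phase \(e^{2\pi i\delta_t\cdot\omega}\). For \eqref{eq:vector-qp-w}, \eqref{eq:wie-qpw} removes the shift in \(F\), and the prefactor produces \(e^{-2\pi i\delta_r\cdot\delta_s}=e^{-2\pi i r^TD^{-1}s}\).

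Conversely, I will define \(\mathcal V\) on the vector space: given \(G\) satisfying \eqref{eq:vector-qp-x}--\eqref{eq:vector-qp-w}, set \(F(x,\omega):=G_0(x,\omega)\). The \(t\)-relation with \(r=0\) reconstructs all components, \(G_t(x,\omega)=e^{-2\pi i\delta_t\cdot\omega}G_0(x+\delta_t,\omega)\), so \(\mathcal U\mathcal V=\mathrm{id}\) and \(\mathcal V\mathcal U=\mathrm{id}\). Iterating the relations, I will check that \(G_0\) satisfies the scalar relations: the \(x\)-relation for \(\gamma=\delta_t\) with \(t\equiv0\), and the \(\omega\)-relation for \(\delta_s\), since \(\Gamma_D^\circ=\{\delta_s\}+\Gamma_D^\circ\)-type generation together with the \(x\)-relation give the periodicity. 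The main obstacle is this bookkeeping of phases when indices wrap mod \(d_j\). I will handle it by treating \(\delta_r\) for \(r\in\mathbb Z^n\) and checking consistency with the cocycle, rather than working with residues only.

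\emph{Step 2: norms.} By the quasi-periodicity, \(|V_\Phi F|\) is (up to window change) periodic in the base variable. Hence the \(L^p(Q_D\times Q_D^\circ\times\mathbb R^{2n})\) norm decomposes over the sheets \((\delta_r+Q_D^\circ)\times Q_D^\circ\) from \eqref{eq:fundamental-domain-decomposition}. Translation by \(\delta_r\) and modulation by \(e^{-2\pi i\delta_r\cdot\omega}\) act on \(V_\Phi\) by a shift in phase space and a unimodular factor. Thus \(\|V_\Phi(\mathcal UF)_r\|_{L^p(Q_D^\circ\times Q_D^\circ\times\mathbb R^{2n})}\) equals the norm of \(V_\Phi F\) on the \(r\)-th sheet, with a shifted frequency variable that is integrated over all of \(\mathbb R^{2n}\) anyway. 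Summing (for \(p=1\)) or taking the maximum (for \(p=\infty\)) over the finitely many \(r\) gives equivalence of the two norms, with constants depending only on \(d\) and the comparison between the \(\ell^2\) norm on \(\mathcal H_{\mathbf d}\) and the \(\ell^p\) norm on it.

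\emph{Step 3: duality and weak-\(*\) continuity.} For \(\psi,h\) in \(\mathcal S\), \eqref{eq:vector-zak-duality} follows from \eqref{eq:scalar-zak-duality} by splitting \(Q_D\times Q_D^\circ\) into sheets and substituting \(x\mapsto x+\delta_r\); the unimodular phases cancel in \(F\overline G\). Both sides are continuous on \(M^\infty\times M^1\) in the appropriate senses, so the identity extends by density. The \(M^1\) case then shows that the vector pairing identifies \(W^{\infty,\infty}\) as the dual of \(W^{\infty,1}\) via \(\mathcal Z_D\). Consequently \(\mathcal Z_D\) on \(M^\infty\) is the Banach adjoint of \((\mathcal Z_D|_{M^1})^{-1}\), which gives weak-\(*\) continuity of \(\mathcal Z_D\) and of its inverse.
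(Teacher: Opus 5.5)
Your proposal is correct and follows essentially the paper's route: factor $\mathcal Z_D=\mathscr U_D Z_D$ through the finite-sheet unfolding, check that the scalar transition relations and Toft's isomorphisms carry over under this operator built from finitely many translations and modulations, and obtain the duality by splitting $Q_D\times Q_D^\circ$ into the $d$ sheets $(\delta_r+Q_D^\circ)\times Q_D^\circ$. Your inverse $G\mapsto G_0$ is tidier than the paper's sheet-by-sheet gluing, since it avoids matching distributions across sheet boundaries, and your explicit STFT norm comparison supplies a step the paper only asserts; one caution is that $\mathcal S$ is only weak-$*$ dense in $M^\infty$, so the pairing on $W^{\infty,\infty}\times W^{\infty,1}$ must be fixed before you invoke density (for instance by transporting the scalar pairing through $\mathscr U_D$, in the spirit of the paper), otherwise your Banach-adjoint argument for weak-$*$ continuity would be circular.
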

\begin{proof}
Define the finite-sheet unfolding operator by
\[
\mathscr U_DG(x,\omega)
:=
\left(
e^{-2\pi i\delta_r\cdot\omega}
G(x+\delta_r,\omega)
\right)_{r\in\mathbb Z_{\mathbf d}}.
\]
Then $\mathcal Z_D=\mathscr U_DZ_D$.

Then \eqref{eq:fundamental-domain-decomposition} 
shows that \(\mathscr U_D\) replaces a scalar function on
\(Q_D\times Q_D^\circ\) by a vector-valued function on the smaller
domain $Q_D^\circ\times Q_D^\circ$.
Since only finitely many translations and modulations are involved,
\(\mathscr U_D\) is a topological isomorphism from $W^{\infty,p}_{\mathrm{qp},\Gamma_D,\Gamma^\circ_D }(\mathbb R^{2n})$ onto $W^{\infty,p}_{\mathrm{qp},\Gamma^\circ_D,\Gamma^\circ_D }
   (\mathbb R^{2n};\mathcal H_{\mathbf d})$.
The scalar Zak transition relations
\eqref{eq:wie-qpx}--\eqref{eq:wie-qpw} become precisely
\eqref{eq:vector-qp-x}--\eqref{eq:vector-qp-w} under
\(\mathscr U_D\).

Conversely, if
\(F\in
W^{\infty,p}_{\mathrm{qp},\Gamma^\circ_D,\Gamma^\circ_D }
(\mathbb R^{2n};\mathcal H_{\mathbf d})\),
define \(G\) on \(Q_D\times Q_D^\circ\) by
\[
G(x+\delta_r,\omega)
:=
e^{2\pi i\delta_r\cdot\omega}F_r(x,\omega),
\qquad
x\in Q_D^\circ.
\]
The vector transition relations ensure that this definition is
consistent on the boundaries and extends uniquely to a scalar
quasiperiodic distribution on \(\mathbb R^{2n}\). Thus
\(\mathscr U_D^{-1}F=G\).

Combining this finite-sheet isomorphism with the scalar Zak
isomorphism gives
\[
\mathcal Z_D:
M^p(\mathbb R^n)
\longrightarrow
W^{\infty,p}_{\mathrm{qp},\Gamma^\circ_D,\Gamma^\circ_D }
   (\mathbb R^{2n};\mathcal H_{\mathbf d}).
\]
The weak-$*$ assertion follows from the corresponding scalar result
and the weak-$*$ continuity of the finite-sheet unfolding.

For \(\psi,h\in\mathcal S(\mathbb R^n)\), scalar Zak duality and the
finite decomposition of \(Q_D\) give
\begin{align*}
\langle\psi,h\rangle
&=
\sum_{r\in\mathbb Z_{\mathbf d}}
\int_{Q_D\times Q_D^\circ}
Z_D\psi(x+\delta_r,\omega)
\overline{Z_Dh(x+\delta_r,\omega)}
\,dx\,d\omega
\\
&=
\int_{Q_D\times Q_D^\circ}
\left\langle
\mathcal Z_D\psi(x,\omega),
\mathcal Z_Dh(x,\omega)
\right\rangle_{\mathcal H_{\mathbf d}}
\,dx\,d\omega.
\end{align*}
The scalar
\(M^\infty\)--\(M^1\) Zak duality and the continuity of
\(\mathscr U_D\) extend the identity to
\(\psi\in M^\infty(\mathbb R^n)\) and
\(h\in M^1(\mathbb R^n)\).
\end{proof}

The vector-valued Zak transform diagonalizes the action of the
canonical stabilizers.

\begin{lemma}
\label{lem:stabilizer-zak-covariance}
Let \(f\in M^\infty(\mathbb R^n)\). Then
\begin{equation}
\label{eq:stabilizer-zak-character}
\mathcal Z_D(S_\lambda f)(z)
=
e^{-2\pi i\sigma(\lambda,z)}
\mathcal Z_Df(z),
\qquad
\lambda\in\Lambda_D.
\end{equation}
\end{lemma}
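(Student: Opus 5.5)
The plan is to verify \eqref{eq:stabilizer-zak-character} by a direct computation for \(f\in\mathcal S(\mathbb R^n)\), and then extend it to \(M^\infty(\mathbb R^n)\) by weak-\(*\) continuity using Lemma~\ref{lem:zak-mild}.

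\emph{Step 1: simplify the stabilizer.} Write \(\lambda=(a,b)=(\sqrt D\,k,\sqrt D\,\ell)\) with \(k,\ell\in\mathbb Z^n\). Then \(a\cdot b=k^TD\ell\in\mathbb Z\). The canonical phase \(\chi_D(\lambda)=e^{-\pi i k^TD\ell}\) combines with the factor \(e^{-\pi i a\cdot b}\) in \(\rho(\lambda)\) to give \(e^{-2\pi i k^TD\ell}=1\). Hence
\[
S_\lambda=M_bT_a .
\]
This is the point where the choice of \(\chi_D\) matters, and I expect it to be the only place a sign could go wrong.

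\emph{Step 2: scalar Zak transform.} For Schwartz \(f\), reindexing \(\gamma\mapsto\gamma-a\) in the lattice sum (valid since \(a\in\Gamma_D\)) gives
\[
Z_D(T_af)(x,\omega)=e^{-2\pi i a\cdot\omega}Z_Df(x,\omega).
\]
For the modulation, \((M_bf)(x-\gamma)=e^{2\pi i b\cdot x}e^{-2\pi i b\cdot\gamma}f(x-\gamma)\). Since \(b,\gamma\in\Gamma_D\), we have \(b\cdot\gamma\in\mathbb Z\), so
\[
Z_D(M_bf)(x,\omega)=e^{2\pi i b\cdot x}Z_Df(x,\omega).
\]
Together these give
\[
Z_D(S_\lambda f)(x,\omega)=e^{2\pi i(b\cdot x-a\cdot\omega)}Z_Df(x,\omega).
\]

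\emph{Step 3: pass to \(\mathcal Z_D\).} In component \(r\) of \eqref{eq:vector-valued-zak} the scalar transform is evaluated at \(x+\delta_r\). This produces the extra factor \(e^{2\pi i b\cdot\delta_r}=e^{2\pi i\ell^Tr}=1\). Since \(\sigma(\lambda,z)=\lambda^TJz=a\cdot\omega-b\cdot x\) for \(z=(x,\omega)\), the common factor is exactly \(e^{-2\pi i\sigma(\lambda,z)}\), independently of \(r\). This proves the identity on \(\mathcal S(\mathbb R^n)\).

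\emph{Step 4: extension to \(M^\infty\).} Both sides are weak-\(*\) continuous in \(f\).
\begin{itemize}[leftmargin=*]
\item On the left, \(S_\lambda\) is weak-\(*\) continuous on \(M^\infty\), since it is the adjoint of \(S_\lambda^*=S_{-\lambda}\) acting on \(M^1\). By Lemma~\ref{lem:zak-mild}, \(\mathcal Z_D\) is weak-\(*\) continuous.
\item On the right, multiplication by the smooth character \(e^{-2\pi i\sigma(\lambda,\cdot)}\) is a modulation on \(\mathbb R^{2n}\). It is therefore weak-\(*\) continuous and preserves the transition relations \eqref{eq:vector-qp-x}--\eqref{eq:vector-qp-w}; this can be checked using \(\sigma(\lambda,\Lambda_D^\circ)\subset\mathbb Z\).
\end{itemize}
Since \(\mathcal S(\mathbb R^n)\) is weak-\(*\) dense in \(M^\infty(\mathbb R^n)\), the identity extends.

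Alternatively, Step 4 can be done by duality via \eqref{eq:vector-zak-duality}. For \(h\in M^1\),
\[
\langle\mathcal Z_DS_\lambda f,\mathcal Z_Dh\rangle=\langle f,S_{-\lambda}h\rangle=\langle\mathcal Z_Df,\mathcal Z_DS_{-\lambda}h\rangle,
\]
and Step 3 applied to the \(M^1\) function \(h\) moves the character back to the \(f\)-side. This route avoids any density argument.
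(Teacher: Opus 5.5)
Your proposal is correct and follows essentially the same route as the paper. Both arguments do a direct computation on $\mathcal S(\mathbb R^n)$: the paper computes $Z_D(\rho(\lambda)f)$ first and then cancels $e^{-\pi i a\cdot b}=\chi_D(\lambda)$ against $\chi_D(\lambda)$, whereas you cancel the phase up front to get $S_\lambda=M_bT_a$. Both then use $b\cdot\delta_r\in\mathbb Z$ for the vector-valued components and extend to $M^\infty$ by weak-$*$ continuity via Lemma~\ref{lem:zak-mild}, with your Step~4 more explicit in checking weak-$*$ continuity of the multiplication on the right-hand side and invoking weak-$*$ density of $\mathcal S$.
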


\begin{proof}
Write
\[
\lambda=(a,b)
=
(\sqrt D\,k,\sqrt D\,\ell)
\in\Lambda_D,
\qquad
k,\ell\in\mathbb Z^n.
\]
We first prove the formula for \(f\in\mathcal S(\mathbb R^n)\).
Using 
$\rho(a,b)
=
e^{-\pi i a\cdot b}M_bT_a$,
we obtain
\[
\begin{aligned}
Z_D(\rho(a,b)f)(x,\omega)
&=
\sum_{\gamma\in\Gamma_D}
(\rho(a,b)f)(x-\gamma)
e^{2\pi i\gamma\cdot\omega}
\\
&=
e^{-\pi i a\cdot b}
\sum_{\gamma\in\Gamma_D}
e^{2\pi i b\cdot(x-\gamma)}
f(x-\gamma-a)
e^{2\pi i\gamma\cdot\omega}.
\end{aligned}
\]
Since \ $b\cdot\gamma\in\mathbb Z$, we have that 
\[
Z_D(\rho(a,b)f)(x,\omega)
=
e^{-\pi i a\cdot b}
e^{2\pi ib\cdot x}
\sum_{\gamma\in\Gamma_D}
f(x-a-\gamma)e^{2\pi i\gamma\cdot\omega}.
\]
The change of variables \(\eta=\gamma+a\) gives
\[
\sum_{\gamma\in\Gamma_D}
f(x-a-\gamma)e^{2\pi i\gamma\cdot\omega}
=
e^{-2\pi ia\cdot\omega}Z_Df(x,\omega).
\]
Therefore,
\begin{equation*}
\label{eq:scalar-zak-weyl-action}
Z_D(\rho(\lambda)f)(z)
=
e^{-\pi ia\cdot b}
e^{-2\pi i\sigma(\lambda,z)}
Z_Df(z).
\end{equation*}
Since \(a\cdot b=k^{\mathsf T}D\ell\), we have
\(e^{-\pi i a\cdot b}=\chi_D(\lambda)\).
Using \(S_\lambda=\chi_D(\lambda)\rho(\lambda)\) and
\(\chi_D(\lambda)^2=1\), the phase factors cancel, giving
\[
Z_D(S_\lambda f)(z)
=
e^{-2\pi i\sigma(\lambda,z)}Z_Df(z).
\]
We now pass to the vector-valued transform.  By \eqref{eq:vector-valued-zak}
\[
\begin{aligned}
(\mathcal Z_D(S_\lambda f)(x,\omega))_r
&=
e^{-2\pi i\delta_r\cdot\omega}
Z_D(S_\lambda f)(x+\delta_r,\omega)
\\
&=
e^{-2\pi i\sigma(\lambda,(x+\delta_r,\omega))}
(\mathcal Z_Df(x,\omega))_r.
\end{aligned}
\]
Since $b\cdot\delta_r
=
(\sqrt D\,\ell)\cdot(D^{-1/2}r)
=
\ell\cdot r
\in\mathbb Z$,
we have
\[
e^{-2\pi i\sigma(\lambda,(x+\delta_r,\omega))}
=
e^{-2\pi i\sigma(\lambda,(x,\omega))}.
\]
This proves \eqref{eq:stabilizer-zak-character} for Schwartz functions.

Finally, Weyl operators act weak-$*$ continuously on
\(M^\infty(\mathbb R^n)\), and \(\mathcal Z_D\) is weak-$*$
continuous by Lemma~\ref{lem:zak-mild}.  The identity therefore extends
to every \(f\in M^\infty(\mathbb R^n)\).
\end{proof}

\subsection{The GKP codes}

The preceding results provide the functional-analytic setting in
which the ideal GKP code has exactly the expected finite logical
dimension. Lemma~\ref{lem:zak-mild} identifies
\(M^\infty(\mathbb R^n)\) with a space of vector-valued
quasiperiodic Zak distributions, while Lemma \ref{lem:stabilizer-zak-covariance} turns the stabilizer
conditions into multiplication equations on the syndrome torus.
These equations force the Zak transform of a stabilized state to be
supported at the trivial syndrome.

The choice of \(M^\infty(\mathbb R^n)\) as the ambient space is
important here. In the full space of tempered distributions, support
at the syndrome lattice would still permit derivatives of Dirac
masses and polynomially growing coefficient sequences. By contrast,
the \(M^\infty\) condition implies that the Zak transform is an
order-zero lattice Dirac comb with uniformly bounded coefficients.
The vector-valued transition relations determine all these
coefficients from the one at the origin. We may therefore identify
the GKP code bijectively with the finite-dimensional logical fibre
\(\mathcal H_{\mathbf d}\).

\begin{definition}[GKP code]\label{def:ideal-code}
For a symplectically integral lattice $\Lambda$ and a stabilizer phase $\chi$, we define its associated  \emph{GKP code}
\[
 \cC_{\Lambda,\chi}
 =\{\psi\in M^\infty(\R^n):S_\lambda\psi=\psi
 \text{ for every }\lambda\in\Lambda\}.
\]
Equivalently,
\[
\mathcal C_{\Lambda,\chi}
=
\{\psi\in M^\infty(\R^n):
\rho(\lambda)\psi=\overline{\chi(\lambda)}\psi
\text{ for every }\lambda\in\Lambda\}.
\]
\end{definition}

For the rectangular lattice with its canonical stabilizer phase, we
write
\[
\mathcal C_{\Lambda_D}
:=
\mathcal C_{\Lambda_D,\chi_D}.
\]

We can now characterize the ideal code intrinsically as the fibre over
the trivial syndrome.

\begin{proposition}[Zak characterization of the ideal code]
\label{prop:Zak-characterization}
For every \(\psi\in\mathcal C_{\Lambda_D}\), there exists a unique
vector \(c_\psi\in\mathcal H_{\mathbf d}\) such that
\[
\mathcal Z_D\psi=c_\psi\delta_0
\]
as a distributional section over $\mathbb T_D^{\mathrm{syn}}$. Moreover, the map
\[
\mathcal C_{\Lambda_D}
\longrightarrow
\mathcal H_{\mathbf d},
\qquad
\psi\longmapsto c_\psi,
\]
is a linear isomorphism.
\end{proposition}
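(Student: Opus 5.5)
The plan is to work entirely on the Zak side and translate the stabilizer equations into a support condition. Let $\psi\in\mathcal C_{\Lambda_D}$ and put $F:=\mathcal Z_D\psi$. By Lemma~\ref{lem:zak-mild}, $F\in W^{\infty,\infty}_{\mathrm{qp},\Gamma^\circ_D,\Gamma^\circ_D}(\mathbb R^{2n};\mathcal H_{\mathbf d})$. By Lemma~\ref{lem:stabilizer-zak-covariance}, the conditions $S_\lambda\psi=\psi$ become
\[
\bigl(e^{-2\pi i\sigma(\lambda,z)}-1\bigr)F(z)=0,\qquad \lambda\in\Lambda_D,
\]
as distributions on $\mathbb R^{2n}$. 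The common zero set of these smooth multipliers is $\{z:\sigma(\lambda,z)\in\mathbb Z\ \forall\lambda\in\Lambda_D\}=\Lambda_D^\circ$. Hence $\operatorname{supp}F\subseteq\Lambda_D^\circ$: off this set some multiplier is nonvanishing and can be inverted locally. Since $\Lambda_D^\circ$ is discrete, near each $\mu\in\Lambda_D^\circ$ the component $F_r$ is a finite combination $\sum_{|\alpha|\le N}a_{\alpha}\partial^\alpha\delta_\mu$.

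The step I expect to be the main obstacle is showing that the order is zero. I would take the $2n$ basis vectors $\lambda_1,\dots,\lambda_{2n}$ of $\Lambda_D$. The functions $\phi_j(z)=e^{-2\pi i\sigma(\lambda_j,z)}-1$ vanish at $\mu$, and their differentials $-2\pi i\,(J^T\lambda_j)$ are linearly independent, since $J$ is invertible and the $\lambda_j$ span $\mathbb R^{2n}$. A local change of variables then makes $\phi_j$ the coordinate functions, up to smooth nonvanishing factors. The standard computation $y_j\,\partial^\alpha\delta=-\alpha_j\,\partial^{\alpha-e_j}\delta$ shows that the top-order terms of $\phi_j\cdot\sum a_\alpha\partial^\alpha\delta_\mu$ cannot cancel unless $N=0$; I would run this by downward induction on $N$. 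As a more robust alternative, I would use that $F$ lies in $M^\infty$ locally. A derivative $\partial^\alpha\delta$ with $|\alpha|\ge1$ does belong to $M^\infty$, so this alone does not exclude it, and the multiplier argument is the one that really does the work. Therefore
\[
F=\sum_{\mu\in\Lambda_D^\circ}c_\mu\,\delta_\mu,\qquad c_\mu\in\mathcal H_{\mathbf d}.
\]

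Next, the vector quasi-periodicity relations \eqref{eq:vector-qp-x}--\eqref{eq:vector-qp-w} are applied with $\mu=(\delta_t,\delta_s)$. These translations generate $\Lambda_D^\circ=\Gamma_D^\circ\times\Gamma_D^\circ$, and the relations express each $c_\mu$ as a coordinate permutation of $c_0$ multiplied by explicit phases. Hence $F$ is determined by $c_\psi:=c_0$, and $\|c_\mu\|=\|c_0\|$ is uniformly bounded. Read on the fundamental domain $Q_D^\circ\times Q_D^\circ$ containing $0$, this is the statement $\mathcal Z_D\psi=c_\psi\delta_0$ as a section over $\mathbb T_D^{\mathrm{syn}}$. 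Uniqueness of $c_\psi$ and linearity of $\psi\mapsto c_\psi$ are immediate. Injectivity follows from the injectivity of $\mathcal Z_D$ on $M^\infty$ in Lemma~\ref{lem:zak-mild}.

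For surjectivity, given $c\in\mathcal H_{\mathbf d}$, I would define $F$ directly by the scalar model rather than by iterating the vector relations, since the latter would require a cocycle-consistency check. Concretely, take the quasi-periodic scalar Dirac comb $G=\sum_{\gamma\in\Gamma_D,\,\eta\in\Gamma_D^\circ}e^{2\pi i\gamma\cdot\eta}\,(\cdots)$ built from the values of $c$ on the $d$ sheets $\delta_r+Q_D^\circ$, which satisfies \eqref{eq:wie-qpx}--\eqref{eq:wie-qpw} by construction. Then set $F=\mathscr U_DG$, so consistency is automatic; this is exactly the inverse construction already given in the proof of Lemma~\ref{lem:zak-mild}. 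A lattice Dirac comb with bounded coefficients has bounded STFT, so $F$ lies in the $W^{\infty,\infty}$ space, and $\psi:=\mathcal Z_D^{-1}F\in M^\infty$. Finally, $e^{-2\pi i\sigma(\lambda,\mu)}=1$ for $\mu\in\Lambda_D^\circ$ and $\lambda\in\Lambda_D$, so Lemma~\ref{lem:stabilizer-zak-covariance} gives $S_\lambda\psi=\psi$. Hence $\psi\in\mathcal C_{\Lambda_D}$ with $c_\psi=c$.
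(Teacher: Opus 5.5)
Your proof is correct, but the key step, showing that $\mathcal Z_D\psi$ has order zero at each point of $\Lambda_D^\circ$, takes a different route from the paper.

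The paper extracts only the support condition $\operatorname{supp}\mathcal Z_D\psi\subseteq\Lambda_D^\circ$ from the stabilizer equations. It then relies on the structure of $M^\infty$. By Step (II) of the proof of Feichtinger--Kozek's Theorem 7.7.5, an $M^\infty$ distribution supported on a uniformly discrete set is an order-zero Dirac comb with uniformly bounded coefficients. The transition relations then reduce everything to $c_0$. Surjectivity is obtained from unitary transition operators $\tau_\nu$ and the comb $\sum_\nu\tau_\nu(c)\delta_\nu$.

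You instead use the full annihilation equations. Writing $u_j=\sigma(\lambda_j,z-\mu)$, you have $\phi_j=u_j\,h_j(u_j)$ with $h_j$ smooth and nonvanishing near $0$. Hence $u_jF=0$ locally, and $u_j\partial^\alpha\delta=-\alpha_j\partial^{\alpha-e_j}\delta$ kills every term with $\alpha\neq0$; in these linear coordinates no induction is needed. Boundedness of the coefficients then comes from the unitary transition relations, not from $M^\infty$.

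Your route is elementary and self-contained. It shows that order zero is already forced by the stabilizer equations, so $M^\infty$ enters only through the Zak isomorphism of Lemma~\ref{lem:zak-mild}. The paper's route is shorter but rests on an external structure theorem. Your surjectivity construction, via a scalar quasi-periodic comb followed by $\mathscr U_D$, is equivalent to the paper's. It has the small advantage that the consistency of the transition maps is automatic.

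One claim in your proposal is false and should be removed: $\partial^\alpha\delta$ with $|\alpha|\geq1$ does \emph{not} belong to $M^\infty$. In one variable, for example,
\[
\bigl|V_g\delta'(x,\omega)\bigr|=\bigl|2\pi i\omega\,g(-x)+g'(-x)\bigr|,
\]
which is unbounded in $\omega$. The fact that $M^\infty$ excludes derivatives of Dirac masses is exactly the mechanism of the paper's proof. So the alternative you dismissed is valid, and it is the paper's argument. Your own argument never uses the false claim, so it creates no gap.
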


\begin{proof}
Let \(\psi\in\mathcal C_{\Lambda_D}\). Since
\[
S_\lambda\psi=\psi,
\qquad \lambda\in\Lambda_D,
\]
Lemma~\ref{lem:stabilizer-zak-covariance} gives
\begin{equation*}
\label{eq:zak-stabilizer-annihilation}
\left(
e^{-2\pi i\sigma(\lambda,z)}-1
\right)
\mathcal Z_D\psi
=
0,
\qquad
\lambda\in\Lambda_D.
\end{equation*}
It follows that $\operatorname{supp}(\mathcal Z_D\psi)
\subseteq
\Lambda_D^\circ$.
By Lemma~\ref{lem:zak-mild}, $\mathcal Z_D\psi
\in W^{\infty,\infty}_{\mathrm{qp},\Gamma^\circ_D,\Gamma^\circ_D }
   (\mathbb R^{2n};\mathcal H_{\mathbf d})\subseteq 
M^\infty(\mathbb R^{2n};\mathcal H_{\mathbf d})$.
Since \(\Lambda_D^\circ\) is uniformly discrete,
Step~\textup{(II)} in the proof of
\cite[Theorem~7.7.5]{FeiKo98} implies, componentwise, that the lift of
\(\mathcal Z_D\psi\) to \(\mathbb R^{2n}\) is of the form
\begin{equation*}
\label{eq:lifted-zak-dirac-comb}
\mathcal Z_D\psi
=
\sum_{\nu\in\Lambda_D^\circ}
c_\nu\delta_\nu,
\qquad
(c_\nu)_{\nu\in\Lambda_D^\circ}
\in
\ell^\infty(
\Lambda_D^\circ;\mathcal H_{\mathbf d}
).
\end{equation*}
The quasi periodicity relations determine all coefficients \(c_\nu\) from
the coefficient \(c_0\) at the origin. 
Thus, as a distributional section on the syndrome torus,
\[
\mathcal Z_D\psi=c_\psi\delta_0,
\qquad
c_\psi:=c_0.
\]
Uniqueness of \(c_\psi\) is immediate, and injectivity of
\(\psi\mapsto c_\psi\) follows from injectivity of
\(\mathcal Z_D\).

Conversely, let \(c\in\mathcal H_{\mathbf d}\). The quasi periodicity
relations \eqref{eq:vector-qp-x} and
\eqref{eq:vector-qp-w} determine, for every
\(\nu\in\Lambda_D^\circ\), a unitary transition operator
\[
\tau_\nu:\mathcal H_{\mathbf d}\longrightarrow
\mathcal H_{\mathbf d}.
\]
The point-supported section \(c\delta_0\) therefore has the
quasiperiodic lift
\[
F_c:=\sum_{\nu\in\Lambda_D^\circ}\tau_\nu(c)\delta_\nu.
\]
The transition operators are unitary, and hence
\[
\|\tau_\nu(c)\|_{\mathcal H_{\mathbf d}}
=
\|c\|_{\mathcal H_{\mathbf d}},
\qquad \nu\in\Lambda_D^\circ.
\]
Since \(F_c\) is a lattice Dirac comb with uniformly bounded
coefficients, the rapid decay of the STFT window implies that
 $F_c\in W^{\infty,\infty}_{\mathrm{qp},\Gamma^\circ_D,\Gamma^\circ_D }
   (\mathbb R^{2n};\mathcal H_{\mathbf d})$. Then
Lemma~\ref{lem:zak-mild} gives a unique
\[
\psi_c
:=
\mathcal Z_D^{-1}(F_c)
\in M^\infty(\mathbb R^n).
\]

It remains to verify that \(\psi_c\) belongs to the code. Since
\(F_c=c\delta_0\) on the syndrome torus, multiplication by the
stabilizer character gives
\[
e^{-2\pi i\sigma(\lambda,z)}F_c
=
F_c,
\qquad \lambda\in\Lambda_D,
\]
because the character equals \(1\) at \(z=0\). By
Lemma~\ref{lem:stabilizer-zak-covariance},
\[
\mathcal Z_D(S_\lambda\psi_c)
=
e^{-2\pi i\sigma(\lambda,z)}
\mathcal Z_D\psi_c
=
\mathcal Z_D\psi_c.
\]
Injectivity of \(\mathcal Z_D\) therefore implies $S_\lambda\psi_c=\psi_c,
\qquad \lambda\in\Lambda_D$.
Hence
\[
\psi_c\in\mathcal C_{\Lambda_D},
\]
and the map \(\psi\mapsto c_\psi\) is surjective.
\end{proof}

\begin{definition}[Logical vector]
For \(\psi\in\mathcal C_{\Lambda_D}\), the unique vector
\(c_\psi\in\mathcal H_{\mathbf d}\) determined by
\[
\mathcal Z_D\psi=c_\psi\delta_0
\]
is called the \emph{logical vector} of \(\psi\).
\end{definition}

For \(c\in\mathcal H_{\mathbf d}\), we denote by \(\psi_c\) the unique
ideal codeword satisfying
\[
\mathcal Z_D\psi_c=c\delta_0
\]
as a distributional section over
\(\mathbb T_D^{\mathrm{syn}}\).

\subsection{Pauli operators and Clifford covariance}
\label{subsec:finite-weyl-background}

For \(i=1,\ldots,n\), define the shift and phase operators on
\(\cH_{\mathbf d}\) by
\[
X_i e_r=e_{r+\varepsilon_i},
\qquad
Z_i e_r=e^{2\pi i r_i/d_i}e_r
\]
where $\{e_r:r\in\Z_{\mathbf d}\}$ is the canonical basis of $\mathcal H_{\mathbf d}$.
The \emph{logical Pauli group} is the group
\[
\mathcal P_{\mathbf d}
:=
\langle X_1,\ldots,X_n,Z_1,\ldots,Z_n, e^{i\pi/d_n}I\rangle .
\]
The relation with phase-space displacements is particularly simple in
Zak coordinates. Let
\[
a_i=(D^{-1/2}\varepsilon_i,0),
\qquad
b_i=(0,D^{-1/2}\varepsilon_i)
\in\Lambda_D^\circ.
\]
A direct calculation from
\eqref{eq:vector-valued-zak} gives
\begin{equation}
\label{eq:adjoint-generators-pauli}
\cZ_D(\rho(a_i)h)(0)
=
X_i\,\cZ_Dh(0),
\qquad
\cZ_D(\rho(b_i)h)(0)
=
Z_i\,\cZ_Dh(0),
\end{equation}
for every \(h\in M^1(\mathbb R^n)\). Thus, the elementary generators of the adjoint lattice act on the
logical Zak fibre as the generators of the logical Pauli group.

We identify
\[
\Z_{\mathbf d}\times\Z_{\mathbf d}\cong K_D:=\Lambda_D^\circ/\Lambda_D
\]
via
\begin{equation} \label{identification_K}
 q=(r,s)
\longmapsto
\nu_q+\Lambda_D,
\qquad
\nu_q:=
(D^{-1/2}r,D^{-1/2}s)\in\Lambda_D^\circ.   
\end{equation}
For \(q=(r,s)\in K_D\), we then define the corresponding Pauli operator by
\begin{equation}\label{pauli_op}
\mathsf W_q
:=e^{\pi i r^TD^{-1}s}
X_1^{r_1}\cdots X_n^{r_n}
Z_1^{s_1}\cdots Z_n^{s_n}.    
\end{equation}
The Weyl relations \eqref{eq:weyl-commutation_0} and \eqref{eq:adjoint-generators-pauli} yield
\begin{equation*}
\cZ_D(\rho(\nu_q)h)(0)
=
\mathsf W_q\,\cZ_Dh(0).
\end{equation*}

\subsection{Metaplectic operators and code spaces}\label{subsec:metaplectic-operators}

We briefly recall the basic properties of the metaplectic
representation that will be used below. For every
\(T\in\operatorname{Sp}(2n,\mathbb R)\), there exists a unitary
operator
\[
U_T:L^2(\mathbb R^n)\longrightarrow L^2(\mathbb R^n),
\]
unique up to a phase, such that
\begin{equation}
\label{eq:metaplectic-covariance}
U_T\rho(z)U_T^*
=
\rho(Tz),
\qquad z\in\mathbb R^{2n};
\end{equation}
see \cite{GjertsenLuef}. An important regularity property is that metaplectic operators act
continuously and bijectively on every modulation space
\(M^p(\mathbb R^n)\), \(1\leq p\leq\infty\). In particular, they are
weak-$*$ continuous automorphisms of \(M^\infty(\mathbb R^n)\).
They also act continuously on \(\mathcal S(\mathbb R^n)\) and, by
duality, on \(\mathcal S'(\mathbb R^n)\). Thus the covariance
relation \eqref{eq:metaplectic-covariance} remains valid throughout
the modulation-space scale.
Moreover,  \(U_T\) maps \(\cG(\mathbf g,\Lambda)\) onto
\(\cG(U_T\mathbf g,T\Lambda)\). It therefore preserves the frame, Riesz-sequence,
and orthonormal-basis properties, together with their bounds. In
particular, if \(\boldsymbol\gamma\) is dual to \(\mathbf g\) over \(\Lambda\), then
\(U_T\boldsymbol\gamma\) is dual to \(U_T\mathbf g\) over \(T\Lambda\); see
\cite{GjertsenLuef}.

Let \(\Lambda\subset\mathbb R^{2n}\) be a symplectically integral
lattice, and let $\chi$ be a stabilizer phase.  The symplectic
transformation \(T\) transports \(\chi\) to a stabilizer phase on
\(T\Lambda\), defined by
\begin{equation*}
\chi^T(\mu)
:=
\chi(T^{-1}\mu),
\qquad
\mu\in T\Lambda.
\end{equation*}

\begin{proposition}
\label{prop:metaplectic-code-transport}
Let \(U_T\) be a metaplectic lift of \(T\). Then
\[
U_T\mathcal C_{\Lambda,\chi}
=
\mathcal C_{T\Lambda,\chi^T}.
\]
\end{proposition}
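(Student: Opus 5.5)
The plan is a direct verification from the covariance relation \eqref{eq:metaplectic-covariance}, followed by an application of the same argument to the inverse transformation. Two preliminary facts are needed. First, \(T\Lambda\) is again symplectically integral, since \(\sigma(T\lambda,T\mu)=\sigma(\lambda,\mu)\in\mathbb Z\). Second, \(\chi^T\) is a stabilizer phase on \(T\Lambda\). To see this, write \(\mu=T\lambda\) and \(\mu'=T\lambda'\). Then
\[
\chi^T(\mu+\mu')=\chi(\lambda+\lambda')=\chi(\lambda)\chi(\lambda')e^{-\pi i\sigma(\lambda,\lambda')}=\chi^T(\mu)\chi^T(\mu')e^{-\pi i\sigma(\mu,\mu')},
\]
again by symplectic invariance of \(\sigma\). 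So the right-hand side \(\mathcal C_{T\Lambda,\chi^T}\) is a well-defined code in the sense of Definition~\ref{def:ideal-code}.

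\emph{Inclusion \(\subseteq\).} Let \(\psi\in\mathcal C_{\Lambda,\chi}\) and \(\mu=T\lambda\in T\Lambda\). Since \(U_T\) acts bijectively on \(M^\infty(\mathbb R^n)\), we have \(U_T\psi\in M^\infty(\mathbb R^n)\). The covariance relation \eqref{eq:metaplectic-covariance} holds on all of \(M^\infty\), as recalled above, so
\[
S^{\chi^T}_\mu U_T\psi=\chi(\lambda)\rho(T\lambda)U_T\psi=\chi(\lambda)U_T\rho(\lambda)U_T^*U_T\psi=U_T S^\chi_\lambda\psi=U_T\psi.
\]
This uses \(U_T^*U_T=I\) on \(M^\infty\). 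That identity holds on \(L^2\) and extends by weak-\(*\) continuity and density, or directly because \(U_T^*=U_T^{-1}\) is the extended inverse.

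\emph{Inclusion \(\supseteq\).} Apply the same argument to \(T^{-1}\), the lattice \(T\Lambda\), and the phase \(\chi^T\), with lift \(U_T^{*}=U_T^{-1}\). This is a metaplectic lift of \(T^{-1}\), since conjugating \eqref{eq:metaplectic-covariance} gives \(U_T^*\rho(w)U_T=\rho(T^{-1}w)\). The transported phase is \((\chi^T)^{T^{-1}}=\chi\). The argument yields \(U_T^{-1}\mathcal C_{T\Lambda,\chi^T}\subseteq\mathcal C_{\Lambda,\chi}\), and applying \(U_T\) gives the reverse inclusion.

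The phase ambiguity of \(U_T\) is harmless, since a scalar factor preserves both code spaces. The only point needing care is the extension of the covariance relation and of \(U_T^{-1}U_T=I\) from \(L^2\) to \(M^\infty\). I would handle this by noting that both sides of each identity are weak-\(*\) continuous on \(M^\infty\) and agree on the weak-\(*\) dense subspace \(\mathcal S(\mathbb R^n)\). This is the step I expect to be the main (though mild) obstacle; the rest is bookkeeping.
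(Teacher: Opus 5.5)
Your proof is correct and follows the paper's argument. It rests on the same direct computation $S^{\chi^T}_{T\lambda}U_T\psi=U_TS^{\chi}_{\lambda}\psi=U_T\psi$ from metaplectic covariance, followed by the same argument for $T^{-1}$ (with lift $U_T^*$) to obtain the reverse inclusion; your extra checks that $\chi^T$ is a stabilizer phase on $T\Lambda$ and that the covariance relation and $U_T^*U_T=I$ extend to $M^\infty$ fill in details the paper takes for granted.
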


\begin{proof}
Let \(\psi\in\mathcal C_{\Lambda,\chi}\) and write
\(\mu=T\lambda\in T\Lambda\). By metaplectic covariance,
\begin{align*}
S_\mu^{\chi^T}U_T\psi
&=
\chi^T(\mu)\rho(\mu)U_T\psi
=
\chi(\lambda)\rho(T\lambda)U_T\psi =
U_T\chi(\lambda)\rho(\lambda)\psi
=
U_TS_\lambda^\chi\psi
=
U_T\psi.
\end{align*}
Thus, $U_T\psi\in\mathcal C_{T\Lambda,\chi^T}$.
Applying the same argument to \(T^{-1}\) gives the reverse inclusion.
\end{proof}

\begin{remark}
If \(T\Lambda=\Lambda\), then \(U_T\) preserves the code space $\mathcal C_{\Lambda,\chi}$ precisely when $\chi\circ T^{-1}=\chi$.
\end{remark}

\subsubsection{Clifford gates and metaplectic transformations}
\label{Clifford_gates}

We will see that metaplectic unitaries preserving the code space induce Clifford gates on the logical information.

\begin{lemma}
Let $T\in\operatorname{Sp}(2n,\mathbb R)$ satisfy
\[
T\Lambda_D=\Lambda_D,
\qquad
\chi_D\circ T^{-1}=\chi_D,
\]
and let $U_T$ be a metaplectic lift. Then the induced map
\[
H_T:\mathcal H_{\mathbf d}\longrightarrow\mathcal H_{\mathbf d},
\qquad
\mathcal Z_D(U_T\psi_c)=H_Tc\,\delta_0,
\]
is unitary.
\end{lemma}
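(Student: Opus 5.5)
First I will check that $H_T$ is well defined and linear. By Proposition~\ref{prop:metaplectic-code-transport} and the Remark after it, the hypotheses $T\Lambda_D=\Lambda_D$ and $\chi_D\circ T^{-1}=\chi_D$ give $U_T\mathcal C_{\Lambda_D}=\mathcal C_{\Lambda_D}$. So $U_T\psi_c\in\mathcal C_{\Lambda_D}$. Proposition~\ref{prop:Zak-characterization} then supplies a unique vector $H_Tc$ with $\mathcal Z_D(U_T\psi_c)=H_Tc\,\delta_0$. Linearity of $c\mapsto H_Tc$ follows from linearity of $c\mapsto\psi_c$, of $U_T$, of $\mathcal Z_D$, and of the inverse isomorphism $\psi\mapsto c_\psi$. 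Applying the same reasoning to $U_T^{-1}=U_{T^{-1}}$ (up to a phase) shows that $H_T$ is bijective, with inverse $H_{T^{-1}}$ up to a scalar of modulus one. It therefore remains only to prove that $H_T$ preserves norms. Since $\mathcal H_{\mathbf d}$ is finite-dimensional, an isometric linear map is automatically unitary.

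To get norms out of the distributional codewords, I will pair them against $M^1$ probes using the duality \eqref{eq:vector-zak-duality}. For $h\in M^1(\mathbb R^n)$, the section $\mathcal Z_D\psi_c=c\,\delta_0$ paired with the continuous section $\mathcal Z_Dh$ gives
\[
\langle\psi_c,h\rangle_{M^\infty,M^1}=\langle c,\mathcal Z_Dh(0)\rangle_{\mathcal H_{\mathbf d}}.
\]
This holds up to a fixed normalization constant coming from the Haar measure convention, and the constant is the same for every $c$. Now apply this to $U_T\psi_c$, using the unitarity of $U_T$ extended by duality, $\langle U_T\psi,h\rangle=\langle\psi,U_T^*h\rangle$. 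This yields
\[
\langle H_Tc,\mathcal Z_Dh(0)\rangle=\langle c,\mathcal Z_D(U_T^*h)(0)\rangle\qquad\text{for all }h\in M^1.
\]
In other words, $H_T^*$ is determined by the map $\mathcal Z_Dh(0)\mapsto\mathcal Z_D(U_T^*h)(0)$ on fibre values. The evaluation map $h\mapsto\mathcal Z_Dh(0)$ is onto $\mathcal H_{\mathbf d}$; for example, the Pauli relation $\mathcal Z_D(\rho(\nu_q)h)(0)=\mathsf W_q\mathcal Z_Dh(0)$ combined with irreducibility of the finite Weyl representation shows that its range is invariant and nonzero. Hence this map is well defined on all of $\mathcal H_{\mathbf d}$.

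The main obstacle is proving that this induced fibre map is isometric. I will use Clifford covariance. Since $T$ preserves $\Lambda_D^\circ$ and $\Lambda_D$, it induces an automorphism of $K_D$. Metaplectic covariance, together with the Pauli relation above, gives
\[
H_T\mathsf W_q H_T^{-1}=\theta(q)\,\mathsf W_{Tq}
\]
for some phases $\theta(q)$; the phases $\chi_D$ enter here and are controlled by $\chi_D\circ T^{-1}=\chi_D$. Taking inverses and adjoints shows that $H_T^*H_T$ commutes with every $\mathsf W_q$. This uses $\mathsf W_q^*=\mathsf W_{-q}$ up to a phase and the fact that $\theta$ has modulus one. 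By irreducibility of the finite Weyl (Pauli) representation on $\mathcal H_{\mathbf d}$ and Schur's lemma, $H_T^*H_T=\kappa I$ with $\kappa>0$.

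Finally, I will fix $\kappa=1$ by computing with a single convenient vector. One option is a trace or averaging argument: both $H_T$ and $H_{T^{-1}}$ scale norms by a constant, and their composition is a phase. The cleaner route, which I would try first, is to apply the previous paragraph to $U_T^*$. Since $U_T^*$ is unitary on $L^2$ and the Zak transform $\mathcal Z_D$ is unitary on $L^2$, I will compute $\kappa$ via the Plancherel identity on $L^2$ test functions. In that computation the constant factors cancel, giving $\kappa=1$, and hence $H_T$ is unitary.
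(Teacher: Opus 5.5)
Your argument has a genuine gap at the last step: the normalization $\kappa=1$ is not established. Everything before that is sound. Bijectivity follows from Proposition~\ref{prop:metaplectic-code-transport} and the Zak characterization. The covariance $H_T\mathsf W_qH_T^{-1}=\theta(q)\mathsf W_{\tau(q)}$ holds with $|\theta(q)|=1$. Your adjoint/inverse manipulation does then give $H_T^*H_T=\kappa I$, precisely because the phases are unimodular.

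However, everything proved up to that point holds equally for $tH_T$ with any $t>0$. So $\kappa=1$ is the entire content of the lemma beyond ``Clifford up to a scalar'', and neither of your suggestions delivers it:
\begin{itemize}
\item The composition argument, $H_{T^{-1}}H_T\in\mathbb T\cdot I$, only yields $\kappa_T\kappa_{T^{-1}}=1$.
\item The Plancherel idea is not carried out, and it cannot work as stated. $L^2$-unitarity of $U_T$ and $\mathcal Z_D$ is an integrated statement over the syndrome torus. Your identity $H_T^*\mathcal Z_Dk(0)=\mathcal Z_D(U_T^*k)(0)$, for $k\in M^1(\mathbb R^n)$, concerns only the fibre over the single point $0$, a null set. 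Nothing in the proposal transfers information from one to the other.
\end{itemize}
The paper bridges exactly this gap with the decomposable-operator theorem. It shows that $\mathcal Z_DU_T\mathcal Z_D^{-1}$ acts as $F\mapsto R_T(\cdot)F(T^{-1}\cdot)$ with $R_T(z)$ unitary almost everywhere, and then reads off $H_T=R_{T^{-1}}(0)^*$.

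Within your own framework, the missing ingredient is the Weyl covariance of the vector Zak transform at arbitrary base points. A direct computation from \eqref{eq:vector-valued-zak} gives $\mathcal Z_D(\rho(w)f)(z)=\phi(w,z)\,\mathcal Z_Df(z-w)$ with $|\phi(w,z)|=1$. Combine this with $\rho(-z)U_T^*=U_T^*\rho(-Tz)$, your fibre identity, and $H_TH_T^*=\kappa I$. Then for every $h\in M^1(\mathbb R^n)$ and every $z$,
\[
\|\mathcal Z_D(U_T^*h)(z)\|=\|\mathcal Z_D(U_T^*\rho(-Tz)h)(0)\|=\|H_T^*\,\mathcal Z_D(\rho(-Tz)h)(0)\|=\sqrt\kappa\,\|\mathcal Z_Dh(Tz)\|.
\]
Now square and integrate over $Q_D^\circ\times Q_D^\circ$. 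Use the $L^2$-unitarity of $\mathcal Z_D$ and $U_T$, the $\Lambda_D^\circ$-periodicity of $z\mapsto\|\mathcal Z_Dh(z)\|$ (the transition operators are unitary), and the invariance of Haar measure on the syndrome torus under $T$. This gives $\|h\|_2^2=\kappa\|h\|_2^2$, hence $\kappa=1$.

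With this step added your route is complete. It also has an advantage over the paper's: it does not need continuity of $R_T$ at the origin, which the paper asserts only ``by standard arguments''.
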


\begin{proof}
Metaplectic covariance and preservation of the stabilizer phase imply
that \(U_T\) preserves \(\mathcal C_{\Lambda_D}\); see
Proposition~\ref{prop:metaplectic-code-transport}.  

Define the unitary operator
\[V_T:=\mathcal Z_D U_T \mathcal Z_D^{-1}:L^2\bigl(
\mathbb T^{\textrm{syn}}_D;
\mathcal H_{\mathbf d})\to L^2\bigl(
\mathbb T^{\textrm{syn}}_D;
\mathcal H_{\mathbf d})\,.\]
For \(\lambda\in\Lambda_D\), set $\chi_\lambda(z):=e^{-2\pi i\sigma(\lambda,z)}$.
By Lemma~\ref{lem:stabilizer-zak-covariance},
\[
\mathcal Z_DS_\lambda\mathcal Z_D^{-1}
=
M_{\chi_\lambda}.
\]
Since \(U_TS_\lambda U_T^*=S_{T\lambda}\), we obtain, for every
\(F\in
L^2(\mathbb T_D^{\mathrm{syn}};\mathcal H_{\mathbf d})\),
\begin{align*}
V_TM_{\chi_\lambda}F
&=
\mathcal Z_DU_TS_\lambda\mathcal Z_D^{-1}F
\\
&=
\mathcal Z_DS_{T\lambda}U_T\mathcal Z_D^{-1}F
\\
&=
M_{\chi_{T\lambda}}V_TF.
\end{align*}
Because \(T\) is symplectic,
\[
\chi_{T\lambda}(z)
=
e^{-2\pi i\sigma(T\lambda,z)}
=
e^{-2\pi i\sigma(\lambda,T^{-1}z)}
=
\chi_\lambda(T^{-1}z).
\]
Therefore,
\[
V_TM_{\chi_\lambda}
=
M_{\chi_\lambda\circ T^{-1}}V_T,
\qquad \lambda\in\Lambda_D.
\]

The characters
\(\{\chi_\lambda:\lambda\in\Lambda_D\}\) form the Pontryagin dual of
\(\mathbb T_D^{\mathrm{syn}}\). By Fourier approximation, the
preceding identity extends first to every
\(f\in C(\mathbb T_D^{\mathrm{syn}})\), and then, by weak-$*$
density and normality, to every
\(f\in L^\infty(\mathbb T_D^{\mathrm{syn}})\):
\begin{equation}
\label{eq:VT-multiplier-covariance}
V_TM_f
=
M_{f\circ T^{-1}}V_T.
\end{equation}

Let $(A_TF)(z):=F(T^{-1}z)$. Since \(T\) preserves Haar measure on
\(\mathbb T_D^{\mathrm{syn}}\), \(A_T\) is unitary, and
\[
A_TM_f=M_{f\circ T^{-1}}A_T.
\]
It follows from
\eqref{eq:VT-multiplier-covariance} that
\[
W_T:=V_TA_T^*
\]
commutes with every multiplication operator \(M_f\),
\(f\in L^\infty(\mathbb T_D^{\mathrm{syn}})\).

By the characterization of decomposable operators as the commutant
of the diagonal algebra
\cite[Theorem~14.1.10]{KadisonRingroseII}, there exists an
essentially bounded measurable field
\[
R_T:
\mathbb T_D^{\mathrm{syn}}
\longrightarrow
B(\mathcal H_{\mathbf d})
\]
such that
\[
(W_TF)(z)=R_T(z)F(z)
\]
for almost every \(z\). Since \(W_T\) is unitary,
\[
R_T(z)\in\mathcal U(\mathcal H_{\mathbf d})
\]
for almost every \(z\). Consequently,
\begin{equation}
\label{eq:VT-weighted-composition}
(V_TF)(z)
=
R_T(z)F(T^{-1}z)
\end{equation}
for almost every \(z\). By standard arguments  one can show that \(R_T\) has a continuous representative in a
neighbourhood of the origin.  In particular, \(R_T(0)\) is well defined and unitary.

Let \(c\in\mathcal H_{\mathbf d}\) and
\(h\in M^1(\mathbb R^n)\). By the vector-valued Zak duality  \eqref{eq:vector-zak-duality},
\begin{align*}
\langle U_T\psi_c,h\rangle
&=
\langle\psi_c,U_T^*h\rangle
\\
&=
\left\langle
\mathcal Z_D\psi_c,
\mathcal Z_D(U_T^*h)
\right\rangle
\\
&=
\left\langle
c\,\delta_0,
V_{T^{-1}}(\mathcal Z_Dh)
\right\rangle
\\
&=
\left\langle
c,
R_{T^{-1}}(0)\mathcal Z_Dh(0)
\right\rangle
\\
&=
\left\langle
R_{T^{-1}}(0)^*c,
\mathcal Z_Dh(0)
\right\rangle
\\
&=
\left\langle
\psi_{R_{T^{-1}}(0)^*c},
h
\right\rangle.
\end{align*}
Since this holds for every \(h\in M^1(\mathbb R^n)\), we conclude
that
\[
U_T\psi_c
=
\psi_{R_{T^{-1}}(0)^*c}.
\]
Equivalently,
\[
\mathcal Z_D(U_T\psi_c)
=
R_{T^{-1}}(0)^*c\,\delta_0.
\]
Therefore, the induced logical operator is $H_T=R_{T^{-1}}(0)^*$.
Since \(R_{T^{-1}}(0)\) is unitary, \(H_T\) is unitary.
\end{proof}

We now compare this action with the logical Pauli operators. Recall that
\[
    a_i=(D^{-1/2}\varepsilon_i,0),
    \qquad
    b_i=(0,D^{-1/2}\varepsilon_i),
\]
act on the logical fibre as \(X_i\) and \(Z_i\), respectively. Since
\(T^{-1}a_i,T^{-1}b_i\in\Lambda_D^\circ\), define
\[
    q_i=[T^{-1}a_i],
    \qquad
    p_i=[T^{-1}b_i]
    \qquad\text{in }K_D,
\]
and choose the corresponding representatives so that
\[
    T^{-1}a_i=\nu_{q_i}+\theta_i,
    \qquad
    T^{-1}b_i=\nu_{p_i}+\phi_i,
    \qquad
    \theta_i,\phi_i\in\Lambda_D.
\]
Using  \eqref{eq:weyl-commutation_0} together with the stabilizer relation we obtain
\begin{align*}
    \rho(T^{-1}a_i)\psi_c
    &=
    e^{\pi i\sigma(\nu_{q_i},\theta_i)}
    \overline{\chi_D(\theta_i)}
    \rho(\nu_{q_i})\psi_c,\\
    \rho(T^{-1}b_i)\psi_c
    &=
    e^{\pi i\sigma(\nu_{p_i},\phi_i)}
    \overline{\chi_D(\phi_i)}
    \rho(\nu_{p_i})\psi_c.
\end{align*}
Hence
\[
   c_{\rho(T^{-1}a_i)\psi_c}
    =
    \zeta_i\,\mathsf W_{q_i}c,
    \qquad
    c_{\rho(T^{-1}b_i)\psi_c}
    =
    \xi_i\,\mathsf W_{p_i}c,
\]
where
\[
    \zeta_i
    :=
    e^{\pi i\sigma(\nu_{q_i},\theta_i)}
    \overline{\chi_D(\theta_i)},
    \qquad
    \xi_i
    :=
    e^{\pi i\sigma(\nu_{p_i},\phi_i)}
    \overline{\chi_D(\phi_i)}.
\]
Using \eqref{eq:metaplectic-covariance} 
we have that 
\begin{align*}
    X_iH_Tc
    &=
    c_{\rho(a_i)U_T\psi_c}
     =
    c_{U_T\rho(T^{-1}a_i)\psi_c}
     =
    \zeta_i H_T\mathsf W_{q_i}c,\\
    Z_iH_Tc
    &=
    c_{\rho(b_i)U_T\psi_c}
     =
    c_{U_T\rho(T^{-1}b_i)\psi_c}
     =
    \xi_i H_T\mathsf W_{p_i}c.
\end{align*}
Equivalently,
\begin{equation*}\label{phases_clifford}
H_T^*X_iH_T
    =
    \zeta_i\,\mathsf W_{q_i},
    \qquad
    H_T^*Z_iH_T
    =
    \xi_i\,\mathsf W_{p_i}.    
\end{equation*}
Thus, conjugation by \(H_T\) preserves the logical Pauli group up to
scalar phases.

\subsubsection{Example: The square GKP qubit and the Hadamard gate}

We now consider the single-mode square GKP lattice
\[
\Lambda_2
=
\sqrt2\,\mathbb Z\times\sqrt2\,\mathbb Z
 \qquad \text{with} \qquad \Lambda_2^\circ
=
\frac1{\sqrt2}\mathbb Z
\times
\frac1{\sqrt2}\mathbb Z.
\] 
The corresponding logical Hilbert space is $\mathcal H_{\mathrm{log}}\cong\mathbb C^2$, and the  canonical stabilizer phase is trivial. The logical Pauli generators are represented by the adjoint-lattice
vectors
\[
\lambda_X
=
\left(\frac1{\sqrt2},0\right),
\qquad
\lambda_Z
=
\left(0,\frac1{\sqrt2}\right).
\]
Under the Zak-transform identification, they act as
\[
X=
\begin{pmatrix}
0&1\\
1&0
\end{pmatrix},
\qquad
Z=
\begin{pmatrix}
1&0\\
0&-1
\end{pmatrix}.
\]
Consider the symplectic rotation
\[
J=
\begin{pmatrix}
0&1\\
-1&0
\end{pmatrix}.
\]
It preserves the square lattice $
J\Lambda_2=\Lambda_2$,
and trivially the stabilizer phase. Moreover,
\[
J\lambda_X=-\lambda_Z,
\qquad
J\lambda_Z=\lambda_X.
\]
The Fourier-transform  $\mathcal F$ is a metaplectic operator associated with \(J\), that is,
\[
\mathcal F\rho(z)\mathcal F^*
=
\rho(Jz).
\]
Consequently, the logical unitary \(H_J\) induced by \(\mathcal F\)
satisfies
$$
H_JXH_J^*=
Z^{-1}=Z,\qquad \text{and}\qquad
H_JZH_J^*=X.$$
The unitary exchanging \(X\) and \(Z\) is, up to a global phase, the
Hadamard gate
\[
H
=
\frac1{\sqrt2}
\begin{pmatrix}
1&1\\
1&-1
\end{pmatrix}.
\]
Therefore, under the identification $\mathcal C_{\Lambda_2}\cong\mathbb C^2$,
the Fourier transform acts as the logical Hadamard gate.
Equivalently, if $\mathcal Z_2\psi_c=c\,\delta_0$ for $c\in\mathbb C^2$, then
\[
\mathcal Z_2(\mathcal F\psi_c)
=Hc\,\delta_0.
\]
Thus, the quarter-turn symplectic symmetry of the square GKP lattice is
realized physically by the Fourier transform and acts on the encoded
qubit as the Hadamard gate.

\subsection{Displacement errors and the syndrome torus}

Let $0\ne\psi\in\mathcal C_{\Lambda_D,\chi_D}$, and let $\varepsilon=(\varepsilon_x,\varepsilon_\omega)
\in\mathbb R^{2n}$ be an arbitrary displacement error. The Zak covariance relation
shows that $\rho(\varepsilon)$ translates the support of the Zak
transform from the origin to the class of $\varepsilon$ in
$\mathbb T_D^{\mathrm{syn}}$, together with a phase and a corresponding
transformation of the fibre vector. Viewed as a distributional
section over the syndrome torus, we have
\[
\operatorname{supp}
\mathcal Z_D\bigl(\rho(\varepsilon)\psi\bigr)
=
\{[\varepsilon]\}.
\]
Thus, the base point
\[
[\varepsilon]\in
\mathbb T_D^{\mathrm{syn}}
=
\mathbb R^{2n}/\Lambda_D^\circ
\]
records the displacement syndrome, whereas the vector in the fibre
carries the logical information.

The displaced space is itself a GKP code for the same lattice,
with a possibly different stabilizer phase. Indeed, set
\[
\chi_D^\varepsilon(\lambda)
:=
e^{2\pi i\sigma(\lambda,\varepsilon)}\chi_D(\lambda),
\qquad \lambda\in\Lambda_D.
\]
The Weyl commutation relations imply
\[
\rho(\varepsilon)\mathcal C_{\Lambda_D,\chi_D}
=
\mathcal C_{\Lambda_D,\chi_D^\varepsilon}.
\]
The identification of the canonical code with the logical fibre
at the origin depends on the specific phase $\chi_D$.
For the modified stabilizers
\[
S_\lambda^\varepsilon
:=
\chi_D^\varepsilon(\lambda)\rho(\lambda),
\]
Lemma~4.2 gives, for $f\in M^\infty(\mathbb R^n)$,
\[
\mathcal Z_D(S_\lambda^\varepsilon f)(z)
=
e^{-2\pi i\sigma(\lambda,z-\varepsilon)}
\mathcal Z_Df(z).
\]
Consequently, nonzero elements of their joint $+1$ eigenspace
have Zak support at $[\varepsilon]$. Thus, for the fixed Zak
transform $\mathcal Z_D$, the stabilizer phase selects the syndrome
fibre representing the code.

In particular, $\chi_D^\varepsilon=\chi_D$ precisely when
$\varepsilon\in\Lambda_D^\circ$. Such displacements preserve
the canonical code space and act on its logical fibre.
For a general $\varepsilon$, applying $\rho(-\varepsilon)$
restores the canonical phase and returns the support to the origin.

\subsection{Construction of ideal codes from multi-window Gabor frames}
\label{Zak_matrix}
Multi-window Gabor frames provide explicit ideal codewords.

Let
$\mathbf g=(g_1,\ldots,g_m)
\in M^1(\mathbb R^n)\otimes\mathbb C^m$
be such that $\mathcal G(\mathbf g,\Lambda_D)$ 
is a multi-window Gabor frame. Let
\[
G(x,\omega)
=
\bigl(
\mathcal Z_Dg_1(x,\omega),\ldots,
\mathcal Z_Dg_m(x,\omega)
\bigr)
\]
be the associated \(d\times m\) Zak matrix.

The standard fibreization criterion for rational Gabor systems implies
that \(\mathcal G(\mathbf g,\Lambda_D)\) is a frame if and only if the
matrices \(G(x,\omega)\) have full row rank with uniform upper and lower
singular-value bounds almost everywhere; see
\cite[Chapter~8]{GrochenigBook}.  In the present setting the Zak
matrix is continuous, so the lower bound extends from almost every point
to every point of the compact base torus. Consequently, there exists \(A>0\) such that
\[
G(x,\omega)G(x,\omega)^*
\geq A I_{\mathcal H_{\mathbf d}}
\]
for every \((x,\omega)\in\T_D^{\mathrm{syn}}\). In particular,
\[
G(0,0):
\mathbb C^m\longrightarrow\mathcal H_{\mathbf d}
\]
is surjective.

Given \(\alpha=(\alpha_1,\ldots,\alpha_m)\in\mathbb C^m\), define
\[
g_\alpha
:=
\sum_{j=1}^m\alpha_jg_j
\in M^1(\mathbb R^n)
\]
and consider the stabilizer orbit sum
\begin{equation}
\label{eq:gabor-codeword}
\Psi_{\alpha,\mathbf g}
:=
\sum_{\lambda\in\Lambda_D}
S_\lambda g_\alpha
=
\sum_{\lambda\in\Lambda_D}
\chi_D(\lambda)\rho(\lambda)g_\alpha.
\end{equation}
The series in \eqref{eq:gabor-codeword} is understood in the weak-*
topology of $M^\infty(\mathbb R^n)$.
Indeed, for every \(h\in M^1(\mathbb R^n)\),
\[
\sum_{\lambda\in\Lambda_D}
\left|
\left\langle
\rho(\lambda)g_\alpha,h
\right\rangle
\right|
<\infty,
\]
by \eqref{weak_conv}.
Since \(\lambda\mapsto S_\lambda\) is a representation of
\(\Lambda_D\), reindexing the series gives
\[
S_\mu\Psi_{\alpha,\mathbf g}=\Psi_{\alpha,\mathbf g},
\qquad
\mu\in\Lambda_D.
\]
Hence $\Psi_{\alpha,\mathbf g}\in\mathcal C_{\Lambda_D}$. Now, using \eqref{eq:stabilizer-zak-character}, we obtain
\begin{align*}
\mathcal Z_D\Psi_{\alpha,\mathbf g}(z)
&=
\sum_{\lambda\in\Lambda_D}
e^{-2\pi i\sigma(\lambda,z)}
\mathcal Z_Dg_\alpha(z)
\nonumber\\
&=
\left(
\sum_{\lambda\in\Lambda_D}
e^{-2\pi i\sigma(\lambda,z)}
\right)
\mathcal Z_Dg_\alpha(z).
\end{align*}
The Pontryagin dual of $\T_D^{\mathrm{syn}}
=
\mathbb R^{2n}/\Lambda_D^\circ$
is naturally identified with \(\Lambda_D\) through the characters
\[
\chi_\lambda(z)
=
e^{-2\pi i\sigma(\lambda,z)},
\qquad
\lambda\in\Lambda_D.
\]
With Lebesgue measure in the
$x$-variable on $Q_D^\circ$ and the normalized frequency measure, the total mass of the
vector-valued Zak base is $\mu_D(\T_D^{\mathrm{syn}})=d^{-1/2}$.
For the Dirac distribution defined by evaluation at the identity, Fourier
inversion therefore gives
\[
\sum_{\lambda\in\Lambda_D}
e^{-2\pi i\sigma(\lambda,z)}
=
d^{-1/2}\delta_0(z)
\]
in the distributional sense. Consequently,
\begin{align*}
\mathcal Z_D\Psi_{\alpha,\mathbf g}(z)
&=
d^{-1/2}\delta_0(z)\,\mathcal Z_Dg_\alpha(z)
\nonumber\\
&=
d^{-1/2}\mathcal Z_Dg_\alpha(0,0)\,\delta_0(z)
\nonumber\\
&=
d^{-1/2}G(0,0)\alpha\,\delta_0(z).
\end{align*}
Since \(G(0,0)\) is surjective, for every $c\in\mathcal H_{\mathbf d}$ 
there exists \(\alpha\in\mathbb C^m\) such that $G(0,0)\alpha=c$.
Then the  corresponding distribution $\Psi_{\alpha,\mathbf g}$  satisfies
\[
\mathcal Z_D\Psi_{\alpha,\mathbf g}=d^{-1/2}c\,\delta_0.
\]
Rescaling $\alpha$ by $d^{1/2}$ produces the prescribed logical
vector $c$.

\subsection{Zak transform for symplectically integral lattices}
\label{subsec:general-zak}

We next explain how the logical Zak description for the rectangular
normal form $\Lambda_D$ induces a logical description for an arbitrary
symplectically integral lattice, and how the corresponding logical
displacement operators are identified.

By Proposition~\ref{prop:normal-form}, choose $T\in\operatorname{Sp}(2n,\mathbb R)$ with  $\Lambda=T\Lambda_D$,
and let $U_T$ be a metaplectic lift of $T$.  Applying Proposition~\ref{prop:metaplectic-code-transport} to $T^{-1}$
gives
\[
    U_T^{*}\mathcal C_{\Lambda,\chi}
    =
    \mathcal C_{\Lambda_D,\chi^T_D}.
\]
Thus, every ideal codeword for the general lattice $(\Lambda,\chi)$ is
carried, by a metaplectic change of phase-space coordinates, to an ideal
codeword for the rectangular lattice $\Lambda_D$.

If \(\chi_D^T\) does not coincide with the canonical rectangular
phase \(\chi_D\), their quotient
\[
\eta(\lambda)
:=
\chi_D(\lambda)\bigl(\chi_D^T(\lambda)\bigr)^{-1}
\]
is a character of \(\Lambda_D\). By nondegeneracy of the symplectic
pairing, there exists \(z_0\in\mathbb R^{2n}\), unique modulo
\(\Lambda_D^\circ\), such that
\[
\eta(\lambda)
=
e^{-2\pi i\sigma(\lambda,z_0)},
\qquad \lambda\in\Lambda_D.
\]
The Weyl displacement \(\rho(-z_0)\) then maps the
\(\chi_D^T\)-sector onto the canonical \(\chi_D\)-sector. Indeed, if
\(\psi\in\mathcal C_{\Lambda_D,\chi_D^T}\), then
\[
\rho(\lambda)\psi
=
\bigl(\chi_D^T(\lambda)\bigr)^{-1}\psi,
\]
and hence
\[
S_\lambda^{\chi_D}\rho(-z_0)\psi
=
\chi_D(\lambda)
e^{2\pi i\sigma(\lambda,z_0)}
\bigl(\chi_D^T(\lambda)\bigr)^{-1}
\rho(-z_0)\psi
=
\rho(-z_0)\psi.
\]
Consequently,
\[
U_{\Lambda,\chi}
:=
\rho(-z_0)U_T^*
\]
defines an isomorphism
\[
U_{\Lambda,\chi}:
\mathcal C_{\Lambda,\chi}
\longrightarrow
\mathcal C_{\Lambda_D,\chi_D}.
\]

We use this map to define the logical vector of
$\psi\in\mathcal C_{\Lambda,\chi}$: it is the unique vector
$c_\psi\in\mathcal H_{\mathbf d}$ such that
\[
    \mathcal Z_D
    \bigl(\mathcal U_{\Lambda,\chi}\psi\bigr)
    =
    c_\psi\,\delta_0.
\]
By Proposition~\ref{prop:Zak-characterization}, this yields the
identifications $\mathcal C_{\Lambda,\chi}
    \cong
    \mathcal H_{\mathbf d}
    \cong
    \mathbb C^{d_\Lambda}$. The same transport also identifies the logical displacement operators.
Since $T$ is symplectic, then $\Lambda^\circ
    =
    T\Lambda_D^\circ$,
and therefore $T^{-1}$ induces an isomorphism
\[
    \Lambda^\circ/\Lambda
    \longrightarrow
    \Lambda_D^\circ/\Lambda_D,
    \qquad
    [\mu]\longmapsto[T^{-1}\mu].
\]
Let $\mu\in\Lambda^\circ$, and write
\[
    T^{-1}\mu
    =
    \nu_{q_\mu}+\theta_\mu,
    \qquad
    \theta_\mu\in\Lambda_D,
\]
where $\nu_{q_\mu}\in\Lambda_D^\circ$ is the chosen representative of
the class
\[
    q_\mu=[T^{-1}\mu]
    \in\Lambda_D^\circ/\Lambda_D.
\]
If $\mathcal U_{\Lambda,\chi}$ contains the additional Weyl correction
described above, its commutation with $\rho(\mu)$ contributes only a
scalar phase. Hence there exists $\beta_\mu\in\mathbb T$ such that
\[
    \mathcal U_{\Lambda,\chi}\rho(\mu)\psi
    =
    \beta_\mu\,
    \rho(T^{-1}\mu)
    \mathcal U_{\Lambda,\chi}\psi.
\]
Using \eqref{eq:weyl-commutation_0},
\[
    \rho(T^{-1}\mu)
    =
    e^{\pi i\sigma(\nu_{q_\mu},\theta_\mu)}
    \rho(\nu_{q_\mu})\rho(\theta_\mu).
\]
Since
$\mathcal U_{\Lambda,\chi}\psi\in
\mathcal C_{\Lambda_D,\chi_D}$,
we have
\[
    \rho(\theta_\mu)
    \mathcal U_{\Lambda,\chi}\psi
    =
    \overline{\chi_D(\theta_\mu)}
    \mathcal U_{\Lambda,\chi}\psi.
\]
Therefore
\begin{align*}
\mathcal Z_D
\bigl(
    \mathcal U_{\Lambda,\chi}\rho(\mu)\psi
\bigr)
&=
\beta_\mu
e^{\pi i\sigma(\nu_{q_\mu},\theta_\mu)}
\overline{\chi_D(\theta_\mu)}
\mathcal Z_D
\bigl(
    \rho(\nu_{q_\mu})
    \mathcal U_{\Lambda,\chi}\psi
\bigr)
\\
&=
\beta_\mu
e^{\pi i\sigma(\nu_{q_\mu},\theta_\mu)}
\overline{\chi_D(\theta_\mu)}
\bigl(
    \mathsf W_{q_\mu}c_\psi
\bigr)\delta_0.
\end{align*}
Thus, up to a phase, the action of the physical displacement
$\rho(\mu)$ on the logical information of $\psi$ is given by the finite
Weyl operator $\mathsf W_{q_\mu}$, where
\[
    q_\mu=[T^{-1}\mu]
    \in\Lambda_D^\circ/\Lambda_D.
\]

\section{Adjoint-lattice coefficients and logical reconstruction}
\label{sec:finite-block}
The preceding section described the ideal GKP code
\[
\mathcal C_{\Lambda,\chi}
\subset M^\infty(\mathbb R^n)
\]
through the vector-valued Zak transform and identified it with the
finite-dimensional logical space \(\mathcal H_{\mathbf d}\). We now
introduce a complementary description based on stable coefficient
expansions indexed by the adjoint lattice \(\Lambda^\circ\).

The two constructions should be distinguished. The ideal code consists
of distributional stabilizer eigenvectors and is finite-dimensional. By
contrast, the model spaces introduced below are infinite-dimensional
complemented subspaces of vector-valued modulation spaces. Their purpose
is to provide stable and unique adjoint-lattice coordinates in which the
stabilizer, logical, and displacement actions can be analyzed.

\subsection{The projected model spaces}
\label{subsec:projected-model-space}

Let $\Lambda$ be a lattice and let
$\mathbf g=(g_1,\ldots,g_m)
\in M^1(\mathbb R^n)\otimes\mathbb C^m
$ generate a multi-window Gabor frame
\[
\mathcal G(\mathbf g,\Lambda)
=
\{\rho(\lambda)g_r:
\lambda\in\Lambda,\ 1\leq r\leq m\}
\]
for \(L^2(\mathbb R^n)\). Let $\boldsymbol\gamma=(\gamma_1,\ldots,\gamma_m)
\in M^1(\mathbb R^n)\otimes\mathbb C^m$  be its canonical dual family. 

For \(1\leq p\leq\infty\), we  define the synthesis operator
\[
D_{\boldsymbol\gamma,\Lambda^\circ}a
:=
\sum_{\mu\in\Lambda^\circ}
a_\mu\widehat\rho(\mu)\boldsymbol\gamma,
\qquad
a=(a_\mu)_{\mu\in\Lambda^\circ}
\in\ell^p(\Lambda^\circ).
\]
For \(p<\infty\), the series converges in $M^p(\mathbb R^n)\otimes\mathbb C^m$.
For \(p=\infty\), it converges in the weak-* topology of $M^\infty(\mathbb R^n)\otimes\mathbb C^m$ .

\begin{definition}
For \(1\leq p\leq\infty\), the \emph{adjoint-lattice model space}
associated with \(\boldsymbol\gamma\in M^1(\mathbb R^n)\otimes\mathbb C^m\) is
\[
V_{\boldsymbol\gamma}^p
:=
D_{\boldsymbol\gamma,\Lambda^\circ}
(\ell^p(\Lambda^\circ))
\subset
M^p(\mathbb R^n)\otimes\mathbb C^m.
\]
\end{definition}

By (\ref{eq:model-space-stability})  every element of \(V_{\boldsymbol\gamma}^p\) has a unique and stable
coefficient expansion indexed by \(\Lambda^\circ\).

\subsection{Coefficient extraction and the model-space projection}
\label{subsec:model-projection}

We now identify the projection onto the adjoint-lattice model space with the
Heisenberg-module idempotent introduced in Section~\ref{sec:gabor}. Recall
that
\[
\mathsf P_{\boldsymbol \gamma, \mathbf g}
:=
\bigl({}_\Lambda\!\langle \gamma_r,g_s\rangle\bigr)_{r,s=1}^m
\in M_m(A_\Lambda)
\]
is an idempotent.

The analytic point needed here is that the fundamental identity of Gabor
analysis extends beyond the Hilbert-space setting. Since
$\mathbf g,\boldsymbol\gamma\in M^1(\mathbb R^n)\otimes\mathbb C^m$, the
associativity identity
\begin{equation}
\label{eq:module-associativity-Minf}
{}_\Lambda\!\langle \gamma_r,g_s\rangle\, f
=
\gamma_r\,\langle g_s,f\rangle_{\Lambda^\circ}
\end{equation}
continues to hold for $f\in M^\infty(\mathbb R^n)$, with the right-hand side
interpreted in the natural weak-* sense; see \cite{FeichtingerLuefFIGA}. More generally,
the corresponding analysis and synthesis maps are bounded on the usual
modulation-space scale.

For $1\leq p\leq\infty$, let $\mathbf f=(f_1,\ldots,f_m)
\in M^p(\mathbb R^n)\otimes\mathbb C^m$.  The matrix idempotent acts componentwise by
\begin{align*}
(\mathsf P_{\boldsymbol \gamma, \mathbf g}(\mathbf f))_r
&=
\sum_{s=1}^m
{}_\Lambda\!\langle\gamma_r,g_s\rangle f_s
\nonumber=
\sum_{s=1}^m
\gamma_r\langle g_s,f_s\rangle_{\Lambda^\circ}.
\end{align*}
Expanding the right $\Lambda^\circ$-valued inner product and reindexing the
adjoint lattice gives
\begin{equation}
\label{eq:module-idempotent-expansion}
(\mathsf P_{\boldsymbol \gamma, \mathbf g}(\mathbf f))_r
=
\sum_{\mu\in\Lambda^\circ}
\left(
\frac{1}{\operatorname{vol}(\Lambda)}
\sum_{s=1}^m
\langle f_s,\rho(\mu)g_s\rangle
\right)
\rho(\mu)\gamma_r.
\end{equation}

This formula motivates the adjoint-lattice coefficients
\begin{equation}
\label{eq:adjoint-lattice-coefficients}
a^{\mathbf g}_\mu(\mathbf f)
:=
\frac{1}{\operatorname{vol}(\Lambda)}
\sum_{s=1}^m
\langle f_s,\rho(\mu)g_s\rangle,
\qquad
\mu\in\Lambda^\circ.
\end{equation}
To simplify notation, we will denote it just by $a_\mu(\mathbf f)$ when the ${\mathbf g}$ is understood.
Accordingly, define
\begin{equation*}
\mathsf P_{\boldsymbol \gamma, \mathbf g}(\mathbf f)
:=
\sum_{\mu\in\Lambda^\circ}
a_\mu(\mathbf f)\,
\widehat\rho(\mu)\boldsymbol\gamma.
\end{equation*}
By \eqref{eq:module-idempotent-expansion}, this operator is exactly the action
of the module idempotent $\mathsf P_{\boldsymbol \gamma, \mathbf g}$ on
$M^p(\mathbb R^n)\otimes\mathbb C^m$. In particular, $\mathsf P_{\boldsymbol \gamma, \mathbf g}^2=\mathsf P_{\boldsymbol \gamma, \mathbf g}$,
and we identify its range
\[
\mathsf P_{\boldsymbol \gamma, \mathbf g}(M^p(\mathbb R^n)\otimes\mathbb C^m)
=
V_{\boldsymbol\gamma}^p.
\]
Thus, the stable adjoint-lattice coefficient model is precisely the range of
the Heisenberg-module idempotent.

For every $\mathbf f\in V_{\boldsymbol\gamma}^p$, the coefficient sequence in
\eqref{eq:adjoint-lattice-coefficients} is therefore the unique  $a(\mathbf f)\in \ell^{p}(\Lambda^\circ)$
satisfying
\[
\mathbf f
=
\sum_{\mu\in\Lambda^\circ}
a_\mu(\mathbf f)\,
\widehat\rho(\mu)\boldsymbol\gamma.
\]

\subsection{Adjoint-lattice covariance}
\label{subsec:adjoint-lattice-covariance}

The model space is invariant under the adjoint-lattice Weyl action. The
corresponding action on the coefficient sequence is a twisted
translation.

\begin{proposition}
\label{prop:coefficient-covariance}
Let $1\leq p\leq\infty$. Then for every \(\nu,\mu\in\Lambda^\circ\) 
\begin{equation}
\label{eq:coefficient-covariance}
a_\mu\bigl(\widehat\rho(\nu)\mathbf f\bigr)
=
e^{-\pi i\sigma(\nu,\mu)}
a_{\mu-\nu}(\mathbf f), \qquad \mathbf f 
\in M^p(\mathbb R^n)\otimes\mathbb C^m.
\end{equation}
\end{proposition}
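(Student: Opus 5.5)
The identity is a direct computation from the definition \eqref{eq:adjoint-lattice-coefficients}: move the Weyl operator \(\widehat\rho(\nu)\) from the signal onto the window side, then merge the two Weyl operators with the composition law \eqref{eq:weyl-commutation_0}. No frame or module structure is needed; only the Weyl calculus and the duality between \(M^p\) and \(M^1\) are used.

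First I would fix \(\nu,\mu\in\Lambda^\circ\) and \(\mathbf f=(f_1,\ldots,f_m)\in M^p(\mathbb R^n)\otimes\mathbb C^m\). By definition,
\[
a_\mu\bigl(\widehat\rho(\nu)\mathbf f\bigr)
=\frac{1}{\operatorname{vol}(\Lambda)}\sum_{s=1}^m
\langle \rho(\nu)f_s,\rho(\mu)g_s\rangle .
\]
Here each bracket is the \(M^p\)--\(M^1\) duality pairing (sesquilinear, with our convention), since \(\rho(\mu)g_s\in M^1(\mathbb R^n)\).

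The one point that needs care is the adjoint relation
\[
\langle\rho(\nu)f,h\rangle=\langle f,\rho(-\nu)h\rangle,
\qquad f\in M^p(\mathbb R^n),\ h\in M^1(\mathbb R^n).
\]
For \(f,h\in L^2\) this is just \(\rho(\nu)^*=\rho(-\nu)\). For \(p=\infty\) it is precisely how the Weyl action on \(M^\infty=(M^1)'\) is defined by duality, so the relation holds by definition there. In general it follows from weak-\(*\) density of \(M^1\) in \(M^p\) together with the weak-\(*\) continuity of \(\rho(\nu)\) recalled in Section~\ref{sec:gabor}. I expect this to be the only potential obstacle, and it is a mild one.

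Next I would combine the two Weyl operators using \eqref{eq:weyl-commutation_0}:
\[
\rho(-\nu)\rho(\mu)=e^{-\pi i\sigma(-\nu,\mu)}\rho(\mu-\nu)=e^{\pi i\sigma(\nu,\mu)}\rho(\mu-\nu).
\]
Because the pairing is conjugate-linear in its second argument, this scalar comes out of the bracket as \(e^{-\pi i\sigma(\nu,\mu)}\). Hence
\[
\langle \rho(\nu)f_s,\rho(\mu)g_s\rangle=e^{-\pi i\sigma(\nu,\mu)}\langle f_s,\rho(\mu-\nu)g_s\rangle .
\]

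Finally, \(\mu-\nu\in\Lambda^\circ\) because \(\Lambda^\circ\) is a group, so the coefficient \(a_{\mu-\nu}(\mathbf f)\) is defined. Summing over \(s\) and dividing by \(\operatorname{vol}(\Lambda)\) gives \eqref{eq:coefficient-covariance}.

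I would close with a remark that this also shows \(\widehat\rho(\nu)\) preserves \(V^p_{\boldsymbol\gamma}\), acting on coefficient sequences as a twisted translation, which is isometric on \(\ell^p(\Lambda^\circ)\).
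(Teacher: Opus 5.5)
Your proposal is correct and is exactly the paper's proof: move $\rho(\nu)$ to the window side as $\rho(-\nu)$, merge $\rho(-\nu)\rho(\mu)=e^{\pi i\sigma(\nu,\mu)}\rho(\mu-\nu)$ using \eqref{eq:weyl-commutation_0}, and pull the conjugated phase out of the second slot. Your check that the adjoint relation $\langle\rho(\nu)f,h\rangle=\langle f,\rho(-\nu)h\rangle$ holds for the $M^p$--$M^1$ pairing is a detail the paper leaves implicit, and your justification of it is sound.
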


\begin{proof}
We have 
\begin{align*}
a_\mu\bigl(\widehat\rho(\nu)\mathbf f\bigr)
&=
\frac{1}{\operatorname{vol}(\Lambda)}
\sum_{r=1}^m
\langle
\rho(\nu)f_r,\rho(\mu)g_r
\rangle
=
\frac{1}{\operatorname{vol}(\Lambda)}
\sum_{r=1}^m
\langle
f_r,\rho(-\nu)\rho(\mu)g_r
\rangle.
\end{align*}
Then by \eqref{eq:weyl-commutation_0} it follows the desired equality $
a_\mu\bigl(\widehat\rho(\nu)\mathbf f\bigr)
=
e^{-\pi i\sigma(\nu,\mu)}
a_{\mu-\nu}(\mathbf f)$.

\end{proof}

\begin{proposition}
\label{prop:model-projection-covariance}
Let $1\leq p\leq\infty$. For every \(\nu\in\Lambda^\circ\), $$\mathsf P_{\boldsymbol \gamma, \mathbf g}\widehat\rho(\nu)
=
\widehat\rho(\nu)\mathsf P_{\boldsymbol \gamma, \mathbf g}\,.$$
Consequently, $\widehat\rho(\nu)V_{\boldsymbol\gamma}^p
=
V_{\boldsymbol\gamma}^p$.
\end{proposition}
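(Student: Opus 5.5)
The plan is to compute both sides directly from the coefficient expansion
\[
\mathsf P_{\boldsymbol\gamma,\mathbf g}(\mathbf f)=\sum_{\mu\in\Lambda^\circ}a_\mu(\mathbf f)\,\widehat\rho(\mu)\boldsymbol\gamma
\]
and then compare them using the twisted covariance of Proposition~\ref{prop:coefficient-covariance}. Fix $\nu\in\Lambda^\circ$ and $\mathbf f\in M^p(\mathbb R^n)\otimes\mathbb C^m$.

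\emph{Left-hand side.} Applying Proposition~\ref{prop:coefficient-covariance} gives
\[
\mathsf P_{\boldsymbol\gamma,\mathbf g}\widehat\rho(\nu)\mathbf f=\sum_{\mu\in\Lambda^\circ}e^{-\pi i\sigma(\nu,\mu)}a_{\mu-\nu}(\mathbf f)\,\widehat\rho(\mu)\boldsymbol\gamma .
\]

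\emph{Right-hand side.} Here I move $\widehat\rho(\nu)$ inside the series. This is legitimate because $\widehat\rho(\nu)$ is bounded on $M^p\otimes\mathbb C^m$ and weak-$*$ continuous for $p=\infty$, so it commutes with the (norm or weak-$*$) convergent sum. Then apply \eqref{eq:weyl-commutation_0}:
\[
\widehat\rho(\nu)\widehat\rho(\mu')=e^{-\pi i\sigma(\nu,\mu')}\widehat\rho(\nu+\mu').
\]
This yields
\[
\widehat\rho(\nu)\mathsf P_{\boldsymbol\gamma,\mathbf g}\mathbf f=\sum_{\mu'}a_{\mu'}(\mathbf f)\,e^{-\pi i\sigma(\nu,\mu')}\widehat\rho(\nu+\mu')\boldsymbol\gamma .
\]

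\emph{Comparison.} Reindex the right-hand side by $\mu=\mu'+\nu$, which is a bijection of $\Lambda^\circ$. Since $\sigma(\nu,\mu-\nu)=\sigma(\nu,\mu)$ by bilinearity and $\sigma(\nu,\nu)=0$, the phase becomes $e^{-\pi i\sigma(\nu,\mu)}$. The two series then agree term by term. The reindexing is harmless: for $p<\infty$ the series converge unconditionally in norm, because the coefficients lie in $\ell^p$ and \eqref{eq:model-space-stability} holds. For $p=\infty$, pairing with any $h\in M^1\otimes\mathbb C^m$ produces absolutely convergent scalar series by \eqref{weak_conv}.

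\emph{Consequence.} From $\mathsf P\widehat\rho(\nu)=\widehat\rho(\nu)\mathsf P$ we get $\widehat\rho(\nu)V^p_{\boldsymbol\gamma}\subseteq V^p_{\boldsymbol\gamma}$, because $V^p_{\boldsymbol\gamma}$ is the range of $\mathsf P$. Applying the same inclusion to $-\nu\in\Lambda^\circ$ and using $\widehat\rho(\nu)\widehat\rho(-\nu)=I$ (since $\sigma(\nu,-\nu)=0$) gives the reverse inclusion, hence equality.

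The only step needing care is justifying the interchange of $\widehat\rho(\nu)$ with the sum and the reindexing in the weak-$*$ case $p=\infty$. I will handle it by testing against $M^1$ elements and using $\rho(\nu)^*=\rho(-\nu)$, which reduces everything to absolutely convergent scalar sums.
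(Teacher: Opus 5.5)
Your proposal is correct and is essentially the paper's argument: the paper also applies Proposition~\ref{prop:coefficient-covariance}, reindexes via $\mu=\kappa+\nu$ using $\sigma(\nu,\nu)=0$, and uses \eqref{eq:weyl-commutation_0} to pull $\widehat\rho(\nu)$ out of the series, though it chains these steps in one direction instead of expanding both sides and comparing. Your justification of the interchange and reindexing (norm convergence for $p<\infty$, testing against $M^1$ for $p=\infty$) and your derivation of $\widehat\rho(\nu)V^p_{\boldsymbol\gamma}=V^p_{\boldsymbol\gamma}$ via $-\nu$ fill in details the paper leaves implicit, and both are sound.
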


\begin{proof}
Using \eqref{eq:coefficient-covariance} and reindexing
\(\mu=\kappa+\nu\), we obtain
\begin{align*}
\mathsf P_{\boldsymbol \gamma, \mathbf g}(\widehat\rho(\nu)\mathbf f)
&=
\sum_{\mu\in\Lambda^\circ}
e^{-\pi i\sigma(\nu,\mu)}
a_{\mu-\nu}(\mathbf f)
\widehat\rho(\mu)\boldsymbol\gamma
\\
&=
\sum_{\kappa\in\Lambda^\circ}
e^{-\pi i\sigma(\nu,\kappa)}
a_\kappa(\mathbf f)
\widehat\rho(\nu+\kappa)\boldsymbol\gamma
\\
&=
\widehat\rho(\nu)
\sum_{\kappa\in\Lambda^\circ}
a_\kappa(\mathbf f)
\widehat\rho(\kappa)\boldsymbol\gamma
\\
&=
\widehat\rho(\nu)\mathsf P_{\boldsymbol \gamma, \mathbf g}(\mathbf f).
\end{align*}
In the second equality, we used
\(\sigma(\nu,\nu)=0\), and in the third we used \eqref{eq:weyl-commutation_0}.
\end{proof}

\subsection{Realization of ideal codewords in the model space}

For \(\alpha=(\alpha_1,\ldots,\alpha_m)\in\C^m\), define
\[
J_\alpha (f):=(\alpha_1f,\ldots,\alpha_mf)\in M^\infty(\mathbb R^n)\otimes\mathbb C^m,\qquad f\in M^\infty(\mathbb R^n) 
\]
and
\begin{equation*}
\Phi_{\gamma,\alpha}
:=\mathsf P_{\boldsymbol \gamma, \mathbf g}\circ J_\alpha:
M^\infty(\R^n)\longrightarrow V_\gamma^\infty.
\end{equation*}
Then  we have that 
\begin{align*}
    \Phi_{\boldsymbol \gamma,\alpha}(f) & = \mathsf P_{\boldsymbol \gamma, \mathbf g}(\alpha_1f,\ldots,\alpha_mf) \\
    & = \bigl({}_{\Lambda}\!\langle\gamma_r,g_s\rangle\bigr)_{r,s=1}^m(\alpha_1f,\ldots,\alpha_mf)^T \\
    & = \bigl(\sum_{s=1}^m {}_{\Lambda}\langle \gamma_r,g_s\rangle \alpha_s f\bigr)_{r=1}^m \\
    & = \bigl(\sum_{s=1}^m   \gamma_r\langle g_s, \alpha_s f\rangle_{\Lambda^\circ}\bigr)_{r=1}^m \\
     & = \bigl(   \gamma_r\langle \sum_{s=1}^m \overline{\alpha_s}g_s,  f\rangle_{\Lambda^\circ}\bigr)_{r=1}^m \,.
\end{align*}
In the last equality we used (\ref{eq:module-associativity-Minf}).
Now  put
\begin{equation*}
h_\alpha:=\sum_{s=1}^m\overline{\alpha_s}g_s,
\end{equation*}
then
\begin{align*}
    \Phi_{\boldsymbol \gamma,\alpha}(f) & = \bigl(    \gamma_r\langle h_\alpha,  f\rangle_{\Lambda^\circ}\bigr)_{r=1}^m \\
    & = \bigl( \frac1{\vol(\Lambda)}\sum_{\mu\in \Lambda^\circ} \langle f,\rho(\mu)^* h_\alpha\rangle\rho(\mu)^* \gamma_r \bigr)_{r=1}^m \\
    & = \sum_{\mu\in \Lambda^\circ} \frac1{\vol(\Lambda)} \langle f,\rho(-\mu) h_\alpha\rangle \widehat\rho(-\mu)\boldsymbol \gamma
\end{align*}
If we define 
\begin{equation}
\label{eq:coeff}
a_{\mu,\alpha}(f)
:=\frac{1}{\vol(\Lambda)}
\ip{f}{\rho(\mu)h_\alpha},
\end{equation}
then we have that 
\begin{equation}
\label{eq:scalar-probe-coeff}
 \Phi_{\boldsymbol \gamma,\alpha}(f)=\sum_{\mu\in \Lambda^\circ} a_{\mu,\alpha}(f) \widehat\rho(\mu)\boldsymbol \gamma. 
 \end{equation}
  where $\{a_{\mu,\alpha}:\mu\in \Lambda^\circ\}$ are the unique coefficients satisfying (\ref{eq:scalar-probe-coeff}) (see Section \ref{subsec:model-projection}).    

Now, if \(f\in\cC_{\Lambda,\chi}\), then we have that $\chi(\lambda)\rho(\lambda)f=f$ for every $\lambda\in\Lambda$, so then using linearity of $\Phi_{\boldsymbol \gamma,\alpha}$ and Proposition \ref{prop:model-projection-covariance} we have that 
\begin{align*}
    \Phi_{\boldsymbol \gamma,\alpha} ( f) & = \Phi_{\boldsymbol \gamma,\alpha} (\chi(\lambda)\rho(\lambda) f) \\
    & =\chi(\lambda)\widehat\rho(\lambda) \Phi_{\boldsymbol \gamma,\alpha} ( f) \\
    & = \chi(\lambda)\widehat\rho(\lambda) \sum_{\mu\in \Lambda^\circ} a_{\mu,\alpha}(f) \widehat\rho(\mu)\boldsymbol \gamma \\ 
    & =  \sum_{\mu\in \Lambda^\circ} e^{-\pi i\sigma(\lambda,\mu)}\chi(\lambda)a_{\mu,\alpha}(f) \widehat\rho(\mu+\lambda)\boldsymbol \gamma \\ 
    & =  \sum_{\mu\in \Lambda^\circ} e^{-\pi i\sigma(\lambda,\mu-\lambda)}\chi(\lambda)a_{\mu-\lambda,\alpha}(f) \widehat\rho(\mu)\boldsymbol \gamma \\
     & =  \sum_{\mu\in \Lambda^\circ} e^{-\pi i\sigma(\lambda,\mu)}{\chi(\lambda)}a_{\mu-\lambda,\alpha}(f) \widehat\rho(\mu)\boldsymbol \gamma \,.
\end{align*}
But by the uniqueness of coefficients we have that 
\begin{equation}\label{eq:stabilizer-coefficient-covariance}
    e^{-\pi i\sigma(\lambda,\mu)}{\chi(\lambda)}a_{\mu-\lambda,\alpha}(f) = a_{\mu,\alpha}(f)\qquad
\lambda\in\Lambda,\ \mu\in\Lambda^\circ.
\end{equation}

Thus, one representative of every coset in \(\Lambda^\circ/\Lambda\)
determines the entire coefficient sequence.

\subsection{Finite adjoint-lattice blocks recover the logical vector}

We now establish the main link between the Zak and Gabor descriptions.
We give the statement in rectangular coordinates; the general integral
lattice case follows by metaplectic transport.

Let $\Lambda=\Lambda_D$. The projective representation \(K_D\ni q\mapsto\mathsf W_q\) (see (\ref{pauli_op})) is irreducible.
Indeed, 
\[
\Span\{\mathsf W_q:q\in K_D\}
=
\Span\{T:T\in \mathcal P_{\mathbf d}\}
=
B(\mathcal H_{\mathbf d}).
\]
Consequently,
\[
\{\mathsf W_q:q\in K_D\}'
=
B(\mathcal H_{\mathbf d})'
=
\mathbb C I.
\]
It follows that the projective representation is irreducible.

Let  $G(x,\omega)=
\bigl(\mathcal Z_Dg_1(x,\omega),\ldots,
      \mathcal Z_Dg_m(x,\omega)\bigr)$, be the $d\times m$ Zak matrix. For the probe $\alpha\in\C^m$, define
\begin{equation*}
v_\alpha:=G(0,0)\overline\alpha
=\mathcal Z_Dh_\alpha(0,0)
\in\mathcal H_{\mathbf d}=\bigotimes_{i=1}^n \mathbb C^{d_i}.
\end{equation*}
Recall that  \(K_D\) may be naturally identified with the finite phase space $\mathbb Z_{\mathbf d}\times\mathbb Z_{\mathbf d}$ (see \eqref{identification_K}).

\begin{lemma}
\label{lem:finite-weyl-tight-frame}
Let \(0\neq v\in \mathcal H_{\mathbf d}\). Then
\begin{equation}
\label{eq:finite-weyl-reconstruction}
c=
\frac{1}{d\|v\|^2}
\sum_{q\in K_D}
\langle c,W_qv\rangle W_qv
\end{equation}
for every $c\in \mathcal H_{\mathbf d}$.
Consequently,
\begin{equation*}
\sum_{q\in K_D}
|\langle c,W_qv\rangle|^2
=
d\|v\|^2\|c\|^2.
\end{equation*}
\end{lemma}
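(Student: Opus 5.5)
The plan is to show that the operator
\[
R_v:=\sum_{q\in K_D}\mathsf W_q v\,\langle\cdot,\mathsf W_qv\rangle=\sum_{q\in K_D}\mathsf W_q\,(v\otimes v^*)\,\mathsf W_q^*
\]
equals $d\|v\|^2 I$. This gives \eqref{eq:finite-weyl-reconstruction} immediately. Pairing with $c$ then yields the tight-frame identity $\sum_q|\langle c,\mathsf W_qv\rangle|^2=\langle R_vc,c\rangle=d\|v\|^2\|c\|^2$.

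\textbf{Step 1: $R_v$ commutes with every $\mathsf W_p$.} By \eqref{pauli_op} and $Z_iX_i=e^{2\pi i/d_i}X_iZ_i$, one has $\mathsf W_p\mathsf W_q=\omega(p,q)\mathsf W_{p+q}$ for some unimodular scalar $\omega(p,q)$. Here we treat $\mathsf W_{p+q}$ up to a phase, since the representative of $p+q$ in $\mathbb Z_{\mathbf d}\times\mathbb Z_{\mathbf d}$ may differ from the sum of the representatives.

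Conjugating $R_v$ by $\mathsf W_p$ therefore gives
\[
\mathsf W_pR_v\mathsf W_p^*=\sum_q \mathsf W_{p+q}(v\otimes v^*)\mathsf W_{p+q}^*.
\]
The scalar phases cancel between the ket and the bra. Reindexing $q\mapsto q-p$ over the group $K_D$ returns $R_v$.

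\textbf{Step 2: $R_v$ is scalar.} By the irreducibility established just before the lemma, whose commutant is $\mathbb C I$, we get $R_v=\kappa I$. Taking traces gives
\[
\kappa d=\sum_q\operatorname{tr}\bigl(\mathsf W_q(v\otimes v^*)\mathsf W_q^*\bigr)=|K_D|\,\|v\|^2=d^2\|v\|^2,
\]
since each $\mathsf W_q$ is unitary and $|K_D|=d^2$. Hence $\kappa=d\|v\|^2$.

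\textbf{Main obstacle.} The only delicate point is Step 1: checking that $\mathsf W_q$ depends on the class $q$ only up to a phase, so that the twisted multiplication law holds projectively. The phase $e^{\pi i r^TD^{-1}s}$ need not be periodic in $r,s$ modulo $\mathbf d$. However, replacing representatives changes $\mathsf W_q$ only by a unimodular scalar, because $X_i^{d_i}=Z_i^{d_i}=I$. Since $R_v$ involves $\mathsf W_q$ and $\mathsf W_q^*$ symmetrically, such scalars drop out, which is all that Step 1 needs.
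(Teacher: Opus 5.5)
Your proposal is correct and follows the paper's proof: show that the frame operator commutes with every $\mathsf W_p$, conclude by irreducibility that it is a scalar, and fix the scalar by taking traces. Your remark that the projective phases, including those from changing representatives, cancel in $\mathsf W_q(v\otimes v^*)\mathsf W_q^*$ makes explicit a point the paper passes over silently when it writes $\mathsf W_p\mathsf W_q=\mathsf W_{p+q}$ inside the conjugation.
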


\begin{proof}
Define the frame operator
\[
S_v c
=
\sum_{q\in K_D}
\langle c,W_qv\rangle W_qv,
\qquad c\in\mathcal H_{\mathbf d}.
\]
Equivalently,
\[
S_v
=
\sum_{q\in K_D}
W_q |v\rangle\langle v| W_q^*.
\]

We claim that \(S_v\) commutes with every Pauli operator. Indeed, for
\(p\in K_D\),
\[
W_p S_v W_p^*
=
\sum_{q\in K_D}
W_pW_q |v\rangle\langle v| W_q^*W_p^*\sum_{q\in K_D}
W_{p+q}|v\rangle\langle v|W_{p+q}^*
=
S_v.
\]

Since the finite Weyl representation acts irreducibly on \(\mathcal H_{\mathbf d}\), we have that $S_v$ must be a scalar multiple of the identity
\(
S_v=\alpha I
\)
for some scalar \(\alpha\geq 0\).

To determine \(\alpha\), we take traces. Since each \(W_q\) is unitary,
\[
\operatorname{Tr}(S_v)
=
\sum_{q\in K_D}
\operatorname{Tr}
\bigl(W_q |v\rangle\langle v| W_q^*\bigr)
=
\sum_{q\in K_D}\|v\|^2.
\]
Using \(|K_D|=d^2\), we obtain $\operatorname{Tr}(S_v)=d^2\|v\|^2$.
On the other hand, since
\(\dim\mathcal H_{\mathbf d}=d\),
\[
\operatorname{Tr}(S_v)=\alpha d.
\]
Therefore $\alpha=d\|v\|^2$,  and hence
\[
S_v=d\|v\|^2I.
\]
This proves \eqref{eq:finite-weyl-reconstruction}.
\end{proof}

\begin{theorem}[Finite-block logical reconstruction]
\label{thm:finite-block-recovery}
Let \(\mathbf g\in M^1(\R^n)\otimes\C^m\) generate a multi-window Gabor frame over
\(\Lambda_D\), let \(\boldsymbol\gamma\) be its canonical dual, and fix
\(\alpha\in\C^m\). For \(\psi_c\in\cC_{\Lambda_D}\) with
\(\cZ_D\psi_c=c\delta_0\),  we have that
\begin{equation*}
a_{\nu_q}(\Phi_{\boldsymbol\gamma,\alpha}(\psi_c))
=\frac1{ d}\ip{c}{\mathsf W_qv_\alpha},
\qquad q\in K_D.
\end{equation*}
Consequently,
\begin{equation*}
\sum_{q\in K_D}\abs{B_0(c)_q}^2
= \norm{c}^2.
\end{equation*}
The following are equivalent:
\begin{enumerate}[label=(\roman*)]
\item \(v_\alpha\neq0\);
\item the finite-block analysis map
\[
B_0:\mathcal H_{\mathbf d}\longrightarrow\ell^2(K_D),
\qquad
(B_0(c))_q
:=
\frac1{\sqrt d\|v_\alpha\|} \langle c,\mathsf W_qv_\alpha\rangle,
\]
is an isometry.
\item \(\Phi_{\gamma,\alpha}\) is injective on \(\cC_{\Lambda_D}\).
\end{enumerate}
When these conditions hold,
\begin{equation*}
c=\frac{1}{\sqrt d\norm{v_\alpha}}
\sum_{q\in K_D}B_0(c)_q\mathsf W_qv_\alpha.
\end{equation*}
\end{theorem}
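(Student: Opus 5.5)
The argument has three parts: compute the coefficient, use the tight-frame lemma, and check the equivalences.

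\textbf{Coefficient identity.} By \eqref{eq:scalar-probe-coeff} and uniqueness of coefficients in $V^\infty_{\boldsymbol\gamma}$, we have
\[
a_{\nu_q}(\Phi_{\boldsymbol\gamma,\alpha}(\psi_c))=a_{\nu_q,\alpha}(\psi_c)=\frac1d\langle\psi_c,\rho(\nu_q)h_\alpha\rangle ,
\]
since $\vol(\Lambda_D)=d$. Next apply the Zak duality \eqref{eq:vector-zak-duality} with $\psi=\psi_c$ and $h=\rho(\nu_q)h_\alpha\in M^1$. Because $\mathcal Z_D\psi_c=c\delta_0$, the pairing localizes at the origin and gives $\langle c,\mathcal Z_D(\rho(\nu_q)h_\alpha)(0)\rangle$. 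The identity $\mathcal Z_D(\rho(\nu_q)h)(0)=\mathsf W_q\mathcal Z_Dh(0)$ from Subsection~\ref{subsec:finite-weyl-background} then gives $\langle c,\mathsf W_q v_\alpha\rangle$, because $v_\alpha=\mathcal Z_Dh_\alpha(0)$.

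\textbf{Main obstacle: normalization of the Dirac mass.} In Subsection~\ref{Zak_matrix} the base measure has total mass $d^{-1/2}$, and this produced a factor $d^{-1/2}$ there. So I must fix precisely what pairing $\langle c\delta_0,G\rangle$ means with respect to the measure $dx\,d\omega$ on $Q_D^\circ\times Q_D^\circ$. I will take $\delta_0$ to be evaluation at $0$ against that pairing, which is the convention implicit in the definition of $\psi_c$. I will check this against the Gabor construction of Subsection~\ref{Zak_matrix} so that the constant is exactly $1/d$ as claimed. If a stray $d^{\pm1/2}$ appears, it will be absorbed consistently into the definition of $B_0$, which is normalized to be an isometry anyway.

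\textbf{Norm identity.} Given the coefficient identity, Lemma~\ref{lem:finite-weyl-tight-frame} with $v=v_\alpha\neq0$ yields
\[
\sum_q|\langle c,\mathsf W_qv_\alpha\rangle|^2=d\|v_\alpha\|^2\|c\|^2 .
\]
Dividing by $d\|v_\alpha\|^2$ gives $\sum_q|B_0(c)_q|^2=\|c\|^2$, i.e.\ (i)$\Rightarrow$(ii). Applying $B_0^*$ to the expansion \eqref{eq:finite-weyl-reconstruction} gives the reconstruction formula
\[
c=\frac{1}{\sqrt d\,\|v_\alpha\|}\sum_q B_0(c)_q\mathsf W_qv_\alpha .
\]

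\textbf{Equivalences.} (ii)$\Rightarrow$(i) holds because $B_0$ is only defined when $v_\alpha\neq0$; equivalently, if $v_\alpha=0$ all finite Weyl coefficients vanish, so no isometry is possible.

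For (i)$\Leftrightarrow$(iii): by \eqref{eq:stabilizer-coefficient-covariance}, the full coefficient sequence of $\Phi_{\boldsymbol\gamma,\alpha}(\psi_c)$ is determined, up to unimodular phases, by the block $(a_{\nu_q})_{q\in K_D}$. By the Riesz property \eqref{eq:model-space-stability}, $\Phi_{\boldsymbol\gamma,\alpha}(\psi_c)=0$ if and only if all coefficients vanish, hence if and only if $\langle c,\mathsf W_qv_\alpha\rangle=0$ for all $q$.

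If $v_\alpha\neq0$, the isometry from (ii) forces $c=0$, so $\Phi_{\boldsymbol\gamma,\alpha}$ is injective on $\mathcal C_{\Lambda_D}$. If $v_\alpha=0$, every $c$ is mapped to $0$, and injectivity fails because $\mathcal C_{\Lambda_D}\cong\mathcal H_{\mathbf d}$ is nonzero by Proposition~\ref{prop:Zak-characterization}.
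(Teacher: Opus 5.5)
Your proposal is correct and follows the paper's proof: Zak duality localizes \(\langle\psi_c,\rho(\nu_q)h_\alpha\rangle\) at the origin, the intertwining relation \(\mathcal Z_D(\rho(\nu_q)h)(0)=\mathsf W_q\mathcal Z_Dh(0)\) turns this into \(\langle c,\mathsf W_qv_\alpha\rangle\), and Lemma~\ref{lem:finite-weyl-tight-frame} then gives \(B_0^*B_0=I\). You go slightly beyond the paper by arguing (ii)\(\Rightarrow\)(i) and (i)\(\Leftrightarrow\)(iii), which its proof leaves implicit; your convention for \(\delta_0\), namely evaluation against the pairing in \eqref{eq:vector-zak-duality}, is the one the paper uses, so the fallback of absorbing a stray \(d^{\pm1/2}\) into \(B_0\) is unnecessary (and would not rescue the explicit constant \(1/d\) in the coefficient identity anyway).
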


\begin{proof}
By 
\eqref{eq:coeff} and Zak duality \eqref{eq:vector-zak-duality}, we have that 
\[
a_{\nu_q}(\Phi_{\boldsymbol\gamma,\alpha}(\psi_c))
=\frac{1}{d}
\ip{\psi_c}{\rho(\nu_q)h_\alpha}
=\frac{1}{d}
\ip{c}{\cZ_D(\rho(\nu_q)h_\alpha)(0,0)}=\frac{1}{d}
\ip{c}{\mathsf W_q\cZ_D(h_\alpha)(0,0)}=\frac{1}{d}
\ip{c}{\mathsf W_q v_\alpha}.
\]
A straightforward computation shows that 
\[
B_0^* (b)
=
\frac1{\sqrt d \|v_\alpha\|}
\sum_{p\in K_D}b_p\mathsf W_pv_\alpha \qquad b\in \ell^2(K_D).
\]
Then using  
Lemma~\ref{lem:finite-weyl-tight-frame} it follows that  $B_0^*B_0
=
I_{\mathcal H_{\mathbf d}}$,
thus, $B_0$ is an isometry.

\end{proof}

The normalized block map identifies the logical Hilbert space
with the distinguished $d$-dimensional subspace
\[
\mathcal B_0:=\operatorname{Ran}(B_0)
\subset\ell^2(K_D).
\]
Consequently, an arbitrary coefficient block
$b\in\ell^2(K_D)$ can be decomposed uniquely into an admissible
logical component and an orthogonal inconsistency component.
This gives a canonical coefficient-space reconstruction map; it
is not, by itself, a quantum recovery channel.

\begin{proposition}[Projection and logical recovery]
\label{prop:block-projection}
Assume the hypotheses of Theorem~\ref{thm:finite-block-recovery}
and let
\[
    \mathcal B_0:=\operatorname{Ran}(B_0)
    \subseteq \ell^2(K_D)
\]
be the space of admissible exact coefficient blocks. Then $\Pi_0:=B_0B_0^*$ is the orthogonal projection of $\ell^2(K_D)$ onto $\mathcal B_0$. Consequently, for every $b\in\ell^2(K_D)$, $\widehat b:=\Pi_0b$ is the unique solution of
\[
    \widehat b
    =
    \operatorname*{argmin}_{a\in\mathcal B_0}
    \|b-a\|_{\ell^2(K_D)},
\]
and the corresponding logical vector is $\widehat c:=B_0^*b$. In particular, $\widehat b=B_0\widehat c$.
\end{proposition}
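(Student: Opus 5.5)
The argument is pure finite-dimensional Hilbert space theory. It rests on the single fact, established in Theorem~\ref{thm:finite-block-recovery} via Lemma~\ref{lem:finite-weyl-tight-frame}, that $B_0^*B_0=I_{\mathcal H_{\mathbf d}}$ whenever $v_\alpha\neq0$. I would proceed in three steps.

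\textbf{Step 1: $\Pi_0$ is the orthogonal projection onto $\mathcal B_0$.} Since $\mathcal H_{\mathbf d}$ and $\ell^2(K_D)$ are finite-dimensional, all operators are bounded and adjoints are unproblematic. Set $\Pi_0=B_0B_0^*$. It is self-adjoint, and
\[
\Pi_0^2=B_0(B_0^*B_0)B_0^*=B_0B_0^*=\Pi_0,
\]
so $\Pi_0$ is an orthogonal projection. For its range, clearly $\operatorname{Ran}\Pi_0\subseteq\operatorname{Ran}B_0=\mathcal B_0$. Conversely, $\Pi_0B_0c=B_0(B_0^*B_0)c=B_0c$, so every element of $\mathcal B_0$ is fixed by $\Pi_0$. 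Hence $\operatorname{Ran}\Pi_0=\mathcal B_0$. Note that $\mathcal B_0$ is closed, being finite-dimensional of dimension $d$ because $B_0$ is injective.

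\textbf{Step 2: the best-approximation property.} Write $b=\Pi_0b+(I-\Pi_0)b$, where the second term lies in $\mathcal B_0^\perp$. For any $a\in\mathcal B_0$, the Pythagorean identity gives
\[
\|b-a\|^2=\|\Pi_0b-a\|^2+\|(I-\Pi_0)b\|^2.
\]
This is minimized exactly when $a=\Pi_0b$, which yields both existence and uniqueness of the minimizer $\widehat b=\Pi_0b$.

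\textbf{Step 3: the logical vector.} Define $\widehat c:=B_0^*b$. Then $B_0\widehat c=B_0B_0^*b=\Pi_0b=\widehat b$. In addition, $B_0^*b=B_0^*\Pi_0b$, because
\[
B_0^*\Pi_0=(B_0^*B_0)B_0^*=B_0^*.
\]
So $\widehat c$ depends only on the admissible component of $b$. Since $B_0$ is injective, $\widehat c$ is also the unique $c$ with $B_0c=\widehat b$.

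\textbf{Main obstacle.} There is essentially none: the only nontrivial input is the isometry property from the theorem. The one point to state carefully is the standing assumption $v_\alpha\neq0$, which is needed for $B_0$ to be defined and isometric.
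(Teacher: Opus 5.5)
Your proof is correct and follows essentially the same route as the paper: both rest solely on $B_0^*B_0=I_{\mathcal H_{\mathbf d}}$ from Theorem~\ref{thm:finite-block-recovery}, deduce that $\Pi_0$ is a self-adjoint idempotent with range $\operatorname{Ran}(B_0)$, obtain the minimization from the projection theorem, and use $B_0^*\Pi_0=B_0^*$. Your version is only more explicit, spelling out the inclusion $\Pi_0B_0c=B_0c$ and the Pythagorean identity, which the paper leaves to the Hilbert-space projection theorem.
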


\begin{proof}
Since $B_0$ is an isometry by
Theorem~\ref{thm:finite-block-recovery}, we have
\[
    B_0^*B_0=I_{\mathcal H_d}.
\]
It follows that
\[
    \Pi_0^2
    =B_0B_0^*B_0B_0^*
    =B_0B_0^*
    =\Pi_0
\]
and $\Pi_0^*=\Pi_0$. Its range is precisely
$\operatorname{Ran}(B_0)$, so it is the orthogonal projection onto
$\mathcal B_0$. The minimization property follows from the
Hilbert-space projection theorem, and
\[
    B_0^*\Pi_0b
    =B_0^*B_0B_0^*b
    =B_0^*b.
\]
\end{proof}

\begin{corollary}[Periodicity of the finite coefficient blocks]
\label{cor:periodicity-finite-blocks}
Assume the hypotheses of
Theorem~\ref{thm:finite-block-recovery}. For
$\lambda\in\Lambda_D$, define
\[
B_\lambda(c)
:=
\frac{\sqrt d}{\|v_\alpha\|}\left(
e^{\pi i\sigma(\lambda,\nu_q)}
a_{\lambda+\nu_q}
\bigl(\Phi_{\gamma,\alpha}(\psi_c)\bigr)
\right)_{q\in K_D}.
\]
Then
\[
B_\lambda(c)
=
\chi_D(\lambda)B_0(c).
\]
In particular, every phase-corrected block contains exactly the same
logical information, up to the known scalar phase
$\chi_D(\lambda)$.
\end{corollary}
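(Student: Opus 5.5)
The plan is to reduce the claim to the stabilizer covariance relation \eqref{eq:stabilizer-coefficient-covariance} together with the coefficient identity in Theorem~\ref{thm:finite-block-recovery}. The proof is essentially a one-line phase computation. The only point needing care is which coefficient sequence is meant by $a_\mu(\Phi_{\gamma,\alpha}(\psi_c))$.

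First, I will fix notation. Since $\Phi_{\boldsymbol\gamma,\alpha}(\psi_c)\in V^\infty_{\boldsymbol\gamma}$, its adjoint-lattice coefficients are, by the uniqueness statement of Section~\ref{subsec:model-projection} and the expansion \eqref{eq:scalar-probe-coeff}, exactly the scalar probe coefficients:
\[
a_\mu\bigl(\Phi_{\boldsymbol\gamma,\alpha}(\psi_c)\bigr)=a_{\mu,\alpha}(\psi_c)=\frac1d\ip{\psi_c}{\rho(\mu)h_\alpha},\qquad \mu\in\Lambda_D^\circ .
\]
This is the same identification used in the proof of Theorem~\ref{thm:finite-block-recovery}. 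Combined with that theorem, it gives
\[
B_0(c)_q=\frac{\sqrt d}{\|v_\alpha\|}\,a_{\nu_q}\bigl(\Phi_{\boldsymbol\gamma,\alpha}(\psi_c)\bigr)=\frac{1}{\sqrt d\,\|v_\alpha\|}\ip{c}{\mathsf W_qv_\alpha}.
\]
This shows that $B_\lambda$ at $\lambda=0$ agrees with $B_0$.

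Second, since $\psi_c\in\mathcal C_{\Lambda_D}=\mathcal C_{\Lambda_D,\chi_D}$, relation \eqref{eq:stabilizer-coefficient-covariance} applies with $\chi=\chi_D$. Taking $\mu=\lambda+\nu_q$ there gives
\[
a_{\lambda+\nu_q,\alpha}(\psi_c)=e^{-\pi i\sigma(\lambda,\lambda+\nu_q)}\chi_D(\lambda)\,a_{\nu_q,\alpha}(\psi_c)=e^{-\pi i\sigma(\lambda,\nu_q)}\chi_D(\lambda)\,a_{\nu_q,\alpha}(\psi_c),
\]
where the second equality uses $\sigma(\lambda,\lambda)=0$. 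Multiplying by $e^{\pi i\sigma(\lambda,\nu_q)}$ and by $\sqrt d/\|v_\alpha\|$ yields $B_\lambda(c)_q=\chi_D(\lambda)B_0(c)_q$ for each $q\in K_D$, which is the claim.

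For the \emph{in particular} statement, note that $|\chi_D(\lambda)|=1$. Hence $B_\lambda=\chi_D(\lambda)B_0$ is again an isometry, and $c=\chi_D(\lambda)^{-1}B_0^*B_\lambda(c)$ by Proposition~\ref{prop:block-projection}. The only step I expect to require attention is the first one: justifying that the coefficients of $\Phi_{\boldsymbol\gamma,\alpha}(\psi_c)$ in the $\widehat\rho(\mu)\boldsymbol\gamma$-expansion coincide with $a_{\mu,\alpha}(\psi_c)$. This follows from the Riesz-sequence uniqueness \eqref{eq:model-space-stability} for $p=\infty$. Everything else is the covariance identity already derived.
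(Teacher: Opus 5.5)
Your proof is correct and follows the paper's argument. The paper likewise combines the coefficient covariance of Proposition~\ref{prop:coefficient-covariance} with the stabilizer relation $\psi_c=\chi_D(\lambda)\rho(\lambda)\psi_c$, which is exactly \eqref{eq:stabilizer-coefficient-covariance} at $\mu=\lambda+\nu_q$, then multiplies by $e^{\pi i\sigma(\lambda,\nu_q)}$ and identifies the result with $B_0(c)$ via Theorem~\ref{thm:finite-block-recovery}; your extra remarks on the coefficient identification and on $c=\overline{\chi_D(\lambda)}B_0^*B_\lambda(c)$ are sound.
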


\begin{proof}
By Proposition~\ref{prop:coefficient-covariance} and the
stabilizer relation,
\[
a_{\lambda+\nu_q}
\bigl(\Phi_{\gamma,\alpha}(\psi_c)\bigr)
=
\chi_D(\lambda)
e^{-\pi i\sigma(\lambda,\nu_q)}
a_{\nu_q}
\bigl(\Phi_{\gamma,\alpha}(\psi_c)\bigr).
\]
Multiplication by
$e^{\pi i\sigma(\lambda,\nu_q)}$ gives $\bigl(B_\lambda(c)\bigr)_q
=
\chi_D(\lambda)\bigl(B_0(c)\bigr)_q$.
\end{proof}

\begin{corollary}[Probe design]
\label{cor:probe-design}
For every multi-window Gabor frame as above, there exists
\(\alpha\in\C^m\) for which \(\Phi_{\gamma,\alpha}\) is injective on the ideal
code. More precisely, every prescribed nonzero
\(v\in\cH_{\mathbf d}\) can be realized as \(v=G(0,0)\overline{\alpha}\).
\end{corollary}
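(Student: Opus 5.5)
The plan is to combine the surjectivity of the Zak matrix at the origin, established in Section~\ref{Zak_matrix}, with the equivalence (i)$\Leftrightarrow$(iii) of Theorem~\ref{thm:finite-block-recovery}.

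\textbf{Step 1 (surjectivity of $G(0,0)$).} Since $\mathcal G(\mathbf g,\Lambda_D)$ is a multi-window Gabor frame with windows in $M^1$, the Zak matrix $G$ is continuous. The fibreization criterion gives $G(z)G(z)^*\ge A\,I_{\mathcal H_{\mathbf d}}$ almost everywhere, and continuity extends this bound to every point of $\mathbb T_D^{\mathrm{syn}}$, in particular to $z=0$. Hence $G(0,0):\mathbb C^m\to\mathcal H_{\mathbf d}$ has full row rank and is surjective.

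\textbf{Step 2 (realizing a prescribed $v$).} Given a nonzero $v\in\mathcal H_{\mathbf d}$, surjectivity gives $\beta\in\mathbb C^m$ with $G(0,0)\beta=v$. A canonical choice is
\[
\beta=G(0,0)^*\bigl(G(0,0)G(0,0)^*\bigr)^{-1}v.
\]
Set $\alpha:=\overline\beta$, the componentwise conjugate. Then $v_\alpha=G(0,0)\overline\alpha=v$. This is consistent with $v_\alpha=\mathcal Z_Dh_\alpha(0,0)$: since $h_\alpha=\sum_s\overline{\alpha_s}g_s$ and $\mathcal Z_D$ is linear, we get $\mathcal Z_Dh_\alpha(0,0)=\sum_s\overline{\alpha_s}\,\mathcal Z_Dg_s(0,0)=G(0,0)\overline\alpha$.

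\textbf{Step 3 (injectivity).} Since $v_\alpha=v\neq0$, condition (i) of Theorem~\ref{thm:finite-block-recovery} holds, so (iii) gives injectivity of $\Phi_{\gamma,\alpha}$ on $\mathcal C_{\Lambda_D}$. One can also argue directly, using only the implication (i)$\Rightarrow$(ii) proved there. If $\Phi_{\gamma,\alpha}(\psi_c)=0$, then all coefficients $a_{\nu_q}$ vanish, since the coefficients in \eqref{eq:scalar-probe-coeff} are unique. By the first formula of the theorem this means $B_0(c)=0$, and the isometry property forces $c=0$. Proposition~\ref{prop:Zak-characterization} then gives $\psi_c=0$.

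\textbf{Expected obstacle.} The only delicate point is the passage from the almost-everywhere fibre bound to the evaluation at the single point $(0,0)$. This requires continuity of the entries $\mathcal Z_Dg_j$, which holds because $g_j\in M^1$ and, by Lemma~\ref{lem:zak-mild}, $\mathcal Z_D$ maps $M^1$ into $W^{\infty,1}$, whose elements are continuous. Once that is in place, the argument is pure linear algebra.
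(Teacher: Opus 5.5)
Your proposal is correct and follows essentially the paper's route: the paper's proof invokes the surjectivity of \(G(0,0)\) from Section~\ref{Zak_matrix} and then relies on the criterion \(v_\alpha\neq0\) of Theorem~\ref{thm:finite-block-recovery}, which matches your Steps 1--3. Your direct injectivity argument in Step~3, combining uniqueness of the adjoint-lattice coefficients with the isometry of \(B_0\), is a worthwhile addition, because the paper's proof of the theorem establishes only the isometry and leaves the implication (i)\(\Rightarrow\)(iii) implicit.
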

\begin{proof}
When $\mathcal G(\mathbf g,\Lambda_D)$ is a Gabor frame then  the map $G(0,0):\mathcal C^m\to  \mathcal H_{\mathbf d}$ is surjective (see Section \ref{Zak_matrix}).
\end{proof}

\begin{proposition}[Clifford action on   coefficient blocks]
\label{prop:clifford-action-fixed-probe-blocks}
Assume the hypotheses of
Theorem~\ref{thm:finite-block-recovery}, with
\(v_\alpha\neq0\), and let $B_0:\mathcal H_{\mathbf d}\longrightarrow\ell^2(K_D)$ be the finite-block analysis map. Suppose that \(T\in\operatorname{Sp}(2n,\mathbb R)\) satisfies $T\Lambda_D=\Lambda_D$ and $\chi_D\circ T^{-1}=\chi_D$,
and let \(H_T\) be the logical Clifford unitary induced by a
metaplectic lift \(U_T\). Then \(H_T\) induces an operator
\[
\mathcal C_{T}
:=
B_0 H_TB_0^*:\ell^2(K_D)\to \ell^2(K_D),
\]
that restricts to a unitary operator  $\mathcal C_T:\Ran(B_0)\to \Ran(B_0)$.
The matrix coefficients of \(\mathcal C_{T}\) are
\begin{equation*}
\label{eq:fixed-probe-clifford-matrix}
\mathcal C_{T}(q,p)
=
\frac{1}{d\|v_\alpha\|^2}
\left\langle
H_T\mathsf W_pv_\alpha,
\mathsf W_qv_\alpha
\right\rangle,
\qquad p,q\in K_D.
\end{equation*}
\end{proposition}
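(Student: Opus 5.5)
The plan is to reduce everything to two facts already established: $B_0$ is an isometry (Theorem~\ref{thm:finite-block-recovery}, which needs $v_\alpha\neq0$), and $H_T$ is unitary on $\mathcal H_{\mathbf d}$ (the lemma in Subsection~\ref{Clifford_gates}). The operator $\mathcal C_T:=B_0H_TB_0^*$ is bounded on $\ell^2(K_D)$ as a composition of bounded maps. The argument then has three steps.

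\emph{Invariance of the range.} Since $\Ran(B_0^*)\subseteq\mathcal H_{\mathbf d}$, we have
\[
\Ran(\mathcal C_T)\subseteq B_0(\mathcal H_{\mathbf d})=\Ran(B_0).
\]
In particular $\mathcal C_T$ maps $\Ran(B_0)$ into itself.

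\emph{Unitarity on $\Ran(B_0)$.} Take $b=B_0c\in\Ran(B_0)$. Using $B_0^*B_0=I$,
\[
\mathcal C_Tb=B_0H_TB_0^*B_0c=B_0H_Tc .
\]
So on $\Ran(B_0)$ the operator $\mathcal C_T$ is conjugate to $H_T$ via the unitary $B_0:\mathcal H_{\mathbf d}\to\Ran(B_0)$. Isometry follows from
\[
\|B_0H_Tc\|=\|H_Tc\|=\|c\|=\|b\|.
\]
Surjectivity onto $\Ran(B_0)$ follows because $H_T$ is onto. Any $B_0c'$ equals $\mathcal C_T(B_0H_T^*c')$. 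Its inverse on $\Ran(B_0)$ is $B_0H_T^*B_0^*$, which is also the restriction of $\mathcal C_T^*$, since $(B_0H_TB_0^*)^*=B_0H_T^*B_0^*$. Off $\Ran(B_0)$, $\mathcal C_T$ vanishes because $B_0^*$ annihilates $\Ran(B_0)^\perp$. This is consistent with $\mathcal C_T=\Pi_0\mathcal C_T\Pi_0$ (Proposition~\ref{prop:block-projection}).

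\emph{Matrix coefficients.} Compute $\mathcal C_T(q,p)=\langle \mathcal C_T\delta_p,\delta_q\rangle_{\ell^2(K_D)}$. The adjoint formula in the proof of Theorem~\ref{thm:finite-block-recovery} gives
\[
B_0^*\delta_p=\frac{1}{\sqrt d\,\|v_\alpha\|}\,\mathsf W_pv_\alpha .
\]
The definition of $B_0$ gives
\[
(B_0x)_q=\frac{1}{\sqrt d\,\|v_\alpha\|}\langle x,\mathsf W_qv_\alpha\rangle .
\]
Substituting $x=H_TB_0^*\delta_p$ and combining the two factors of $\frac{1}{\sqrt d\,\|v_\alpha\|}$ gives exactly
\[
\mathcal C_T(q,p)=\frac{1}{d\|v_\alpha\|^2}\langle H_T\mathsf W_pv_\alpha,\mathsf W_qv_\alpha\rangle .
\]
The only care needed is the convention that the inner product is linear in the first slot.

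None of these steps is a real obstacle. The substantive inputs, the isometry of $B_0$ and the unitarity of $H_T$, were proved earlier. The one point to state explicitly is that "unitary" is meant on $\Ran(B_0)$ and not on all of $\ell^2(K_D)$: since $\dim\Ran(B_0)=d<d^2=|K_D|$, the operator $\mathcal C_T$ is only a partial isometry on $\ell^2(K_D)$, with initial and final space $\Ran(B_0)$.
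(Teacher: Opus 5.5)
Your proof is correct and follows essentially the paper's route. The paper gets the matrix entries by inserting the reconstruction formula $c=B_0^*B_0c$ into $B_0(H_Tc)_q$, which is the same computation as your evaluation of $\langle B_0H_TB_0^*\delta_p,\delta_q\rangle$. Your explicit check that $\mathcal C_T$ maps $\Ran(B_0)$ isometrically onto itself and vanishes on $\Ran(B_0)^\perp$ supplies the unitarity claim, which the paper's proof leaves implicit.
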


\begin{proof}
By Theorem~\ref{thm:finite-block-recovery}, the reconstruction formula gives
\[
c
=
\frac1{\sqrt d\|v_\alpha\|}
\sum_{p\in K_D} B_0(c)_p
\mathsf W_pv_\alpha.
\]
Therefore,
\begin{align*}
B_0(H_Tc)_q
&=
\frac1{\sqrt d \|v_\alpha\|}
\left\langle H_Tc,\mathsf W_qv_\alpha\right\rangle \\
&=
\frac{1}{d\|v_\alpha\|^2}
\sum_{p\in K_D}
B_0(c)_p
\left\langle
H_T\mathsf W_pv_\alpha,
\mathsf W_qv_\alpha
\right\rangle,
\end{align*}
which proves the desired result.
\end{proof}

\begin{remark}[Clifford action on coefficient blocks]
Retain the notation of Proposition~\ref{prop:clifford-action-fixed-probe-blocks}, and let
\[
\tau_T:K_D\longrightarrow K_D, \qquad \nu_q\longmapsto
T\nu_q\equiv\nu_{\tau_T(q)}\pmod{\Lambda_D}.
\]
By the Clifford covariance established in
Section~\ref{Clifford_gates},
\[
H_T\mathsf W_qH_T^*=e^{i\theta_q}\mathsf W_{\tau_T(q)}.
\]

Since $U_T^*\psi_c=\psi_{H_T^*c}$, we obtain
\begin{align*}
a_{\nu_q}\bigl(\Phi_{\gamma,\alpha}(U_T^*\psi_c)\bigr)
&=
\frac1d
\left\langle
U_T^*\psi_c,\rho(\nu_q)h_\alpha
\right\rangle
\\
&=
\frac1d
\left\langle
H_T^*c,
\mathcal Z_D\bigl(\rho(\nu_q)h_\alpha\bigr)(0,0)
\right\rangle
\\
&=
\frac1d
\left\langle
H_T^*c,\mathsf W_qv_\alpha
\right\rangle
\\
&=
\frac1d
\left\langle
c,H_T\mathsf W_qv_\alpha
\right\rangle
\\
&=
e^{-i\theta_q}\frac1d
\left\langle
c,\mathsf W_{\tau_T(q)}H_Tv_\alpha
\right\rangle,
\end{align*}
where $v_\alpha=\mathcal Z_Dh_\alpha(0,0)$. In particular, suppose that \(v_\alpha\) is an eigenvector of \(H_T\),
say
\[
H_Tv_\alpha=e^{i\varphi}v_\alpha,
\]
for some $\varphi\in \mathbb R$. Then
\[
a_{\nu_q}\bigl(\Phi_{\gamma,\alpha}(U_T^*\psi_c)\bigr)
=
e^{-i(\theta_q+\varphi)}
a_{\nu_{\tau_T(q)}}\bigl(\Phi_{\gamma,\alpha}(\psi_c)\bigr).
\]
Thus,  the adjoint logical Clifford gate acts on the
coefficient block by the permutation of \(K_D\) induced by \(S\),
together with coefficient-dependent phases.  
\end{remark}

In the following corollary we are going to show that it is not necessary to sample the signal over all coefficients in $K_D$ to recover the logical information, and that a minimal number of samples is enough if we choose the probe appropriately.

Recall that $\mathcal H_{\mathbf d}
=
\bigotimes_{j=1}^n\mathbb C^{d_j}$ and that $K_D$ can be decomposed as $\prod_{j=1}^n\mathbb Z_{d_j}^2$.

\begin{corollary}[Explicit minimal sampling for
Gaussian probes]
Let 
\[
v_\alpha
=
\mathsf h_{d_1}^{\lambda_1}
\otimes\cdots\otimes
\mathsf h_{d_n}^{\lambda_n},
\qquad
\lambda_j>0,
\]
where \(\mathsf h_{d_j}^{\lambda_j}\in\mathbb C^{d_j}\)
is the sampled and periodized Gaussian of
\cite{AbreuBalazsHolighausLuefSpeckbacher}.

For each \(j=1,\ldots,n\), let
\[
Q_j\subseteq\mathbb Z_{d_j}^2,
\qquad
|Q_j|=d_j,
\]
consist of distinct points and assume that either \(d_j\) is odd or
\[
\sum_{q_j\in Q_j}q_j\neq(0,0)
\qquad\text{in }\mathbb Z_{d_j}^2.
\]
Define $Q:=Q_1\times\cdots\times Q_n\subseteq K_D$. Then $\{\mathsf W_qv_\alpha:q\in Q\}$ is a basis of \(\mathcal H_{\mathbf d}\). Consequently, the restricted
coefficient family
\[
\{B_0(c)_q:q\in Q\}
\]
determines every \(c\in\mathcal H_{\mathbf d}\) using the minimal
possible number \(d\) of scalar coefficients.
\end{corollary}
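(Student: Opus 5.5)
The plan is to reduce everything to the single-mode case via the tensor structure, and then invoke the single-mode result of \cite{AbreuBalazsHolighausLuefSpeckbacher}.

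\emph{Step 1: factor the Weyl operators.} For $q=(r,s)\in K_D$, write $q=(q_1,\ldots,q_n)$ with $q_j=(r_j,s_j)\in\mathbb Z_{d_j}^2$. Since $D$ is diagonal, the phase in \eqref{pauli_op} splits as
\[
e^{\pi i r^TD^{-1}s}=\prod_j e^{\pi i r_js_j/d_j}.
\]
The operators $X_i,Z_i$ act only on the $i$-th tensor factor. Hence
\[
\mathsf W_q=\mathsf W^{(1)}_{q_1}\otimes\cdots\otimes\mathsf W^{(n)}_{q_n},
\]
where $\mathsf W^{(j)}_{q_j}$ is the single-mode finite Weyl operator on $\mathbb C^{d_j}$ with the symmetric phase convention. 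It follows that for $q\in Q=Q_1\times\cdots\times Q_n$,
\[
\mathsf W_qv_\alpha=\bigotimes_j \mathsf W^{(j)}_{q_j}\mathsf h_{d_j}^{\lambda_j}.
\]

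\emph{Step 2: reduce to one factor.} The elementary tensor family $\{\otimes_j u^{(j)}_{q_j}: q_j\in Q_j\}$ is a basis of $\otimes_j\mathbb C^{d_j}$ whenever each family $\{u^{(j)}_{q_j}:q_j\in Q_j\}$ is a basis of $\mathbb C^{d_j}$. This holds because the Gram (or coefficient) matrix is the Kronecker product of the individual ones, and $\det(A\otimes B)\neq0$ iff $\det A,\det B\neq0$. So it suffices to show that for each $j$ the $d_j$ vectors $\mathsf W^{(j)}_{q_j}\mathsf h_{d_j}^{\lambda_j}$, $q_j\in Q_j$, are linearly independent. Since the phases $e^{\pi i r_js_j/d_j}$ are nonzero scalars, this is equivalent to independence of $\{M^{s_j}T^{r_j}\mathsf h^{\lambda_j}_{d_j}\}$, that is, of the ordinary discrete Gabor system of the periodized Gaussian.

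\emph{Step 3: invoke the single-mode result.} For that system I would cite \cite{AbreuBalazsHolighausLuefSpeckbacher}. Their theorem states that any $d_j$ distinct time--frequency shifts of the sampled periodized Gaussian are linearly independent provided $d_j$ is odd, or, for even $d_j$, the points do not sum to $(0,0)$; these are exactly the stated hypotheses. The mechanism behind it is that the discrete Bargmann/Zak transform maps these vectors to evaluations of theta functions, whose zero set on the finite grid is controlled by the sum of the points. I expect the main obstacle to be this citation step: one has to check that the paper's normalization of $\mathsf h_{d}^{\lambda}$ and its sign/phase conventions for $X,Z$ match theirs. Possible mismatches are $Z$ versus $Z^{-1}$, or the ordering $MT$ versus $TM$. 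Any such discrepancy only replaces $Q_j$ by its image under $(r,s)\mapsto(r,\pm s)$ or a similar automorphism of $\mathbb Z_{d_j}^2$, which preserves distinctness and the condition $\sum q_j\neq 0$.

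\emph{Step 4: conclude.} Since $|Q|=\prod d_j=d$, the vectors $\{\mathsf W_qv_\alpha:q\in Q\}$ form a basis of $\mathcal H_{\mathbf d}$. By Theorem~\ref{thm:finite-block-recovery}, each entry $B_0(c)_q$ is a nonzero multiple of $\langle c,\mathsf W_qv_\alpha\rangle$. Knowing these inner products against a basis determines $c$: one inverts the $d\times d$ Gram system. Finally, $d$ is minimal because fewer than $d$ linear functionals on a $d$-dimensional space have a nontrivial common kernel, so they cannot determine every $c$.
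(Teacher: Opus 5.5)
Your proposal is correct and is essentially the paper's proof. Like the paper, you factor $\mathsf W_q=\mathsf W^{(1)}_{q_1}\otimes\cdots\otimes\mathsf W^{(n)}_{q_n}$, apply the single-mode theorem of \cite{AbreuBalazsHolighausLuefSpeckbacher} to each $Q_j$, and conclude that the tensor product of the single-mode bases is a basis of $\mathcal H_{\mathbf d}$ with $|Q|=d$ elements; your Kronecker-determinant argument, the remark that convention mismatches only act on $Q_j$ by automorphisms of $\mathbb Z_{d_j}^2$, and the common-kernel argument for minimality make explicit what the paper leaves implicit.
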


\begin{proof}
For each \(j\), Theorem~3 of
\cite{AbreuBalazsHolighausLuefSpeckbacher} shows that
\[
\bigl\{
\mathsf W_{q_j}^{(j)}
\mathsf h_{d_j}^{\lambda_j}
:q_j\in Q_j
\bigr\}
\]
is a basis of \(\mathbb C^{d_j}\). Under the identification $\mathcal H_{\mathbf d}
=
\bigotimes_{j=1}^n\mathbb C^{d_j}$,
the multimode Weyl operators factor  as
\[
\mathsf W_q
=
\mathsf W_{q_1}^{(1)}
\otimes\cdots\otimes
\mathsf W_{q_n}^{(n)},
\qquad
q=(q_1,\ldots,q_n)\in K_D.
\]
Therefore, $\{\mathsf W_qv_\alpha:q\in Q\}$  is the tensor product of the bases
associated with the sets \(Q_j\). It is consequently a basis of
\(\mathcal H_{\mathbf d}\).

Since $|Q|= \dim\mathcal H_{\mathbf d}$, this number of scalar coefficients is minimal. The reconstruction
claim follows from
\[
B_0(c)_q
=
\frac1{\sqrt d\|v_\alpha\|}
\langle c,\mathsf W_qv_\alpha\rangle.
\]
\end{proof}

\subsection{General symplectically integral lattices}

Let \((\Lambda,\chi)\) be symplectically integral, and choose a
symplectic normal form
\[
\Lambda=T\Lambda_D,
\]
and hence  $\Lambda^\circ=T\Lambda_D^\circ$.

The transported stabilizer phase need not coincide with the canonical
phase on \(\Lambda_D\). Define
\[
\widetilde\chi_D(\lambda)
:=
\chi(T\lambda),
\qquad \lambda\in\Lambda_D.
\]
Then \(\widetilde\chi_D/\chi_D\) is a character of \(\Lambda_D\).
By nondegeneracy of the symplectic pairing, there exists $z_0\in\mathbb R^{2n}/\Lambda_D^\circ$
such that
\[
\frac{\widetilde\chi_D(\lambda)}{\chi_D(\lambda)}
=
e^{2\pi i\sigma(z_0,\lambda)},
\qquad \lambda\in\Lambda_D.
\]
Let \(U_T\) be a metaplectic lift of \(T\), and set
\[
V_T:=U_T\rho(z_0).
\]
Then  $V_T\mathcal C_{\Lambda_D,\chi_D}=
\mathcal C_{\Lambda,\chi}$.

Transport the Gabor windows, dual windows, and probes by
\[
g_i^\Lambda=V_Tg_i,\qquad
\gamma_i^\Lambda=V_T\gamma_i,\qquad
h_\alpha^\Lambda=V_Th_\alpha.
\]
Then \(\cG(\mathbf g^\Lambda,\Lambda)\) is a multi-window Gabor frame
with dual windows \(\boldsymbol\gamma^\Lambda=(\gamma_1^\Lambda,\ldots,\gamma_m^\Lambda)\).

The map \(q\mapsto T\nu_q+\Lambda\) identifies \(K_D\) with
\(\Lambda^\circ/\Lambda\).

\begin{corollary}[General symplectically integral lattices]
\label{cor:general-lattice-block}
For \(c\in\mathcal H_{\mathbf d}\), set
\[
\psi_c^\Lambda:=V_T\psi_c^D.
\]
Then, for every \(q\in K_D\),
\[
a_{T\nu_q}
 \bigl(
   \Phi_{\boldsymbol\gamma^\Lambda,\alpha}
        (\psi_c^\Lambda)
 \bigr)
=
e^{-2\pi i\sigma(z_0,\nu_q)}
a_{\nu_q}
 \bigl(
   \Phi_{\boldsymbol\gamma,\alpha}(\psi_c^D)
 \bigr).
\]

Assume that \(v_\alpha\neq0\), and define the phase-corrected block
\[
\mathcal B_{\Lambda,\alpha}(\psi_c^\Lambda)_q
:=
e^{2\pi i\sigma(z_0,\nu_q)}
\frac{\sqrt{\operatorname{vol}(\Lambda)}}{\|v_\alpha\|}
a_{T\nu_q}
 \bigl(
   \Phi_{\boldsymbol\gamma^\Lambda,\alpha}
        (\psi_c^\Lambda)
 \bigr),
\qquad q\in K_D.
\]
Then $\mathcal B_{\Lambda,\alpha}(\psi_c^\Lambda)=B_0(c)$,
and hence
\[
\bigl\|
\mathcal B_{\Lambda,\alpha}(\psi_c^\Lambda)
\bigr\|_{\ell^2(K_D)}
=
\|c\|_{\mathcal H_{\mathbf d}}.
\]
\end{corollary}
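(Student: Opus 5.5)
The proof reduces to the rectangular case and Theorem~\ref{thm:finite-block-recovery}. By \eqref{eq:coeff}, and since $\operatorname{vol}(\Lambda)=\operatorname{vol}(\Lambda_D)=d$ because $T$ is symplectic, the left-hand side is
\[
a_{T\nu_q}\bigl(\Phi_{\boldsymbol\gamma^\Lambda,\alpha}(\psi_c^\Lambda)\bigr)
=\frac1d\bigl\langle V_T\psi_c^D,\rho(T\nu_q)V_Th_\alpha\bigr\rangle .
\]
Here I use $h_\alpha^\Lambda=\sum_s\overline{\alpha_s}g_s^\Lambda=V_Th_\alpha$, which follows from linearity of $V_T$. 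Unitarity of $V_T$ on $L^2$, extended to the $M^\infty$--$M^1$ duality because $U_T$ and $\rho(z_0)$ are isomorphisms of $M^1$ and weak-$*$ continuous on $M^\infty$, turns this into $\frac1d\langle\psi_c^D,V_T^*\rho(T\nu_q)V_Th_\alpha\rangle$.

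The core step is the conjugation identity. Metaplectic covariance \eqref{eq:metaplectic-covariance} gives $U_T^*\rho(T\nu_q)U_T=\rho(\nu_q)$. The commutation relation \eqref{eq:weyl-commutation} then gives
\[
\rho(-z_0)\rho(\nu_q)\rho(z_0)=e^{-2\pi i\sigma(-z_0,\nu_q)}\rho(\nu_q)=e^{2\pi i\sigma(z_0,\nu_q)}\rho(\nu_q).
\]
Pulling this scalar out of the second (antilinear) slot of the pairing conjugates it. The result is
\[
a_{T\nu_q}\bigl(\Phi_{\boldsymbol\gamma^\Lambda,\alpha}(\psi_c^\Lambda)\bigr)
=e^{-2\pi i\sigma(z_0,\nu_q)}\frac1d\langle\psi_c^D,\rho(\nu_q)h_\alpha\rangle
=e^{-2\pi i\sigma(z_0,\nu_q)}a_{\nu_q}\bigl(\Phi_{\boldsymbol\gamma,\alpha}(\psi_c^D)\bigr),
\]
using \eqref{eq:coeff} again for $\Lambda_D$. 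I expect the main obstacle to be exactly this sign and convention bookkeeping. One must keep the order $V_T=U_T\rho(z_0)$ straight, so that $V_T^*=\rho(-z_0)U_T^*$. One must also note that any phase ambiguity of $U_T$ cancels between $U_T$ and $U_T^*$. Finally, $z_0$ is only defined modulo $\Lambda_D^\circ$, and it must be fixed once; the phase $e^{2\pi i\sigma(z_0,\nu_q)}$ depends on the representative, but the same $z_0$ enters $V_T$, so the identity holds consistently. I would check that the sign in \eqref{eq:weyl-commutation} is applied with $z=-z_0$, $w=\nu_q$ in the stated order.

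For the second part, multiply the identity by $e^{2\pi i\sigma(z_0,\nu_q)}\sqrt{d}/\|v_\alpha\|$. By the first claim of Theorem~\ref{thm:finite-block-recovery}, $a_{\nu_q}(\Phi_{\boldsymbol\gamma,\alpha}(\psi_c^D))=\frac1d\langle c,\mathsf W_qv_\alpha\rangle$. Hence
\[
\mathcal B_{\Lambda,\alpha}(\psi_c^\Lambda)_q=\frac{1}{\sqrt d\,\|v_\alpha\|}\langle c,\mathsf W_qv_\alpha\rangle=B_0(c)_q .
\]
The norm identity is then the isometry statement (ii) of Theorem~\ref{thm:finite-block-recovery}, which applies because $v_\alpha\neq0$ and is in turn Lemma~\ref{lem:finite-weyl-tight-frame}. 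I would also remark that $\mathcal G(\mathbf g^\Lambda,\Lambda)$ being a frame with dual $\boldsymbol\gamma^\Lambda$ follows from the metaplectic and Weyl invariance recalled in Section~\ref{subsec:metaplectic-operators}. This ensures that $\Phi_{\boldsymbol\gamma^\Lambda,\alpha}$ and its coefficients are well defined, although only formula \eqref{eq:coeff} is actually used.
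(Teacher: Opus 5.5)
Your proposal is correct and follows the paper's route: conjugate $\rho(T\nu_q)$ by $V_T=U_T\rho(z_0)$, move $V_T$ across the $M^\infty$--$M^1$ pairing, and then invoke Theorem~\ref{thm:finite-block-recovery} for the block identity and the isometry. Your sign bookkeeping is the careful version. You correctly get $\rho(-z_0)\rho(\nu_q)\rho(z_0)=e^{2\pi i\sigma(z_0,\nu_q)}\rho(\nu_q)$ and conjugate this scalar as it leaves the antilinear slot, which yields the stated factor $e^{-2\pi i\sigma(z_0,\nu_q)}$. The paper's written proof instead records the conjugation identity with $e^{-2\pi i\sigma(z_0,\nu_q)}$ and pulls it out unconjugated, two compensating slips that give the same final formula.
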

\begin{proof}
Since \(T\) is symplectic, $\operatorname{vol}(\Lambda)
=
\operatorname{vol}(\Lambda_D)$. Moreover, metaplectic covariance gives
$U_T^*\rho(T\nu_q)U_T=\rho(\nu_q)$. Since \(V_T=U_T\rho(z_0)\), we obtain
\[
\begin{aligned}
V_T^*\rho(T\nu_q)V_T
&=
\rho(z_0)^*
U_T^*\rho(T\nu_q)U_T
\rho(z_0)\\
&=
\rho(z_0)^*\rho(\nu_q)\rho(z_0)\\
&=
e^{-2\pi i\sigma(z_0,\nu_q)}\rho(\nu_q).
\end{aligned}
\]
Therefore,
\[
\begin{aligned}
&
a_{T\nu_q}
 \bigl(
   \Phi_{\boldsymbol\gamma^\Lambda,\alpha}
        (\psi_c^\Lambda)
 \bigr)
\\
&\quad=
\frac{1}{\operatorname{vol}(\Lambda)}
\left\langle
V_T\psi_c^D,
\rho(T\nu_q)V_Th_\alpha
\right\rangle
\\
&\quad=
\frac{1}{\operatorname{vol}(\Lambda_D)}
\left\langle
\psi_c^D,
V_T^*\rho(T\nu_q)V_Th_\alpha
\right\rangle
\\
&\quad=
e^{-2\pi i\sigma(z_0,\nu_q)}
\frac{1}{\operatorname{vol}(\Lambda_D)}
\left\langle
\psi_c^D,\rho(\nu_q)h_\alpha
\right\rangle
\\
&\quad=
e^{-2\pi i\sigma(z_0,\nu_q)}
a_{\nu_q}
 \bigl(
   \Phi_{\boldsymbol\gamma,\alpha}(\psi_c^D)
 \bigr).
\end{aligned}
\]
Multiplying componentwise by
\(e^{2\pi i\sigma(z_0,\nu_q)}\), and using
\(\operatorname{vol}(\Lambda)=\operatorname{vol}(\Lambda_D)\), gives $\mathcal B_{\Lambda,\alpha}(\psi_c^\Lambda)=B_0(c)$.

The norm identity now follows from
Theorem~\ref{thm:finite-block-recovery}.
\end{proof}

\subsection{Example: the single-mode ideal comb}

Let
\[
\Lambda_d
=
\sqrt d\,\Z\times\sqrt d\,\Z,
\qquad
\Lambda_d^\circ
=
\frac1{\sqrt d}\Z\times\frac1{\sqrt d}\Z,
\qquad d\ge2,
\]
and consider the ideal comb
\[
f_0(x)
=
\sum_{\ell\in\Z}
\delta_0(x-\ell\sqrt d).
\]
Then \(f_0\in\cC_{\Lambda_d}\), and its Zak transform is concentrated
at the origin of the syndrome torus with logical vector \(e_0\in\C^d\), i.e., $\cZ_df_0=e_0\,\delta_0$.

Let $\mathbf g=(g_1,\ldots,g_m)\in M^1(\R)\otimes\C^m$
generate a multi-window Gabor frame over \(\Lambda_d\), and let
\[
G(x,\omega)
=
\bigl(
\cZ_dg_1(x,\omega),\ldots,\cZ_dg_m(x,\omega)
\bigr)
\]
be its \(d\times m\) Zak matrix. For the unweighted probe
\(\mathbf 1_m=(1,\ldots,1)\in\C^m\), set
\[
v
:=
G(0,0)\mathbf1_m
=
\sum_{j=1}^m\cZ_dg_j(0,0)
\in\C^d.
\]

In the present single-mode setting, the coordinates of this logical
probe can be written directly in terms of the Gabor windows. Indeed,
by the definition of the vector-valued Zak transform,
\[
\bigl(\cZ_dg_j(0,0)\bigr)_k
=
Z_dg_j\left(\frac{k}{\sqrt d},0\right),
\qquad
k\in\Z_d.
\]
Since $Z_dg_j(x,0)
=
\sum_{\ell\in\Z}
g_j(x-\ell\sqrt d)$,
we obtain
\begin{equation*}
G_{k,j}(0,0)
=
\sum_{\ell\in\Z}
g_j\left(
\frac{k-\ell d}{\sqrt d}
\right),
\qquad
k\in\Z_d,\quad j=1,\ldots,m.
\end{equation*}
Consequently,
\begin{equation}
\label{eq:single-mode-logical-probe}
v_k
=
\sum_{j=1}^m
\sum_{\ell\in\Z}
g_j\left(
\frac{k-\ell d}{\sqrt d}
\right),
\qquad
k\in\Z_d.
\end{equation}
Thus \(v\) consists of the \(d\) periodized samples of the sum of the
Gabor windows at the representatives
\[
0,\frac1{\sqrt d},\ldots,\frac{d-1}{\sqrt d}.
\]

We now compute the distinguished coefficient block of \(f_0\).
For $q=(r,s)\in
K_d
\cong
\Z_d\times\Z_d$,
the finite-block formula gives
\begin{equation*}
B_0(e_0)_{r,s} 
=
\frac1{\sqrt d \|v\|}
\ip{e_0}{\mathsf W_{r,s}v},
\end{equation*}
where
\[
\mathsf W_{r,s}
=
e^{\pi i rs/d}X^rZ^s.
\]
Since
\[
Xe_k=e_{k+1},
\qquad
Ze_k=e^{2\pi i k/d}e_k,
\]
we have
\[
\bigl(\mathsf W_{r,s}v\bigr)_0
=
e^{-\pi i rs/d}v_{-r},
\]
where indices are understood modulo \(d\). Then $\ip{e_0}{\mathsf W_{r,s}v}
=
e^{\pi i rs/d}\overline{v_{-r}}$. Therefore
\begin{equation*}
B_0(e_0)_{r,s} 
=
\frac1{\sqrt d \|v\|}
e^{\pi i rs/d}
\overline{v_{-r}}.
\end{equation*}
Using \eqref{eq:single-mode-logical-probe}, this becomes the completely
explicit formula
\begin{equation*}
B_0(e_0)_{r,s} 
=
\frac1{\sqrt d \|v\|}
e^{\pi i rs/d}
\sum_{j=1}^m
\sum_{\ell\in\Z}
\overline{
g_j\left(
\frac{\ell d-r}{\sqrt d}
\right)
}.
\end{equation*}

Hence the dependence of the coefficient block separates very clearly.
The codeword \(f_0\) contributes only its logical value \(e_0\), while
the choice of Gabor frame determines the \(d\) numbers
\[
A_r(g)
:=
\sum_{j=1}^m
\sum_{\ell\in\Z}
g_j\left(
\frac{\ell d-r}{\sqrt d}
\right),
\qquad
r\in\Z_d.
\]
In terms of these periodized samples,
\[
B_0(e_0)_{r,s} 
=
\frac1{\sqrt d \|v\|}
e^{\pi i rs/d}
\overline{A_r(g)}.
\]
In particular, for fixed \(r\), varying \(s\) changes only the Weyl
phase:
\[
B_0(e_0)_{r,s} 
=
e^{\pi i rs/d}\,B_0(e_0)_{r,0} ,
\qquad
\abs{B_0(e_0)_{r,s} }
=
\abs{B_0(e_0)_{r,0} }.
\]
Thus, although the canonical fundamental block contains \(d^2\)
coefficients, for the logical basis state \(e_0\) its values are
explicitly determined by the \(d\) periodized samples
\(A_r(g)\) of the chosen Gabor frame. The remaining coefficients on
the full adjoint lattice are then obtained from
\eqref{eq:stabilizer-coefficient-covariance}.

 \subsection{Example: block action of the Hadamard gate}
\label{subsec:hadamard-block-action}

Consider the square GKP qubit \(d=2\). We order the elements of \(K_2\) as
\[
(0,0),(0,1),(1,0),(1,1).
\]
For a fixed logical probe $v=(v_0,v_1)^T\in \mathbb R^2$,
the coefficient-block map is
\[
B_0:\mathcal H_2\longrightarrow\ell^2(K_2),
\qquad
B_0(c)=\left(\frac1{\sqrt 2 \|v\|}\langle c,W_{r,s}v\rangle\right)_{(r,s)\in K_2}.
\]
With
\[
\mathsf W_{r,s}=e^{\pi i rs/2}X^rZ^s,
\]
their matrices are
\[B_0
=
\frac1{\sqrt 2 \|v\|}
\begin{pmatrix}
 v_0& v_1\\
 v_0&- v_1\\
 v_1& v_0\\
i v_1&-i v_0
\end{pmatrix}.
 \qquad \text{and} \qquad B_0^* =
\frac1{\sqrt 2 \|v\|}\begin{pmatrix}
{v_0} & {v_0} &
{v_1} & -i{v_1}\\
{v_1} & -{v_1} &
{v_0} & i{v_0}
\end{pmatrix}.\]

Now  let $
H=\frac1{\sqrt2}
\begin{pmatrix}
1&1\\
1&-1
\end{pmatrix}$  be the Hadamard gate.  Set $u:=v_0+v_1$ and  $w:=v_0-v_1$. Then
\[
B_0HB^*_0
=
\frac{1}{2\sqrt2 \|v\|^2}
\begin{pmatrix}
u{v_0}+w{v_1}
&
u{v_0}-w{v_1}
&
u{v_1}+w{v_0}
&
-i u{v_1}+i w{v_0}
\\[1mm]
w{v_0}+u{v_1}
&
w{v_0}-u{v_1}
&
w{v_1}+u{v_0}
&
-i w{v_1}+i u{v_0}
\\[1mm]
u{v_0}-w{v_1}
&
u{v_0}+w{v_1}
&
u{v_1}-w{v_0}
&
-i u{v_1}-i w{v_0}
\\[1mm]
-iw{v_0}+iu{v_1}
&
-iw{v_0}-iu{v_1}
&
-iw{v_1}+iu{v_0}
&
-w{v_1}-u{v_0}
\end{pmatrix}.
\]
 
Observe that the above matrix depends on the probe vector we choose. So a particularly convenient choice is to take the probe \(v\) to be
an eigenvector of the Hadamard gate. For example, $v=(\cos(\pi/8),\sin(\pi/8))$ is such that $Hv=v$.
Then we have 
\[
B_0H=
\begin{pmatrix}
1&0&0&0\\
0&0&1&0\\
0&1&0&0\\
0&0&0&-1
\end{pmatrix}B_0.
\]
Thus, the Hadamard gate exchanges the coefficients indexed by
\((0,1)\) and \((1,0)\), fixes the coefficient indexed by \((0,0)\),
and changes the sign of the coefficient indexed by \((1,1)\).

\section{Normalizable lattice envelopes and logical coefficient recovery}

Under the general hypotheses used in this section, the
lattice-envelope construction produces states in
$L^2(\mathbb R^n)$ and hence normalizable GKP approximants.
Finite mean oscillator energy is a stronger property, requiring
additional weighted regularity in both position and momentum.
We therefore use ``normalizable'' throughout the general theory.

Ideal GKP codewords are distributional and therefore do not represent normalizable finite-energy oscillator states. This motivates replacing the infinite stabilizer orbit by a suitably decaying lattice envelope. In the time--frequency framework developed here, such regularizations arise naturally as Gabor multipliers: the logical information is encoded in a fixed localized window, while the multiplier symbol controls the decay along the stabilizer lattice; see \cite{FeichtingerNowak,Balazs,FeichtingerHalvdanssonLuef}. In this section we study these normalizable approximations, their convergence to the ideal code, the asymptotic preservation of the logical inner product, and the recovery of logical information from their adjoint-lattice coefficients.

For the rest of the paper we fix a multi-window Gabor frame $\mathbf g\in M^1(\mathbb R^n)\otimes \mathbb C^m$  for the lattice $\Lambda_D$  and its dual $\boldsymbol\gamma$.
\label{sec:regularization}

\subsection{Orbit expansion and logical coordinates}

Let
\[
v_r:=\cZ_Dg_r(0,0)\in\cH_{\mathbf d} \qquad r=1,\ldots,m.
\]
The modulation-space Gabor reconstruction formula applied to the ideal
codeword gives
\begin{equation*}
\psi_c
=\sum_{r=1}^m\sum_{\lambda\in\Lambda_D}
\ip{\psi_c}{\rho(\lambda)g_r}\rho(\lambda)\gamma_r.
\end{equation*}
Zak duality \eqref{eq:vector-zak-duality} gives \(\ip{\psi_c}{g_r}=\langle \mathcal Z^{-1}_D(c\delta_{(0,0)}),g_r\rangle=\langle c\delta_{(0,0)},\mathcal Z_D(g_r)\rangle =\ip{c}{v_r}\). Define
\begin{equation*}
\Gamma_c:=\sum_{r=1}^m\ip{c}{v_r}\gamma_r\in M^1(\R^n).
\end{equation*}
Then
\begin{equation}
\label{eq:ideal-orbit-expansion}
    \psi_c  =\sum_{r=1}^m\sum_{\lambda\in\Lambda_D}
\ip{\rho(-\lambda)\psi_c}{g_r}\rho(\lambda)\gamma_r=\sum_{r=1}^m\sum_{\lambda\in\Lambda_D}
\chi(\lambda)\ip{\psi_c}{g_r}\rho(\lambda)\gamma_r  =\sum_{\lambda\in\Lambda_D}S_\lambda\Gamma_c    
\end{equation}
in the weak-* topology of \(M^\infty\). Applying the vector Zak transform
and Fourier inversion on the syndrome torus gives
\begin{equation}
\label{eq:Gamma-fibre-normalization}
\cZ_D\Gamma_c(0,0)=\sqrt{d}\,c.
\end{equation}

\subsection{Lattice-envelope regularization and Gabor multipliers}
\label{subsec:lattice-envelope-regularization}

The regularization introduced below is naturally a Gabor multiplier.
Giving a scalar symbol
\[
w=(w_\lambda)_{\lambda\in\Lambda_D}\in \ell^\infty(\Lambda_D),
\]
the associated multi-window Gabor multiplier is formally given by
\begin{equation*}
\mathcal M_w^{\boldsymbol\gamma,\mathbf g}f
:=
\sum_{r=1}^m
\sum_{\lambda\in\Lambda_D}
w_\lambda
\ip{f}{\rho(\lambda)g_r}
\rho(\lambda)\gamma_r \qquad f\in M^\infty(\mathbb R^n).
\end{equation*}
Gabor multipliers are standard time--frequency operators; see
\cite{FeichtingerNowak,Balazs}, and see also
\cite{FeichtingerHalvdanssonLuef} for a related operator-convolution
perspective.

We now apply this construction to an ideal codeword
\(\psi_c\in\cC_{\Lambda_D}\).  Since
\[
\ip{\psi_c}{\rho(\lambda)g_r}
=
\chi_D(\lambda)\ip{\psi_c}{g_r}
=
\chi_D(\lambda)\ip{c}{v_r},
\qquad
v_r=\cZ_Dg_r(0,0),
\]
we obtain
\begin{align*}
\mathcal M_w^{\boldsymbol \gamma,\mathbf g}\psi_c
&=
\sum_{r=1}^m
\ip{c}{v_r}
\sum_{\lambda\in\Lambda_D}
w_\lambda\chi_D(\lambda)
\rho(\lambda)\gamma_r
\nonumber\\
&=
\sum_{\lambda\in\Lambda_D}
w_\lambda S_\lambda
\left(
\sum_{r=1}^m\ip{c}{v_r}\gamma_r
\right)
\nonumber\\
&=
\sum_{\lambda\in\Lambda_D}
w_\lambda S_\lambda\Gamma_c .
\end{align*}
Thus, the lattice-envelope regularization is precisely the Gabor
multiplier
\begin{equation}
\label{eq:regularized-codeword}
\Psi_c^w
:=
\mathcal M_w^{\gamma,g}\psi_c
=
\sum_{\lambda\in\Lambda_D}
w_\lambda S_\lambda\Gamma_c .
\end{equation}

This interpretation separates the two roles in the construction.
The function $\Gamma_c
=
\sum_{r=1}^m\ip{c}{v_r}\gamma_r$
contains the logical information, whereas the multiplier symbol
\(w\) controls the envelope of the stabilizer orbit.  The ideal
codeword corresponds formally to the constant symbol $w_\lambda\equiv1$, while decaying symbols produce normalizable approximations.  In
particular, the regularization of a GKP codeword is not an additional
construction external to Gabor analysis: it is the action of a
standard Gabor multiplier on the ideal distributional codeword.

The usual modulation-space analysis and synthesis estimates give
constants \(C_1,C_2>0\), independent of \(c\) and \(w\), such that
\begin{align*}
\norm{\Psi_c^w}_{L^2}
&\le C_2\norm{w}_{\ell^2}\norm{c},
&& w\in\ell^2(\Lambda_D),
\\
\norm{\Psi_c^w}_{M^1}
&\le C_1\norm{w}_{\ell^1}\norm{c},
&& w\in\ell^1(\Lambda_D).
\end{align*}
Indeed, since the coefficients
\(\ip{c}{v_r}\) depend continuously on \(c\), these estimates follow
from the boundedness of the corresponding Gabor synthesis maps.
In particular, every \(\ell^2\)-symbol produces a normalizable state.

For \(\nu\in\Lambda_D\), set
\[
(T_\nu w)_\lambda:=w_{\lambda-\nu}.
\]
The stabilizer action has a simple interpretation at the level of the
multiplier symbol. Since \(\lambda\mapsto S_\lambda\) is a
representation,
\begin{align*}
S_\nu\Psi_c^w
&=
\sum_{\lambda\in\Lambda_D}
w_\lambda S_{\nu+\lambda}\Gamma_c\\
&=
\sum_{\kappa\in\Lambda_D}
w_{\kappa-\nu}S_\kappa\Gamma_c
=
\Psi_c^{T_\nu w}.
\end{align*}
Hence $S_\nu\Psi_c^w=\Psi_c^{T_\nu w}$, and consequently
\begin{equation}
\label{eq:stabilizer-defect}
\norm{S_\nu\Psi_c^w-\Psi_c^w}_2
\le
C_2
\norm{T_\nu w-w}_{\ell^2}
\norm{c}.
\end{equation}
Thus, approximate stabilization is controlled precisely by the
translation invariance of the Gabor-multiplier symbol.

If \(\{w^{(\varepsilon)}\}_{\varepsilon>0}\) is uniformly bounded in
\(\ell^\infty(\Lambda_D)\) and
\[
w^{(\varepsilon)}
\stackrel{w^*}{\longrightarrow}
\mathbf1
\quad\text{in }\ell^\infty(\Lambda_D),
\]
then
\begin{equation}
\label{eq:weak-star-ideal-limit}
\Psi_c^{w^{(\varepsilon)}}
\stackrel{w^*}{\longrightarrow}
\psi_c
\quad\text{in }M^\infty(\R^n).
\end{equation}
Indeed, let \(\varphi\in M^1(\R^n)\). Since
\(\Gamma_c,\varphi\in M^1(\R^n)\), then $\left(
\ip{S_\lambda\Gamma_c}{\varphi}
\right)_{\lambda\in\Lambda_D}$
belongs to \(\ell^1(\Lambda_D)\). Therefore
\begin{align*}
\ip{\Psi_c^{w^{(\varepsilon)}}}{\varphi}
&=
\sum_{\lambda\in\Lambda_D}
w_\lambda^{(\varepsilon)}
\ip{S_\lambda\Gamma_c}{\varphi}
\longrightarrow
\sum_{\lambda\in\Lambda_D}
\ip{S_\lambda\Gamma_c}{\varphi}
=
\ip{\psi_c}{\varphi},
\end{align*}
where the last equality follows from
\eqref{eq:ideal-orbit-expansion}.

The Gaussian envelopes
\begin{equation*} \label{eq:gaussian-envelope}
w_\lambda^{(\varepsilon)}=e^{-\varepsilon|\lambda|^2}
    \end{equation*}
belong to \(\ell^1(\Lambda_D)\cap\ell^2(\Lambda_D)\), are uniformly
bounded in \(\ell^\infty\), and converge pointwise to \(1\). Hence
\[
\Psi_c^{w^{(\varepsilon)}}\overset{w^*}{\longrightarrow}\psi_c
\quad\text{in }M^\infty(\mathbb R^n).
\]

\subsection{Asymptotically isometric normalizable encoding}
\label{subsec:asymptotic-isometry}

Weak-* convergence alone does not control the norm geometry of the
normalizable states. The next theorem provides such control.

\begin{definition}[Asymptotically translation-invariant envelope]
\label{def:folner-envelope}
A family of nonzero envelopes \(w^{(\varepsilon)}\in\ell^2(\Lambda_D)\) is
called \emph{asymptotically translation invariant} if, for every fixed
\(\nu\in\Lambda_D\),
\begin{equation}
\label{eq:folner-condition}
\frac{\norm{T_\nu w^{(\varepsilon)}-w^{(\varepsilon)}}_{\ell^2}}
{\norm{w^{(\varepsilon)}}_{\ell^2}}
\longrightarrow0.
\end{equation}
\end{definition}

\begin{lemma}
The Gaussian lattice envelopes
\[
w_\lambda^{(\varepsilon)}
=
e^{-\varepsilon|\lambda|^2},
\qquad \lambda\in\Lambda_D,
\]
are asymptotically translation-invariant.
\end{lemma}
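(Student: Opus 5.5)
Fix $\nu\in\Lambda_D$ and put $w=w^{(\varepsilon)}$. The plan is to compare $\|T_\nu w-w\|_{\ell^2}^2$ with $\|w\|_{\ell^2}^2$ directly. Expanding the square and using $\|T_\nu w\|_{\ell^2}=\|w\|_{\ell^2}$ (translation is a bijection of $\Lambda_D$), we get
\[
\|T_\nu w-w\|_{\ell^2}^2
=2\|w\|_{\ell^2}^2-2\operatorname{Re}\sum_{\lambda\in\Lambda_D}w_{\lambda-\nu}w_\lambda .
\]
It therefore suffices to show that the normalized autocorrelation
\[
R_\varepsilon(\nu):=\frac{\sum_{\lambda}e^{-\varepsilon(|\lambda-\nu|^2+|\lambda|^2)}}{\sum_\lambda e^{-2\varepsilon|\lambda|^2}}
\]
tends to $1$ as $\varepsilon\to0$.

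Completing the square gives $|\lambda-\nu|^2+|\lambda|^2=2|\lambda-\nu/2|^2+|\nu|^2/2$, so
\[
R_\varepsilon(\nu)=e^{-\varepsilon|\nu|^2/2}\,\frac{\Theta_\varepsilon(\nu/2)}{\Theta_\varepsilon(0)},
\qquad
\Theta_\varepsilon(y):=\sum_{\lambda\in\Lambda_D}e^{-2\varepsilon|\lambda-y|^2}.
\]
The prefactor tends to $1$. The remaining task is to show $\Theta_\varepsilon(\nu/2)/\Theta_\varepsilon(0)\to1$.

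This is the main step. I would apply Poisson summation over the lattice $\Lambda_D$, with dual lattice $\Lambda_D^*=\{\xi:\xi\cdot\lambda\in\mathbb Z\}$:
\[
\Theta_\varepsilon(y)=\frac{1}{\operatorname{vol}(\Lambda_D)}\Bigl(\frac{\pi}{2\varepsilon}\Bigr)^{n}\sum_{\xi\in\Lambda_D^*}e^{-\pi^2|\xi|^2/(2\varepsilon)}e^{-2\pi i\xi\cdot y}.
\]
The $\xi=0$ term is independent of $y$. The remaining terms are bounded in absolute value by $\sum_{\xi\neq0}e^{-\pi^2|\xi|^2/(2\varepsilon)}$, which tends to $0$ as $\varepsilon\to0$ by dominated convergence, since $|\xi|$ is bounded below on $\Lambda_D^*\setminus\{0\}$. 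Hence $\Theta_\varepsilon(y)=C\varepsilon^{-n}(1+o(1))$ uniformly in $y$, and the ratio tends to $1$.

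For a purely elementary alternative, one can instead compare both sums with the integral $\int_{\mathbb R^{2n}}e^{-2\varepsilon|x|^2}\,dx$ by Riemann-sum arguments. Here the error per lattice cell is controlled by the gradient of the Gaussian, which contributes only a relative $O(\sqrt\varepsilon)$.

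Combining the two steps gives $\|T_\nu w^{(\varepsilon)}-w^{(\varepsilon)}\|_{\ell^2}^2/\|w^{(\varepsilon)}\|_{\ell^2}^2=2(1-R_\varepsilon(\nu))\to0$, which is \eqref{eq:folner-condition}. Finally, each $w^{(\varepsilon)}$ is nonzero and lies in $\ell^2(\Lambda_D)$ by Gaussian decay, so the family is admissible in Definition~\ref{def:folner-envelope}.
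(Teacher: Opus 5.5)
Your proposal is correct and follows the paper's proof. The paper also reduces the claim via $\|T_\nu w-w\|_{\ell^2}^2=2\|w\|_{\ell^2}^2-2\operatorname{Re}\langle T_\nu w,w\rangle$ and the completed square to a ratio of shifted theta sums over $\Lambda_D$, and shows that ratio tends to $1$ by Poisson summation. You spell out the Poisson step explicitly: the $\xi=0$ term dominates, and the nonzero dual-lattice terms vanish uniformly in the shift. The paper only asserts this step.
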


\begin{proof}
Fix $\nu\in\Lambda_D$. A completed-square calculation gives
\[
\frac{
\langle w^{(\varepsilon)},
T_{-\nu}w^{(\varepsilon)}\rangle
}{
\|w^{(\varepsilon)}\|_2^2
}
=
e^{-\varepsilon|\nu|^2/2}
\frac{
\displaystyle
\sum_{\lambda\in\Lambda_D}
e^{-2\varepsilon|\lambda+\nu/2|^2}
}{
\displaystyle
\sum_{\lambda\in\Lambda_D}
e^{-2\varepsilon|\lambda|^2}
}.
\]
By Poisson summation, the quotient of theta sums converges to
$1$ as $\varepsilon\to 0$. Therefore
\[
\frac{
\langle w^{(\varepsilon)},
T_{-\nu}w^{(\varepsilon)}\rangle
}{
\|w^{(\varepsilon)}\|_2^2
}
\longrightarrow1.
\]
Since translations preserve the $\ell^2$-norm,
\[\frac{
\|T_{-\nu}w^{(\varepsilon)}
-w^{(\varepsilon)}\|_2^2
}{
\|w^{(\varepsilon)}\|_2^2
}
=
2-
2\operatorname{Re}
\frac{
\langle w^{(\varepsilon)},
T_{-\nu}w^{(\varepsilon)}\rangle
}{
\|w^{(\varepsilon)}\|_2^2
}
\longrightarrow0.
\]
\end{proof}

\begin{theorem}[Asymptotic isometry]
\label{thm:asymptotic-isometry}
Let \(w^{(\varepsilon)}\) satisfy \eqref{eq:folner-condition}. Define
\begin{equation*}
E_\varepsilon:\mathcal H_{\mathbf d} \to L^2(\mathbb R^n),\qquad c\longmapsto  E_\varepsilon c
:=d^{-1/4}
\frac{\Psi_c^{w^{(\varepsilon)}}}
{\norm{w^{(\varepsilon)}}_{\ell^2}}.
\end{equation*}
Then
\begin{equation*}
E_\varepsilon^*E_\varepsilon
\longrightarrow I_{\cH_{\mathbf d}}
\end{equation*}
in operator norm. Equivalently,
\begin{equation}
\label{eq:asymptotic-gram}
\frac{\ip{\Psi_c^{w^{(\varepsilon)}}}
{\Psi_{c'}^{w^{(\varepsilon)}}}}
{\sqrt{d}\,
\norm{w^{(\varepsilon)}}_{\ell^2}^2}
\longrightarrow\ip{c}{c'}
\end{equation}
uniformly for \(c,c'\) in bounded subsets of \(\cH_{\mathbf d}\).
\end{theorem}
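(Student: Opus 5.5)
The plan is to compute the Gram form of the regularized states directly from the orbit expansion \eqref{eq:regularized-codeword}, reduce it to a weighted sum over a single lattice index, and then pass to the limit by dominated convergence. Write $w=w^{(\varepsilon)}$. Since $\Gamma_c,\Gamma_{c'}\in M^1(\R^n)$ and $w\in\ell^2(\Lambda_D)$, the double series converges absolutely (the sequence $\ip{S_\nu\Gamma_c}{\Gamma_{c'}}$ is in $\ell^1(\Lambda_D)$ by \eqref{weak_conv}, and $w\otimes\overline w$ convolved against it is fine by Young's inequality). Using that $\lambda\mapsto S_\lambda$ is a unitary representation, $S_\kappa^*S_\lambda=S_{\lambda-\kappa}$, I would obtain
\[
\ip{\Psi_c^{w}}{\Psi_{c'}^{w}}
=\sum_{\lambda,\kappa\in\Lambda_D} w_\lambda\overline{w_\kappa}\,
\ip{S_{\lambda-\kappa}\Gamma_c}{\Gamma_{c'}}
=\sum_{\nu\in\Lambda_D} R_w(\nu)\,\ip{S_\nu\Gamma_c}{\Gamma_{c'}},
\qquad
R_w(\nu):=\sum_{\kappa\in\Lambda_D} w_{\kappa+\nu}\overline{w_\kappa}.
\]
Here $R_w(\nu)=\ip{T_{-\nu}w}{w}_{\ell^2}$ is the autocorrelation of the symbol.

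Next I would normalize by $\norm{w}_{\ell^2}^2$. By Cauchy--Schwarz, $|R_w(\nu)|\le\norm{w}_{\ell^2}^2$. Moreover,
\[
\Bigl|\tfrac{R_w(\nu)}{\norm{w}_{\ell^2}^2}-1\Bigr|
=\tfrac{|\ip{T_{-\nu}w-w}{w}|}{\norm{w}_{\ell^2}^2}
\le\tfrac{\norm{T_{-\nu}w-w}_{\ell^2}}{\norm{w}_{\ell^2}},
\]
which tends to $0$ for each fixed $\nu$ by \eqref{eq:folner-condition}, applied to $-\nu$. Since $(\ip{S_\nu\Gamma_c}{\Gamma_{c'}})_\nu\in\ell^1(\Lambda_D)$, dominated convergence gives
\[
\frac{\ip{\Psi_c^{w}}{\Psi_{c'}^{w}}}{\norm{w}_{\ell^2}^2}
\longrightarrow\sum_{\nu\in\Lambda_D}\ip{S_\nu\Gamma_c}{\Gamma_{c'}}
=\ip{\psi_c}{\Gamma_{c'}}_{M^\infty,M^1}.
\]
The last equality is exactly the weak-$*$ orbit expansion \eqref{eq:ideal-orbit-expansion}, tested against $\Gamma_{c'}\in M^1$.

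To identify the limit, I would use Zak duality \eqref{eq:vector-zak-duality} together with $\cZ_D\psi_c=c\,\delta_0$ and the fibre normalization \eqref{eq:Gamma-fibre-normalization}:
\[
\ip{\psi_c}{\Gamma_{c'}}=\ip{c}{\cZ_D\Gamma_{c'}(0,0)}=\sqrt d\,\ip{c}{c'}.
\]
Dividing by $\sqrt d$ yields \eqref{eq:asymptotic-gram} pointwise in $(c,c')$. Since $\ip{E_\varepsilon c}{E_\varepsilon c'}=d^{-1/2}\norm{w}_{\ell^2}^{-2}\ip{\Psi_c^w}{\Psi_{c'}^w}$, this is the statement that the matrix entries of $E_\varepsilon^*E_\varepsilon-I$ in the basis $\{e_r\}$ tend to $0$. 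Because $\cH_{\mathbf d}$ is finite-dimensional and $c\mapsto\Gamma_c$ is linear, entrywise convergence on the finitely many basis pairs gives operator-norm convergence. It also gives uniform convergence of the sesquilinear forms on bounded sets.

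The main point to watch is the justification of the interchange and reindexing in the first display. In particular, one must check that the absolute summability of $\ip{S_\nu\Gamma_c}{\Gamma_{c'}}$ provides a single dominating sequence independent of $\varepsilon$. It does, because the bound $|R_w(\nu)|/\norm{w}^2\le1$ is uniform in $\varepsilon$. One must also keep the ordering convention of the pairing (linear in the first slot) consistent with \eqref{eq:ideal-orbit-expansion}, so that the limit comes out as $\ip{c}{c'}$ rather than its conjugate.
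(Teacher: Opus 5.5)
Your proposal is correct and follows the paper's proof essentially step for step. You rewrite the Gram form through the autocorrelation $R_w$, use asymptotic translation invariance to get $R_w(\nu)/\norm{w}_{\ell^2}^2\to 1$ pointwise with uniform bound $1$, pass to the limit by dominated convergence against the $\ell^1$ sequence of stabilizer matrix coefficients of $\Gamma_c,\Gamma_{c'}$, and identify the limit as $\sqrt d\,\ip{c}{c'}$ via the orbit expansion, Zak duality and the fibre normalization. The only cosmetic differences are that you let $S_\nu$ act on $\Gamma_c$, so the limit appears as $\ip{\psi_c}{\Gamma_{c'}}$ rather than the paper's $\ip{\Gamma_c}{\psi_{c'}}$ (which spares a conjugation), and that you spell out the absolute-convergence justification the paper leaves implicit.
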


\begin{proof}
For \(w\in\ell^2(\Lambda_D)\), set
\[
R_w(\nu)
:=\sum_{\lambda\in\Lambda_D}
w_\lambda\overline{w_{\lambda+\nu}}.
\]
Expanding \eqref{eq:regularized-codeword} and using the unitarity of the
stabilizer representation gives
\begin{equation*}
\ip{\Psi_c^w}{\Psi_{c'}^w}
=\sum_{\nu\in\Lambda_D}
R_w(\nu)\ip{\Gamma_c}{S_\nu\Gamma_{c'}}.
\end{equation*}
By Cauchy--Schwarz,
\(\abs{R_w(\nu)}\le\norm{w}_{\ell^2}^2\). Moreover,
\[
\frac{\abs{R_{w^{(\varepsilon)}}(\nu)
-\norm{w^{(\varepsilon)}}_{\ell^2}^2}}
{\norm{w^{(\varepsilon)}}_{\ell^2}^2}
\le
\frac{\norm{T_{-\nu}w^{(\varepsilon)}-w^{(\varepsilon)}}_{\ell^2}}
{\norm{w^{(\varepsilon)}}_{\ell^2}},
\]
so
\(R_{w^{(\varepsilon)}}(\nu)/\norm{w^{(\varepsilon)}}_2^2\to1\)
for every fixed \(\nu\).

Because \(\Gamma_c,\Gamma_{c'}\in M^1(\mathbb R^n)\), the sequence $\big(\ip{\Gamma_c}{S_\nu\Gamma_{c'}}\big)_{\nu\in\Lambda_D}$ 
belongs to \(\ell^1(\Lambda_D)\), with an \(\ell^1\)-bound proportional to
\(\norm{c}\norm{c'}\). Dominated convergence yields
\[
\frac{\ip{\Psi_c^{w^{(\varepsilon)}}}
{\Psi_{c'}^{w^{(\varepsilon)}}}}
{\norm{w^{(\varepsilon)}}_2^2}
\longrightarrow
\sum_{\nu\in\Lambda_D}
\ip{\Gamma_c}{S_\nu\Gamma_{c'}}.
\]
By \eqref{eq:ideal-orbit-expansion}, the last sum equals
\(\ip{\Gamma_c}{\psi_{c'}}\). Zak duality \eqref{eq:vector-zak-duality} and
\eqref{eq:Gamma-fibre-normalization} give
\[
\ip{\Gamma_c}{\psi_{c'}}
=\sqrt{d}\,\ip{c}{c'}.
\]
This proves \eqref{eq:asymptotic-gram}. Since the logical space is
finite-dimensional, convergence of the associated sesquilinear forms is
equivalent to operator-norm convergence.
\end{proof}
Theorem~\ref{thm:asymptotic-isometry} should be compared with
\cite[Theorem~5.2]{MayrandRoyer}. There, the approximation is
defined using a canonical number-operator envelope and a
quantitative estimate for the recovery of the logical inner
product is obtained. Our construction instead uses
lattice-symbol Gabor multipliers and applies to general
asymptotically translation-invariant envelopes. The advantage is
the flexibility of the regularizing symbol and its direct
compatibility with the time--frequency coefficient model; the
present conclusion is qualitative operator-norm convergence
rather than a quantitative error estimate. We do not claim that
the two regularization procedures coincide at finite parameter.\begin{corollary}[Normalized stabilizer defect]
\label{cor:normalized-stabilizer-defect}
Under the hypotheses of Theorem~\ref{thm:asymptotic-isometry}, for every
fixed \(\nu\in\Lambda_D\) and every nonzero \(c\),
\begin{equation*}
\frac{\norm{S_\nu\Psi_c^{w^{(\varepsilon)}}
-\Psi_c^{w^{(\varepsilon)}}}_2}
{\norm{\Psi_c^{w^{(\varepsilon)}}}_2}
\longrightarrow0.
\end{equation*}
\end{corollary}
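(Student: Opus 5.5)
The plan is to combine the stabilizer-defect estimate \eqref{eq:stabilizer-defect} for the numerator with the asymptotic isometry of Theorem~\ref{thm:asymptotic-isometry} for the denominator.

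First, fix $\nu\in\Lambda_D$ and a nonzero $c$, and write $w=w^{(\varepsilon)}$. By \eqref{eq:stabilizer-defect},
\[
\norm{S_\nu\Psi_c^{w}-\Psi_c^{w}}_2\le C_2\norm{T_\nu w-w}_{\ell^2}\norm{c},
\]
where $C_2$ is independent of $w$ and $c$. Dividing by $\norm{w}_{\ell^2}$ (nonzero by assumption), the asymptotic translation invariance \eqref{eq:folner-condition} gives
\[
\frac{\norm{S_\nu\Psi_c^{w}-\Psi_c^{w}}_2}{\norm{w}_{\ell^2}}\longrightarrow0 .
\]

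Second, I control the denominator from below. Take $c=c'$ in \eqref{eq:asymptotic-gram}:
\[
\frac{\norm{\Psi_c^{w}}_2^2}{\sqrt d\,\norm{w}_{\ell^2}^2}\longrightarrow\norm{c}^2>0 .
\]
Hence for all sufficiently small $\varepsilon$ we have
\[
\norm{\Psi_c^{w}}_2\ge \tfrac12 d^{1/4}\norm{c}\,\norm{w}_{\ell^2}.
\]
In particular $\Psi_c^{w}\neq0$, so the quotient in the statement is defined for small $\varepsilon$.

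Finally, write the quotient as
\[
\frac{\norm{S_\nu\Psi_c^{w}-\Psi_c^{w}}_2}{\norm{w}_{\ell^2}}\cdot\frac{\norm{w}_{\ell^2}}{\norm{\Psi_c^{w}}_2}.
\]
The first factor tends to $0$ by the first step, and the second is eventually bounded by $2d^{-1/4}\norm{c}^{-1}$ by the second step. The product therefore tends to $0$.

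The only point needing care is the lower bound on $\norm{\Psi_c^w}_2$. It relies on Theorem~\ref{thm:asymptotic-isometry} applied with the same normalization $\norm{w}_{\ell^2}$, and on the constant $C_2$ in \eqref{eq:stabilizer-defect} being uniform in $w$, which the paper asserts. Everything else is immediate.
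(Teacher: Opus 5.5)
Your proposal is correct and follows the paper's route: the paper's proof is a one-line instruction to combine \eqref{eq:stabilizer-defect}, the asymptotic translation-invariance condition \eqref{eq:folner-condition}, and the norm asymptotics \eqref{eq:asymptotic-gram} with $c=c'$, which is exactly what you carry out. You also add a useful detail the paper leaves implicit: the eventual lower bound $\norm{\Psi_c^{w^{(\varepsilon)}}}_2\ge\tfrac12 d^{1/4}\norm{c}\,\norm{w^{(\varepsilon)}}_{\ell^2}$, which shows the quotient is well defined for small $\varepsilon$.
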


\begin{proof}
Combine \eqref{eq:stabilizer-defect}, the condition
\eqref{eq:folner-condition}, and the norm asymptotics from
\eqref{eq:asymptotic-gram}.
\end{proof}

\subsection{Recovery from regularized coefficients}
\label{subsec:regularized-recovery}

Fix a probe $\alpha\in \mathbb C^m$ with \(v_\alpha\neq0\). Define  the finite-block analysis map  $B_0^{\omega^{(\varepsilon)}}: \mathcal H_{\mathbf d}\to \ell^2(K_D)$ by 
\begin{align*}
    (B_0^{\omega^{(\varepsilon)}}(c))_q & = \frac1{\sqrt d \|v_\alpha\|} \langle  \Psi_c^{w^{(\varepsilon)}}, \rho(\nu_q) h_\alpha\rangle & \qquad q\in K_D,
\end{align*}

If \(w^{(\varepsilon)}\stackrel{w^*}{\to}\mathbf1\) in
\(\ell^\infty(\Lambda_D)\) and is uniformly bounded, then
\begin{equation*}
\delta_\varepsilon
:=
\norm{B_0^{\omega^{(\varepsilon)}}-B_0}
\longrightarrow0.
\end{equation*}
Indeed, by \eqref{eq:scalar-probe-coeff},
\[
(B_0^{\omega^{(\varepsilon)}} (c))_q
=
\frac1{\sqrt d \|v_\alpha\|}\ip{\Psi_c^{w^{(\varepsilon)}}}
{\rho(\nu_q)h_\alpha} \qquad q\in K_D.
\]
Since \(\rho(\nu_q)h_\alpha\in M^1(\R^n)\) for every  \(q\in K_D\), the weak-* convergence
\eqref{eq:weak-star-ideal-limit} gives, for every fixed
\(c\in\cH_{\mathbf d}\),
\[
(B_0^{\omega^{(\varepsilon)}} (c))_q
\longrightarrow
\frac1{\sqrt d \|v_\alpha\|}
\ip{\psi_c}{\rho(\nu_q)h_\alpha}
=
\frac1{\sqrt d \|v_\alpha\|}
\ip{c}{\mathsf W_qv_\alpha}
=
(B_0(c))_q.
\]
Since \(K_D\) is finite, it follows that
\[
\norm{( B_0^{\omega^{(\varepsilon)}}-B_0)(c)}_{\ell^2(K_D)}
\longrightarrow0
\]
for every \(c\in\cH_{\mathbf d}\). Finally,
\(\cH_{\mathbf d}\) is finite-dimensional, so pointwise convergence
of the linear maps \(B_0^{\omega^{(\varepsilon)}}-B_0\) implies convergence in
operator norm. Hence \(\delta_\varepsilon\to0\).

Theorem~\ref{thm:finite-block-recovery} gives
\begin{equation*}
\norm{B_0(c)}_{\ell^2(K_D)}
=\norm{c},
\end{equation*}
and therefore
\begin{equation*}
\left(
1-\delta_\varepsilon
\right)\norm{c}
\le
\norm{B_0^{\omega^{(\varepsilon)}} (c)}
\le
\left(
1+\delta_\varepsilon
\right)\norm{c}.
\end{equation*}
Thus, \(B_0^{\omega^{(\varepsilon)}}\) is injective whenever
\begin{equation*}
\delta_\varepsilon
<1.
\end{equation*}
For noisy data
\(\widetilde b^{(\varepsilon)}=B_0^{\omega^{(\varepsilon)}}(c)+e\), the least-squares left
inverse
\[
R_\varepsilon
=((B_0^{\omega^{(\varepsilon)}})^*B_0^{\omega^{(\varepsilon)}})^{-1} (B_0^{\omega^{(\varepsilon)}})^*
\]
satisfies
\begin{equation*}
\norm{R_\varepsilon\widetilde b^{(\varepsilon)}-c}
\le
\frac{\norm{e}_{\ell^2(K_D)}}
{1-\delta_\varepsilon}.
\end{equation*}

The corresponding orthogonal projection onto the space of admissible
regularized blocks is
\[
    \Pi_\varepsilon
    :=
    B_0^{\omega^{(\varepsilon)}}
    \bigl(
       (B_0^{\omega^{(\varepsilon)}})^*
       B_0^{\omega^{(\varepsilon)}}
    \bigr)^{-1}
    (B_0^{\omega^{(\varepsilon)}})^*
    =
    B_0^{\omega^{(\varepsilon)}}R_\varepsilon .
\]
Hence, for arbitrary data $b\in\ell^2(K_D)$,
\[
    \Pi_\varepsilon b
    =
    \operatorname*{argmin}_{
        a\in\operatorname{Ran}(B_0^{\omega^{(\varepsilon)}})}
    \|b-a\|_{\ell^2(K_D)},
\]
while $R_\varepsilon b$ is the logical vector associated with this
nearest regularized block.
\section{Syndrome information in coefficient space}
\label{sec:syndrome}

The usual GKP syndrome is encoded in the phases of stabilizer
eigenvalues; in lattice coordinates this amounts to recovering the
displacement modulo the adjoint lattice from a family of symplectic
characters, see for instance
\cite{ConradEisertArzani,TerhalConradVuillot}.
Continuous or ``analog'' GKP syndrome information has also been used
to improve decoding through likelihood-based methods; see
\cite{FukuiTomitaOkamoto,NohChamberland}.

Our purpose here is different but closely related: we show that the
same syndrome character can be read directly from the
adjoint-lattice Gabor coefficients. Proposition~\ref{prop:block-syndrome-covariance}
expresses the syndrome as a phase relation between translated finite
coefficient blocks. We then extract this phase by block correlation,
quantify its stability under coefficient noise, and exploit redundant
stabilizer directions for oversampled syndrome estimation. Finally, we
show that these relations persist asymptotically for the normalizable
regularizations of Section~\ref{sec:regularization}.

\subsection{Block syndrome covariance}

Let \(f\in\cC_{\Lambda_D}\) and a probe $\alpha\in \mathbb C^m$. Then 
the displaced state \(\rho(\zeta)f\) satisfies
\begin{equation*}
\rho(-\lambda)\rho(\zeta)f
=
\chi_D(-\lambda)
e^{2\pi i\sigma(\lambda,\zeta)}
\rho(\zeta)f.
\end{equation*}
Thus, a displacement error converts the stabilizer eigenvalue into the
character
\[
\lambda
\longmapsto
e^{2\pi i\sigma(\lambda,\zeta)}\qquad \lambda\in \Lambda_D.
\]
For \(\lambda\in\Lambda_D\), define the translated and phase-corrected
coefficient block
\begin{equation*}
B_\lambda^{(\zeta)}(f)
:=
\frac{\sqrt d}{\|v_\alpha\|}\left(
e^{\pi i\sigma(\lambda,\nu_q)}
a_{\lambda+\nu_q}
\bigl(
\Phi_{\gamma,\alpha}(\rho(\zeta)f)
\bigr)
\right)_{q\in K_D}
\in\ell^2(K_D).
\end{equation*}
Observe that $B^{(0)}_\lambda(\psi_c)$ is by definition $B_\lambda(c)$ from Corollary~\ref{cor:periodicity-finite-blocks}.

\begin{proposition}[Block syndrome covariance]
\label{prop:block-syndrome-covariance}
Let \(f\in\mathcal C_{\Lambda_D}\),
\(\zeta\in\mathbb R^{2n}\), and
\(\lambda\in\Lambda_D\). Then
\begin{equation}
\label{eq:block-syndrome-covariance}
B_\lambda^{(\zeta)}(f)
=
s_\lambda(\zeta)B_0^{(\zeta)}(f),
\end{equation}
where
\begin{equation}
\label{eq:syndrome-character}
s_\lambda(\zeta)
:=
\chi_D(\lambda)e^{2\pi i\sigma(\lambda,\zeta)}.
\end{equation}
Equivalently, for every \(\mu\in\Lambda_D^\circ\),
\begin{equation}
\label{eq:coefficient-syndrome-covariance}
a_{\mu+\lambda}
\bigl(\Phi_{\gamma,\alpha}(\rho(\zeta)f)\bigr)
=
\chi_D(\lambda)e^{2\pi i\sigma(\lambda,\zeta)}
e^{-\pi i\sigma(\lambda,\mu)}
a_\mu
\bigl(\Phi_{\gamma,\alpha}(\rho(\zeta)f)\bigr).
\end{equation}
\end{proposition}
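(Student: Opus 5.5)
The plan is to prove the coefficient identity \eqref{eq:coefficient-syndrome-covariance} first and then read off the block identity \eqref{eq:block-syndrome-covariance}.

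The displaced state $F:=\rho(\zeta)f$ is an eigenvector of the stabilizers with a twisted eigenvalue. Using the commutation relation \eqref{eq:weyl-commutation} together with $S_\lambda f=f$, we get
\[
S_\lambda\rho(\zeta)f=\chi_D(\lambda)\rho(\lambda)\rho(\zeta)f=e^{-2\pi i\sigma(\lambda,\zeta)}\rho(\zeta)f .
\]
Hence
\[
\rho(-\lambda)F=\chi_D(\lambda)e^{2\pi i\sigma(\lambda,\zeta)}F=s_\lambda(\zeta)F,
\]
where we used $\chi_D(-\lambda)=\chi_D(\lambda)$. This is the identity displayed just before the proposition.

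Rather than passing through the uniqueness-of-coefficients argument, we compute the coefficients directly. By definition \eqref{eq:coeff} with $\operatorname{vol}(\Lambda_D)=d$,
\[
a_{\mu+\lambda}(\Phi_{\gamma,\alpha}(F))=\frac1d\langle F,\rho(\mu+\lambda)h_\alpha\rangle .
\]
By \eqref{eq:weyl-commutation_0}, $\rho(\lambda)\rho(\mu)=e^{-\pi i\sigma(\lambda,\mu)}\rho(\lambda+\mu)$, so
\[
\rho(\mu+\lambda)=e^{\pi i\sigma(\lambda,\mu)}\rho(\lambda)\rho(\mu).
\]
Moving $\rho(\lambda)$ to the other side of the pairing turns it into $\rho(-\lambda)$, and the scalar comes out conjugated. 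This gives
\[
a_{\mu+\lambda}(\Phi_{\gamma,\alpha}(F))=e^{-\pi i\sigma(\lambda,\mu)}\frac1d\langle\rho(-\lambda)F,\rho(\mu)h_\alpha\rangle=e^{-\pi i\sigma(\lambda,\mu)}s_\lambda(\zeta)\,a_\mu(\Phi_{\gamma,\alpha}(F)),
\]
which is \eqref{eq:coefficient-syndrome-covariance}. The pairing manipulations are legitimate for $F\in M^\infty$ and $h_\alpha\in M^1$, because Weyl operators act isometrically on both spaces, with adjoint $\rho(\lambda)^*=\rho(-\lambda)$.

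For the block identity, specialize to $\mu=\nu_q$ and multiply both sides by $\frac{\sqrt d}{\|v_\alpha\|}e^{\pi i\sigma(\lambda,\nu_q)}$. The phases $e^{\pm\pi i\sigma(\lambda,\nu_q)}$ cancel, leaving $s_\lambda(\zeta)$ times the $q$-th entry of $B^{(\zeta)}_0(f)$. Here $B^{(\zeta)}_0(f)$ has no phase correction, since $\sigma(0,\nu_q)=0$. This is exactly \eqref{eq:block-syndrome-covariance}.

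The only delicate point is sign bookkeeping. The orientation of $\sigma$ in \eqref{eq:weyl-commutation_0} must match the convention of Proposition~\ref{prop:coefficient-covariance}. The conjugate-linearity of the pairing in its second slot must also be tracked. As a check, at $\zeta=0$ the computation should reduce to Corollary~\ref{cor:periodicity-finite-blocks}.
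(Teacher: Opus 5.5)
Your proposal is correct and is essentially the paper's proof. The paper also writes $a_{\mu+\lambda}$ as $\frac1d\langle \rho(\zeta)f,\rho(\mu+\lambda)h_\alpha\rangle$, splits off $e^{-\pi i\sigma(\lambda,\mu)}$ via \eqref{eq:weyl-commutation_0}, moves $\rho(\lambda)$ across the pairing, inserts $\rho(-\lambda)\rho(\zeta)f=\chi_D(\lambda)e^{2\pi i\sigma(\lambda,\zeta)}\rho(\zeta)f$, and specializes to $\mu=\nu_q$; the only cosmetic difference is that you get this twisted eigenvalue through $S_\lambda$ instead of commuting $\rho(-\lambda)$ directly past $\rho(\zeta)$.
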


\begin{proof}
Set $F_\zeta:=\rho(\zeta)f$. By the coefficient formula,
\begin{align*}
a_{\mu+\lambda}
\bigl(\Phi_{\gamma,\alpha}(F_\zeta)\bigr)
& =
\frac{1}{d}
\left\langle
F_\zeta,\rho(\mu+\lambda)h_\alpha
\right\rangle  \\ 
    &=
\frac{e^{-\pi i\sigma(\lambda,\mu)}}{d}
\left\langle
F_\zeta,\rho(\lambda)\rho(\mu)h_\alpha
\right\rangle
\\
&=
\frac{e^{-\pi i\sigma(\lambda,\mu)}}{d}
\left\langle
\rho(-\lambda)F_\zeta,\rho(\mu)h_\alpha
\right\rangle .
\end{align*}
Then using  the stabilizer condition we have
\begin{align*}
\rho(-\lambda)F_\zeta
&=
\rho(-\lambda)\rho(\zeta)f
\\
&=
e^{2\pi i\sigma(\lambda,\zeta)}
\rho(\zeta)\rho(-\lambda)f
\\
&=
\chi_D(\lambda)e^{2\pi i\sigma(\lambda,\zeta)}
F_\zeta.
\end{align*}
Substitution yields
\[
a_{\mu+\lambda}
\bigl(\Phi_{\gamma,\alpha}(F_\zeta)\bigr)
=
\chi_D(\lambda)e^{2\pi i\sigma(\lambda,\zeta)}
e^{-\pi i\sigma(\lambda,\mu)}
a_\mu
\bigl(\Phi_{\gamma,\alpha}(F_\zeta)\bigr),
\]
which proves \eqref{eq:coefficient-syndrome-covariance}.

Taking \(\mu=\nu_q\) and multiplying by
\(e^{\pi i\sigma(\lambda,\nu_q)}\), we obtain
\[
e^{\pi i\sigma(\lambda,\nu_q)}
a_{\lambda+\nu_q}
\bigl(\Phi_{\gamma,\alpha}(F_\zeta)\bigr)
=
s_\lambda(\zeta)
a_{\nu_q}
\bigl(\Phi_{\gamma,\alpha}(F_\zeta)\bigr).
\]
This holds for every \(q\in K_D\), and hence
\[
B_\lambda^{(\zeta)}(f)
=
s_\lambda(\zeta)B_0^{(\zeta)}(f).
\]
\end{proof}

\begin{remark}
Taking \(\mu=0\) in
\eqref{eq:coefficient-syndrome-covariance} gives the scalar relation
\[
\frac{
a_\lambda
\bigl(\Phi_{\gamma,\alpha}(\rho(\zeta)f)\bigr)
}{
a_0
\bigl(\Phi_{\gamma,\alpha}(\rho(\zeta)f)\bigr)
}
=
\chi_D(-\lambda)
e^{2\pi i\sigma(\lambda,\zeta)}
\]
whenever the denominator is nonzero. Proposition~
\ref{prop:block-syndrome-covariance} strengthens this observation by
replacing a single reference coefficient with the entire finite block.
\end{remark}

\subsection{Block-correlation extraction of the syndrome}

The covariance relation
\eqref{eq:block-syndrome-covariance} allows the syndrome character to
be extracted by correlating two finite coefficient blocks.

\begin{corollary}[Block-correlation syndrome formula]
\label{cor:block-correlation-syndrome}
Let \(f\in\cC_{\Lambda_D}\), and assume that
\(B_0^{(\zeta)}(f)\neq0\). Then, for every
\(\lambda\in\Lambda_D\),
\begin{equation}
\label{eq:block-correlation-formula}
\frac{
\ip{B_\lambda^{(\zeta)}(f)}
{B_0^{(\zeta)}(f)}
}{
\norm{B_0^{(\zeta)}(f)}_{\ell^2(K_D)}^2
}
=
\chi_D(\lambda)
e^{2\pi i\sigma(\lambda,\zeta)}.
\end{equation}

In particular, if \(f=\psi_c\neq 0 \), with  
\(v_\alpha=G(0,0)\overline\alpha\neq0\), then $B_0^{(\zeta)}(\psi_c)\neq0$  for all sufficiently small
displacements \(\zeta\).
\end{corollary}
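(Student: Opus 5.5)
The plan has two parts. The correlation identity is a direct consequence of Proposition~\ref{prop:block-syndrome-covariance}. The nonvanishing statement for small $\zeta$ follows from a continuity argument anchored at $\zeta=0$, where Theorem~\ref{thm:finite-block-recovery} already gives the answer.

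\emph{Correlation identity.} Substitute \eqref{eq:block-syndrome-covariance}, namely $B_\lambda^{(\zeta)}(f)=s_\lambda(\zeta)B_0^{(\zeta)}(f)$, into the numerator. The inner product on $\ell^2(K_D)$ is linear in the first variable, so
\[
\ip{B_\lambda^{(\zeta)}(f)}{B_0^{(\zeta)}(f)}=s_\lambda(\zeta)\,\norm{B_0^{(\zeta)}(f)}^2_{\ell^2(K_D)} .
\]
Dividing by the nonzero norm gives \eqref{eq:block-correlation-formula}, using \eqref{eq:syndrome-character}. This step needs nothing beyond the convention on the inner product.

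\emph{Nonvanishing for small displacements.} At $\zeta=0$, Corollary~\ref{cor:periodicity-finite-blocks} and the observation following the definition of $B_\lambda^{(\zeta)}$ give $B_0^{(0)}(\psi_c)=B_0(c)$. By Theorem~\ref{thm:finite-block-recovery}, since $v_\alpha\ne0$, $B_0$ is an isometry, so $\norm{B_0^{(0)}(\psi_c)}=\norm{c}>0$. Here $c\ne0$ because $\psi_c\ne0$ and $c\mapsto\psi_c$ is a linear isomorphism (Proposition~\ref{prop:Zak-characterization}).

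It therefore suffices to show that $\zeta\mapsto B_0^{(\zeta)}(\psi_c)$ is continuous at $\zeta=0$. Since $K_D$ is finite, continuity of each entry is enough. Each entry equals, up to the constant $\sqrt d/\norm{v_\alpha}$,
\[
a_{\nu_q}\bigl(\Phi_{\gamma,\alpha}(\rho(\zeta)\psi_c)\bigr)=\frac1d\ip{\rho(\zeta)\psi_c}{\rho(\nu_q)h_\alpha}=\frac1d\ip{\psi_c}{\rho(-\zeta)\rho(\nu_q)h_\alpha}.
\]
Here we use \eqref{eq:coeff} and the duality $\rho(\zeta)^*=\rho(-\zeta)$, which extends to the $M^\infty$--$M^1$ pairing. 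The probe $h_\alpha$ lies in $M^1(\R^n)$.

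\emph{Main obstacle.} The only genuinely analytic point is that $\zeta\mapsto\rho(-\zeta)\varphi$ is norm continuous into $M^1(\R^n)$ for fixed $\varphi\in M^1(\R^n)$. This is the strong continuity of time--frequency shifts on the Feichtinger algebra, a standard fact; one can verify it by density of $\mathcal S(\R^n)$ in $M^1$, combined with the uniform boundedness of $\rho(z)$ on $M^1$ for $z$ in compact sets. Given this, boundedness of the pairing with the fixed $\psi_c\in M^\infty$ makes each entry continuous in $\zeta$, with value $\frac1d\ip{c}{\mathsf W_qv_\alpha}$ at $\zeta=0$.

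Hence $\norm{B_0^{(\zeta)}(\psi_c)}\to\norm{c}>0$ as $\zeta\to0$, and $B_0^{(\zeta)}(\psi_c)\ne0$ on a neighbourhood of the origin. If an explicit radius is wanted, one could bound $\norm{\rho(-\zeta)\varphi-\varphi}_{M^1}$ in terms of $|\zeta|$ for Schwartz-class $\varphi$, but the statement only asks for a neighbourhood.
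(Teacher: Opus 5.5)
Your proposal is correct and follows the paper's proof essentially step for step. The correlation identity comes from substituting $B_\lambda^{(\zeta)}(f)=s_\lambda(\zeta)B_0^{(\zeta)}(f)$ and using linearity of the inner product in the first variable. The nonvanishing claim comes from anchoring at $\zeta=0$ with the isometry of Theorem~\ref{thm:finite-block-recovery} and then using continuity in $\zeta$ of each of the finitely many block entries. Your only addition is to make explicit two points the paper leaves implicit: the norm continuity of $\zeta\mapsto\rho(-\zeta)\varphi$ in $M^1(\mathbb R^n)$, and the fact that $c\neq0$ because $\psi_c\neq0$.
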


\begin{proof}
By Proposition~\ref{prop:block-syndrome-covariance},
\[
B_\lambda^{(\zeta)}(f)
=
s_\lambda(\zeta)B_0^{(\zeta)}(f).
\]
Since our inner product is linear in the first variable,
\begin{align*}
\ip{B_\lambda^{(\zeta)}(f)}
{B_0^{(\zeta)}(f)}
&=
s_\lambda(\zeta)
\ip{B_0^{(\zeta)}(f)}
{B_0^{(\zeta)}(f)}
\\
&=
s_\lambda(\zeta)
\norm{B_0^{(\zeta)}(f)}_{\ell^2(K_D)}^2.
\end{align*}
Dividing by the nonzero denominator and using
\eqref{eq:syndrome-character} gives
\eqref{eq:block-correlation-formula}.

Now let \(f=\psi_c\neq 0\). At zero displacement,
\(B_0^{(0)}(\psi_c)\) is precisely the finite block appearing in
Theorem~\ref{thm:finite-block-recovery}, so \(B_0^{(0)}(\psi_c)\neq0\). 
Finally, for each fixed \(q\in K_D\),
\[
\zeta
\longmapsto
a_{\nu_q}
\bigl(
\Phi_{\gamma,\alpha}(\rho(\zeta)\psi_c)
\bigr)
\]
is continuous. Indeed, by \eqref{eq:scalar-probe-coeff} it is a
distributional pairing of \(\psi_c\) against a continuously translated
\(M^1\)-window. Since \(K_D\) is finite,
\[
\zeta\longmapsto B_0^{(\zeta)}(\psi_c)
\]
is continuous as an \(\ell^2(K_D)\)-valued map. Therefore its norm
remains strictly positive in a neighbourhood of \(\zeta=0\).
\end{proof}

The advantage of \eqref{eq:block-correlation-formula} over the scalar
ratio is that no individual coefficient needs to be nonzero. 

\subsection{Stability under coefficient noise}

The redundancy of the finite Weyl block also gives a natural
estimator with a deterministic coefficient-perturbation bound of the syndrome character.

Suppose that the exact blocks satisfy 
$B_\lambda=sB_0$,
 with $|s|=1$, and that the available complex coefficients blocks are
\begin{equation*}
\widetilde B_0=B_0+e_0,
\qquad
\widetilde B_\lambda=sB_0+e_\lambda, \qquad \text{for }e_0,e_\lambda\in \ell^2(K_D).
\end{equation*}
Define the block-correlation estimator
\begin{equation}
\label{eq:noisy-syndrome-estimator}
\widehat s
:=
\frac{
\ip{\widetilde B_\lambda}{\widetilde B_0}
}{
\norm{\widetilde B_0}^2
}.
\end{equation}

\begin{proposition}[Coefficient-noise stability]
\label{prop:block-correlation-noise}
Assume
\begin{equation}
\label{eq:relative-block-noise}
\norm{e_0}
\leq
\eta\norm{B_0},
\qquad
\norm{e_\lambda}
\leq
\eta\norm{B_0},
\qquad
0\leq\eta<1.
\end{equation}
Then \(\widetilde B_0\neq0\), and the estimator
\eqref{eq:noisy-syndrome-estimator} satisfies
\begin{equation}
\label{eq:block-correlation-noise-bound}
\abs{\widehat s-s}
\leq
\frac{2\eta}{1-\eta}.
\end{equation}
\end{proposition}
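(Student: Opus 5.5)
The argument is an elementary perturbation computation in the Hilbert space \(\ell^2(K_D)\). Throughout we may assume \(B_0\neq0\). Indeed, if \(B_0=0\) then \eqref{eq:relative-block-noise} forces \(e_0=0\), so \(\widetilde B_0=0\), and the claim \(\widetilde B_0\neq0\) cannot hold. The statement therefore implicitly requires \(B_0\neq0\), which is the situation of Corollary~\ref{cor:block-correlation-syndrome}. Granting this, the reverse triangle inequality gives
\[
\norm{\widetilde B_0}\geq\norm{B_0}-\norm{e_0}\geq(1-\eta)\norm{B_0}>0 .
\]
In particular \(\widetilde B_0\neq0\) and \(\widehat s\) is well defined.

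For the error bound, the trick is to express \(\widetilde B_\lambda\) through \(\widetilde B_0\) rather than through \(B_0\). Since \(B_0=\widetilde B_0-e_0\), we have
\[
\widetilde B_\lambda=s\widetilde B_0+(e_\lambda-s e_0).
\]
Because the inner product is linear in the first variable, this gives
\[
\widehat s
=s+\frac{\ip{e_\lambda-se_0}{\widetilde B_0}}{\norm{\widetilde B_0}^2}.
\]
Cauchy--Schwarz and \(|s|=1\) then yield
\[
\abs{\widehat s-s}\leq\frac{\norm{e_\lambda}+\norm{e_0}}{\norm{\widetilde B_0}}\leq\frac{2\eta\norm{B_0}}{(1-\eta)\norm{B_0}}=\frac{2\eta}{1-\eta},
\]
which is \eqref{eq:block-correlation-noise-bound}.

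The only point needing care is the choice of reference block. Expanding the numerator around \(B_0\) instead would leave a factor \(\norm{B_0}^2/\norm{\widetilde B_0}^2\), which only gives the weaker bound of order \((1+\eta)^2/(1-\eta)^2\). Centering everything at \(\widetilde B_0\) is what produces exactly \(2\eta/(1-\eta)\). Beyond that, the main remaining step is the lower bound on \(\norm{\widetilde B_0}\), and this is where the assumption \(\eta<1\) enters.
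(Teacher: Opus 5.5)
Your proof is correct and follows essentially the same route as the paper: the reverse-triangle bound $\|\widetilde B_0\|\geq(1-\eta)\|B_0\|$, then the identity $\langle\widetilde B_\lambda,\widetilde B_0\rangle-s\|\widetilde B_0\|^2=\langle e_\lambda-se_0,\widetilde B_0\rangle$ followed by Cauchy--Schwarz. Your remark that $B_0\neq0$ must be assumed is correct, since the paper uses it tacitly when it writes $(1-\eta)\|B_0\|>0$; your closing aside about expanding around $B_0$ is only a loose heuristic and plays no role in the argument.
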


\begin{proof}
By the reverse triangle inequality and
\eqref{eq:relative-block-noise},
\[
\norm{\widetilde B_0}
\geq
\norm{B_0}-\norm{e_0}
\geq
(1-\eta)\norm{B_0}>0.
\]
Hence \(\widehat s\) is well defined.

Using \(\widetilde B_\lambda=sB_0+e_\lambda\) and
\(\widetilde B_0=B_0+e_0\), we obtain
\begin{align*}
\ip{\widetilde B_\lambda}{\widetilde B_0}
-s\norm{\widetilde B_0}^2
&=
s\ip{B_0}{\widetilde B_0}
+\ip{e_\lambda}{\widetilde B_0}
-s\ip{\widetilde B_0}{\widetilde B_0}
\\
&=
\ip{e_\lambda}{\widetilde B_0}
-s\ip{e_0}{\widetilde B_0}.
\end{align*}
Therefore, by Cauchy--Schwarz,
\begin{align*}
\abs{\widehat s-s}
&\leq
\frac{
\bigl(
\norm{e_\lambda}+\norm{e_0}
\bigr)
\norm{\widetilde B_0}
}{
\norm{\widetilde B_0}^2
}
\\
&=
\frac{
\norm{e_\lambda}+\norm{e_0}
}{
\norm{\widetilde B_0}
}
\\
&\leq
\frac{
2\eta\norm{B_0}
}{
(1-\eta)\norm{B_0}
}
=
\frac{2\eta}{1-\eta}.
\end{align*}
This proves \eqref{eq:block-correlation-noise-bound}.
\end{proof}

\subsection{Redundant syndrome estimation}
\label{subsec:redundant-syndrome-estimation}

We now combine the character values obtained from different stabilizer
directions in order to recover the complete displacement syndrome.
The estimation of continuous displacements from modular or stabilizer
phases has appeared previously in GKP quantum sensing and error
correction. Grid states were used as probes for estimating both
components of a phase-space displacement through phase-estimation
measurements in \cite{DuivenvoordenTerhalWeigand}. Analog GKP syndrome
values have also been incorporated into maximum-likelihood and
Bayesian decoders, including schemes based on repeated noisy syndrome
extraction; see
\cite{FukuiTomitaOkamoto,VuillotAsasiWangPryadkoTerhal,
WanNevilleKolthammer}. Our construction is different: the syndrome
phases are extracted from correlations of translated
adjoint-lattice Gabor coefficient blocks, and redundancy is introduced
by using several stabilizer directions in the lattice.

Recall that a displacement \(\zeta\in\mathbb R^{2n}\) gives rise to
the phase
\[
s_\lambda(\zeta)
=
\chi_D(\lambda)e^{2\pi i\sigma(\lambda,\zeta)},
\qquad
\lambda\in\Lambda_D.
\]
Since the stabilizer phase \(\chi_D\) is known, the relevant syndrome
information is the character
\[
\lambda
\longmapsto
\overline{\chi_D(\lambda)}s_\lambda(\zeta)
=
e^{2\pi i\sigma(\lambda,\zeta)},\qquad \lambda\in \Lambda_D.
\]
This character determines precisely the class of \(\zeta\) modulo the
adjoint lattice. Indeed, two displacements \(\zeta,\zeta'\in
\mathbb R^{2n}\) determine the same character if and only if
\[
e^{2\pi i\sigma(\lambda,\zeta-\zeta')}=1
\qquad
\text{for every }\lambda\in\Lambda_D.
\]
Equivalently,
\[
\sigma(\lambda,\zeta-\zeta')\in\mathbb Z
\qquad
\text{for every }\lambda\in\Lambda_D,
\]
which, by the definition of the adjoint lattice, is equivalent to
\[
\zeta-\zeta'\in\Lambda_D^\circ.
\]
Thus the syndrome character determines exactly the point
\[
[\zeta]\in\mathbb R^{2n}/\Lambda_D^\circ.
\]

Let \(\lambda_1,\ldots,\lambda_{2n}\) be a
\(\mathbb Z\)-basis of \(\Lambda_D\). Since a character of
\(\Lambda_D\) is determined by its values on a lattice basis, the
\(2n\) phases
\[
s_{\lambda_j}(\zeta),
\qquad
j=1,\ldots,2n,
\]
are sufficient to recover \([\zeta]\) in the noiseless setting. More
generally, we may choose
\[
\lambda_1,\ldots,\lambda_N\in\Lambda_D,
\qquad
N\geq 2n,
\]
provided that these vectors generate \(\Lambda_D\) as an abelian
group. The additional stabilizer directions do not provide new
information in the noiseless case, but they give redundant phase
measurements that can improve robustness in the presence of
coefficient noise.

For each \(\lambda_j\), let
\[
\widehat s_{\lambda_j}
:=
\frac{
\langle
\widetilde B_{\lambda_j},
\widetilde B_0
\rangle
}{
\|\widetilde B_0\|_{\ell^2(K_D)}^2
}
\]
be the block-correlation estimate obtained from the observed
coefficient blocks, as in \eqref{eq:noisy-syndrome-estimator}.
The predicted phase associated with a candidate syndrome
\([z]\in\mathbb R^{2n}/\Lambda_D^\circ\) is
\[
s_{\lambda_j}(z)
=
\chi_D(\lambda_j)e^{2\pi i\sigma(\lambda_j,z)}.
\]
We therefore define the oversampled syndrome estimator by
\begin{equation}
\label{eq:oversampled-syndrome-estimator}
\widehat\zeta
\in
\operatorname*{argmin}_{[z]\in\mathbb R^{2n}/\Lambda_D^\circ}
\sum_{j=1}^N
\left|
\widehat s_{\lambda_j}
-
\chi_D(\lambda_j)e^{2\pi i\sigma(\lambda_j,z)}
\right|^2.
\end{equation}
The objective is well defined on the quotient because, for
\(\nu\in\Lambda_D^\circ\),
\[
e^{2\pi i\sigma(\lambda_j,z+\nu)}
=
e^{2\pi i\sigma(\lambda_j,z)}
e^{2\pi i\sigma(\lambda_j,\nu)}
=
e^{2\pi i\sigma(\lambda_j,z)}.
\]
Moreover, the syndrome torus
\(\mathbb R^{2n}/\Lambda_D^\circ\) is compact, so a minimizer in
\eqref{eq:oversampled-syndrome-estimator} always exists.

In the absence of coefficient noise,
\[
\widehat s_{\lambda_j}
=
s_{\lambda_j}(\zeta),
\qquad
j=1,\ldots,N.
\]
Hence the objective in
\eqref{eq:oversampled-syndrome-estimator} vanishes at \([\zeta]\).
If the vectors \(\lambda_1,\ldots,\lambda_N\) generate
\(\Lambda_D\), then any other zero \([z]\) satisfies
\[
e^{2\pi i\sigma(\lambda_j,z-\zeta)}=1
\qquad
\text{for every }j.
\]
The same relation then holds for every \(\lambda\in\Lambda_D\), and
therefore $z-\zeta\in\Lambda_D^\circ$. Thus, \([\zeta]\) is the unique minimizer in the noiseless setting.

The redundancy has a concrete interpretation. If, for example,
\[
\lambda_k
=
m_1\lambda_1+\cdots+m_{2n}\lambda_{2n},
\qquad
m_j\in\mathbb Z,
\]
then the ideal character values satisfy
\[
e^{2\pi i\sigma(\lambda_k,\zeta)}
=
\prod_{j=1}^{2n}
e^{2\pi i m_j\sigma(\lambda_j,\zeta)}.
\]
Therefore the measurement associated with \(\lambda_k\) is determined
by those associated with a lattice basis in the noiseless case.
Under coefficient noise, however, the corresponding estimated phases
will not satisfy this relation exactly. Including such additional
directions produces an overdetermined system that provides redundancy and allows weighted or robust fitting.

\subsection{Regularized states}
\label{subsec:regularized-syndrome}

Finite-energy GKP states are not exact eigenvectors of the ideal
stabilizers, and several approaches have been developed to describe
their approximate or modified stabilizer structure and their relation
to ideal grid states; see
\cite{TzitrinBourassaMenicucciSabapathy,
MatsuuraYamasakiKoashi,RoyerSinghGirvin}.
Our purpose here is to quantify the corresponding defect directly in
the adjoint-lattice coefficient model. We show that the failure of
exact block covariance is controlled by the translation defect of the
lattice envelope and that, under weak-* approximation of the ideal
codeword, the block-correlation estimator converges to the ideal
syndrome character.

For \(w\in\ell^2(\Lambda_D)\), define
\begin{equation*}
B_\lambda^{(\zeta),w}(c)
:=
\frac{\sqrt d }{\|v_\alpha\|}\left(
e^{\pi i\sigma(\lambda,\nu_q)}
a_{\lambda+\nu_q}
\bigl(
\Phi_{\gamma,\alpha}(\rho(\zeta)\Psi_c^w)
\bigr)
\right)_{q\in K_D}, \qquad c\in \mathcal H_{\mathbf d}.
\end{equation*}

For a general state \(f\in M^\infty(\mathbb R^n)\), the same calculation
as in Proposition~\ref{prop:block-syndrome-covariance} gives
\begin{equation*}
a_{\mu+\lambda}\bigl(\Phi_{\gamma,\alpha}(f)\bigr)
=
e^{-\pi i\sigma(\lambda,\mu)}
a_\mu\bigl(\Phi_{\gamma,\alpha}(\rho(-\lambda)f)\bigr).
\end{equation*}

Taking \(f=\rho(\zeta)\Psi_c^w\) and using \eqref{eq:weyl-commutation_0} we obtain
\begin{align*}
&
a_{\mu+\lambda}
\bigl(
\Phi_{\gamma,\alpha}(\rho(\zeta)\Psi_c^w)
\bigr)
-
\chi_D(-\lambda)e^{2\pi i\sigma(\lambda,\zeta)}
e^{-\pi i\sigma(\lambda,\mu)}
a_\mu
\bigl(
\Phi_{\gamma,\alpha}(\rho(\zeta)\Psi_c^w)
\bigr)
\nonumber\\
&\qquad
=
e^{-\pi i\sigma(\lambda,\mu)}
e^{2\pi i\sigma(\lambda,\zeta)}
a_\mu
\left(
\Phi_{\gamma,\alpha}
\left(
\rho(\zeta)
\bigl[
\rho(-\lambda)\Psi_c^w
-
\chi_D(-\lambda)\Psi_c^w
\bigr]
\right)
\right).
\end{align*}

Then by (\ref{eq:coeff}) and using 
Cauchy--Schwarz, we have that 
\begin{align}
\label{eq:regularized-block-defect-stabilizer}
\left\|
B_\lambda^{(\zeta),w}(c)
-
s_\lambda(\zeta)B_0^{(\zeta),w}(c)
\right\|_{\ell^2(K_D)}
\leq\frac{\sqrt d }{\|v_\alpha\|}
\|h_\alpha\|_2
\left\|
\rho(-\lambda)\Psi_c^w
-
\chi_D(-\lambda)\Psi_c^w
\right\|_2.
\end{align}
For the canonical phase \(\chi_D\), one has
\(\chi_D(-\lambda)\in\{\pm1\}\), then  
\[
\begin{aligned}
\left\|
\rho(-\lambda)\Psi_c^w
-\chi_D(-\lambda)\Psi_c^w
\right\|_2
&=
\left\|
\chi_D(-\lambda)\rho(-\lambda)\Psi_c^w
-\Psi_c^w
\right\|_2\\
&=
\left\|
S_{-\lambda}\Psi_c^w-\Psi_c^w
\right\|_2.
\end{aligned}
\]
Applying \eqref{eq:stabilizer-defect} with \(\nu=-\lambda\) to
\eqref{eq:regularized-block-defect-stabilizer} therefore gives
\[
\left\|
B_\lambda^{(\zeta),w}(c)
-s_\lambda(\zeta)B_0^{(\zeta),w}(c)
\right\|_{\ell^2(K_D)}
\leq
C
\|T_{-\lambda}w-w\|_{\ell^2}
\|c\|,
\]
for some $C>0$. We first apply this estimate to the the asymptotically isometric normalizable encodings. Let \(w^{(\varepsilon)}\) be asymptotically
translation-invariant in the sense of
Definition~\ref{def:folner-envelope}, and recall
that
\[
E_\varepsilon c
=
d^{-1/4}
\frac{\Psi_c^{w^{(\varepsilon)}}}
{\|w^{(\varepsilon)}\|_{\ell^2}},
\]
from Theorem \ref{thm:asymptotic-isometry}.
By linearity of the coefficient blocks,
\begin{align*}
&
\left\|
B_\lambda^{(\zeta)}(E_\varepsilon c)
-
s_\lambda(\zeta)B_0^{(\zeta)}(E_\varepsilon c)
\right\|_{\ell^2(K_D)}
\\
&\qquad\leq
d^{-1/4}C 
\frac{
\|T_{-\lambda}w^{(\varepsilon)}
-w^{(\varepsilon)}\|_{\ell^2}
}{
\|w^{(\varepsilon)}\|_{\ell^2}
}
\|c\|.
\end{align*}
Consequently,
\begin{equation}
\label{eq:normalized-block-defect-limit}
\left\|
B_\lambda^{(\zeta)}(E_\varepsilon c)
-
s_\lambda(\zeta)B_0^{(\zeta)}(E_\varepsilon c)
\right\|_{\ell^2(K_D)}
\longrightarrow0
\end{equation}
for every fixed \(\lambda\in\Lambda_D\), \(\zeta\in\mathbb R^{2n}\),
and \(c\in\mathcal H_{\mathbf d}\).

The convergence in
\eqref{eq:normalized-block-defect-limit} is an absolute
block-covariance statement. It does not by itself imply convergence
of the block-correlation quotient, because the norm of the normalized
reference block may also tend to zero. To recover the syndrome
character through block correlation, we use instead the weak-*
approximation of the ideal codeword.

Assume, in addition, that the family
\(\{w^{(\varepsilon)}\}_{\varepsilon>0}\) is uniformly bounded in
\(\ell^\infty(\Lambda_D)\) and that
\[
w^{(\varepsilon)}
\overset{w^*}{\longrightarrow}1
\qquad\text{in }\ell^\infty(\Lambda_D).
\]
Then, by \eqref{eq:weak-star-ideal-limit},
\[
\Psi_c^{w^{(\varepsilon)}}
\overset{w^*}{\longrightarrow}
\psi_c
\qquad\text{in }M^\infty(\mathbb R^n).
\]
Since Weyl operators act continuously on \(M^\infty(\mathbb R^n)\), it follows that
\[
\rho(\zeta)\Psi_c^{w^{(\varepsilon)}}
\overset{w^*}{\longrightarrow}
\rho(\zeta)\psi_c.
\]
Moreover, every coefficient functional
\[
f\longmapsto
a_\mu\bigl(\Phi_{\gamma,\alpha}(f)\bigr)
=
\frac1{d}
\langle f,\rho(\mu)h_\alpha\rangle ,\qquad f\in M^\infty(\mathbb R^n)
\]
is weak-* continuous because \(\rho(\mu)h_\alpha\in M^1(\mathbb R^n)\).
For every fixed \(q\in K_D\), weak-* continuity of the coefficient
functional gives
\begin{align*}
e^{\pi i\sigma(\lambda,\nu_q)}
a_{\lambda+\nu_q}
\left(
\Phi_{\gamma,\alpha}
\bigl(\rho(\zeta)\Psi_c^{w^{(\varepsilon)}}\bigr)
\right)
\longrightarrow
e^{\pi i\sigma(\lambda,\nu_q)}
a_{\lambda+\nu_q}
\left(
\Phi_{\gamma,\alpha}
\bigl(\rho(\zeta)\psi_c\bigr)
\right).
\end{align*}
Thus, the regularized blocks converge coordinatewise to the ideal
block. Since \(K_D\) is finite, coordinatewise convergence is
equivalent to convergence in \(\ell^2(K_D)\), and hence
\[
B_\lambda^{(\zeta),w^{(\varepsilon)}}(c)
\longrightarrow
B_\lambda^{(\zeta)}(\psi_c)
\qquad\text{in }\ell^2(K_D),
\]
for every fixed \(\lambda\in\Lambda_D\).

Suppose that $B_0^{(\zeta)}(\psi_c)\neq0$.
This holds, in particular, when \(c\neq0\), \(v_\alpha\neq0\), and
\(\zeta\) is sufficiently small, by
Corollary~\ref{cor:block-correlation-syndrome}. It follows that $B_0^{(\zeta),w^{(\varepsilon)}}(c)\neq0$
for every sufficiently small \(\varepsilon\). Hence
\begin{equation*}
\frac{
\left\langle
B_\lambda^{(\zeta),w^{(\varepsilon)}}(c),
B_0^{(\zeta),w^{(\varepsilon)}}(c)
\right\rangle
}{
\left\|
B_0^{(\zeta),w^{(\varepsilon)}}(c)
\right\|_{\ell^2(K_D)}^2
}
\longrightarrow
s_\lambda(\zeta)
=
\chi_D(\lambda)e^{2\pi i\sigma(\lambda,\zeta)}.
\end{equation*}
Thus, under the weak-* approximation hypothesis, the
block-correlation estimator for the regularized states converges to
the ideal GKP syndrome character.

\section{Conclusion and outlook}

We have developed a time--frequency framework for lattice GKP codes
in which ideal codewords are naturally realized in
\(M^\infty(\mathbb R^n)\), the dual of the Feichtinger algebra
\(M^1(\mathbb R^n)\). This choice is narrower than the full space of
tempered distributions and excludes derivative-type solutions of the
stabilizer equations. The vector-valued Zak transform consequently
identifies the ideal GKP code bijectively with a finite-dimensional
logical fibre supported at the trivial point of the continuous
syndrome torus.

Within this framework, multi-window Gabor analysis provides stable
localized coordinates indexed by the adjoint lattice. Our main
structural result shows that one fundamental block of these
coefficients is exactly a finite Weyl-frame analysis of the logical
vector. After normalization, the block map is an isometry,
\[
B_0^*B_0=I_{\mathcal H_{\mathbf d}},
\]
and
\[
\Pi_0=B_0B_0^*
\]
is the orthogonal projection of arbitrary coefficient data onto the
space of admissible exact blocks.

We have also constructed normalizable GKP approximants through
lattice-envelope Gabor multipliers. The localized seed carries the
logical coefficients, whereas the multiplier symbol controls decay
along the stabilizer orbit. Envelopes approaching the constant symbol
converge weak-$*$ in \(M^\infty\) to the ideal codeword, and
asymptotic translation invariance yields asymptotically isometric
logical encodings. Finite oscillator energy requires additional regularity.

Finally, displacement syndromes appear as characters relating
translated coefficient blocks. Block correlation eliminates the
unknown logical vector and recovers the syndrome character, with a
deterministic stability estimate under additive coefficient
perturbations. These results concern access to complex
time--frequency coefficients and do not yet constitute a physical
syndrome-extraction protocol.

\bigskip

\noindent
\textsc{Department of Mathematical Sciences, Norwegian University of
Science and Technology (NTNU), Trondheim, Norway}

\medskip

\noindent
Email addresses:
\texttt{franz.luef@ntnu.no},
\texttt{eduard.ortega@ntnu.no}
\end{document}